\def\@permissionCodeOne#1#2#3#4#5#6#7#8#9{}
\def\@article@string{}
\def\@formatdoi{}
\definecolor{linkcolor}{rgb}{0.9, 0.17, 0.31} 
\definecolor{citecolor}{rgb}{0.36, 0.54, 0.66}
\setlist[1]{leftmargin=*,itemsep=.4ex plus .1 ex minus .1ex}
\def\inv{^{-1}}
\tikzstyle{marked}=[draw,minimum size=1pt,circle,fill=white]
\tikzstyle{pebble}=[draw,circle,fill=black]
\tikzstyle{ligne}=[line width=0.5pt,-]
\tikzstyle{non}=[inner sep=1pt]
\tikzstyle{lbox}=[rounded corners=5pt,draw=black!60,very thick,align=center]
\tikzstyle{wbox}=[rounded corners=5pt,align=center]
\tikzstyle{gbox}=[rounded corners=5pt,fill=green!20,align=center]
\tikzstyle{bbox}=[rounded corners=5pt,fill=blue!20,align=center]
\tikzstyle{rbox}=[rounded corners=5pt,fill=red!20,align=center]
\tikzstyle{obox}=[rounded corners=5pt,fill=yellow!35,align=center]
\tikzstyle{ledg}=[draw=black,very thick]
\tikzstyle{linc}=[fill=white,draw=black,inner sep=2pt,very thick,circle]
\newcommand\awfa{\ensuremath{\Abb_\alpha}\xspace}
\newcommand\wfwa[1]{\ensuremath{\text{\sc wf}_{\alpha}(#1)}\xspace}
\newcommand\wpart[2]{\ensuremath{[#1]_{#2}}\xspace}
\newcommand\ppart[1]{\wpart{#1}{\Pb}}
\newcommand\nat{\ensuremath{\mathbb{N}}\xspace}
\newcommand\Abb{\ensuremath{\mathbb{A}}\xspace}
\newcommand\frB{\ensuremath{\mathbb{B}}\xspace}
\newcommand\Lb{\ensuremath{\mathbf{L}}\xspace}
\newcommand\Kb{\ensuremath{\mathbf{K}}\xspace}
\newcommand\Pb{\ensuremath{\mathbf{P}}\xspace}
\newcommand\Gs{\ensuremath{\mathcal{G}}\xspace}
\newcommand\Cs{\ensuremath{\mathcal{C}}\xspace}
\newcommand\Ds{\ensuremath{\mathcal{D}}\xspace}
\newcommand\Is{\ensuremath{\mathcal{I}}\xspace}
\newcommand{\mso}{\ensuremath{\textup{MSO}}\xspace}
\newcommand{\sic}[1]{\ensuremath{\Sigma_{#1}}\xspace}
\newcommand{\sio}[1]{\ensuremath{\Sigma_{#1}(<)}\xspace}
\newcommand{\pio}[1]{\ensuremath{\Pi_{#1}(<)}\xspace}
\newcommand{\bso}[1]{\ensuremath{\mathcal{B}\Sigma_{#1}(<)}\xspace}
\newcommand{\sipe}[1]{\ensuremath{\Sigma_{#1}(<,+1)}\xspace}
\newcommand{\pipe}[1]{\ensuremath{\Pi_{#1}(<,+1)}\xspace}
\newcommand{\bspe}[1]{\ensuremath{\mathcal{B}\Sigma_{#1}(<,+1)}\xspace}
\newcommand{\ddp}{\ddp{2}}
\newcommand{\fow}{\ensuremath{\textup{FO}(<)}\xspace}
\newcommand{\fows}{\ensuremath{\textup{FO}(<,+1)}\xspace}
\newcommand{\fod}{\ensuremath{\textup{FO}^2}\xspace}
\newcommand{\fodw}{\ensuremath{\textup{FO}^2(<)}\xspace}
\newcommand{\fodp}{\ensuremath{\textup{FO}^2(<,+1)}\xspace}
\newcommand{\efgame}{Ehrenfeucht-Fra\"iss\'e\xspace}
\newcommand\fodeq[1]{\ensuremath{\cong_{#1}}\xspace}
\newcommand\kfodeq{\fodeq{k}}
\newcommand\fodeqp[1]{\ensuremath{\cong^{+}_{#1}}\xspace}
\newcommand\kfodeqp{\fodeqp{k}}
\newcommand\sieq[1]{\ensuremath{\preccurlyeq_{n,#1}}\xspace}
\newcommand\ksieq{\sieq{k}}
\newcommand\sieqp[1]{\ensuremath{\preccurlyeq^{+}_{n,#1}}\xspace}
\newcommand\ksieqp{\sieqp{k}}
\newcommand\simieq[1]{\ensuremath{\preccurlyeq_{n-1,#1}}\xspace}
\newcommand\simieqp[1]{\ensuremath{\preccurlyeq^{+1}_{n-1,#1}}\xspace}
\newcommand\ksimieqp{\simieqp{k}}
\newcommand{\at}{\textup{AT}\xspace}
\newcommand\su{\ensuremath{\textup{SU}}\xspace}
\newcommand\md{\ensuremath{\textup{MOD}}\xspace}
\newcommand\type[1]{\ensuremath{#1\text{-type}}\xspace}
\newcommand\types[1]{\ensuremath{#1\text{-types}}\xspace}
\newcommand\ktype{\type{k}}
\newcommand\ktypes{\types{k}}
\newcommand\eval{\ensuremath{\text{\sc eval}}\xspace}
\newcommand\croch[1]{\ensuremath{\eta(#1)}\xspace}
\newcommand\ucroch[1]{\ensuremath{\gamma(#1)}\xspace}
\newcommand\wcroch[1]{\ensuremath{\left\lceil #1 \right\rceil}\xspace}
\newcommand\pcroch[1]{\ensuremath{\left\lfloor #1 \right\rfloor}\xspace}
\newcommand\tcroch[1]{\ensuremath{\eta_p(#1)}\xspace}
\newcommand\rk[1]{\ensuremath{\textup{rank}(#1)}}
\newcommand\eqsu[1]{\ensuremath{\sim_{#1}}\xspace}
\newcommand\keqsu{\eqsu{k}}
\newcommand\fword{word\xspace}
\newcommand\flang{language\xspace}
\newcommand\fwords{words\xspace}
\newcommand\flangs{languages\xspace}
\newcommand\iword{\ensuremath{\omega}-word\xspace}
\newcommand\ilang{\ensuremath{\omega}-language\xspace}
\newcommand\iwords{\ensuremath{\omega}-words\xspace}
\newcommand\ilangs{\ensuremath{\omega}-languages\xspace}
\newcommand\isemi{\ensuremath{\omega}-semigroup\xspace}
\newcommand\isemis{\ensuremath{\omega}-semigroups\xspace}
\begin{document}
\fancyfoot[RO,LE]{}
\newtheorem{remark}[theorem]{Remark}
\newtheorem{fct}[theorem]{Fact}

\markboth{T. Place and M. Zeitoun}{Adding successor: A transfer theorem for separation and covering}

\title[Adding successor: A transfer theorem for separation and covering]{Adding successor: A transfer theorem for separation and covering}

\author{Thomas Place}
\affiliation{
	\institution{LaBRI, Bordeaux University}
	\country{France}}
\email{tplace@labri.fr}

\author{Marc Zeitoun}
\affiliation{
	\institution{LaBRI, Bordeaux University}
	\country{France}}
	\email{mz@labri.fr}

\keywords{Regular Languages, First-Order Logic, Membership Problem, Separation Problem, Covering Problem, Decidable Characterization.}

\begin{abstract}
  Given a class \Cs of word languages, the \Cs-separation problem asks for an algorithm that, given as input two regular languages, decides whether there exists a third language in \Cs containing the first language, while being disjoint from the second. Separation is usually investigated as a means to obtain a deep understanding of the class \Cs.

  In the paper, we are mainly interested in classes defined by logical formalisms. Such classes are often built on top of each other: given some logic, one builds a stronger one by adding new predicates to its signature.  A natural construction is to enrich a logic with the successor relation. In this paper, we present a transfer result applying to this construction: we show that for suitable logically defined classes, separation for the logic enriched with the successor relation reduces to separation for the original logic. Our theorem also applies to a problem that is stronger than separation: covering. Moreover, we actually present two reductions: one for languages of finite words and the other for languages of infinite words.
\end{abstract}

\maketitle
\section{Introduction}
\label{sec:intro}
\noindent
{\bf Context.} A central problem in formal languages theory is to characterize and understand the expressive power of high level specification formalisms. Monadic second order logic (MSO) is such a formalism, which is both expressive and robust. For several classes of structures, such as words or trees, it has the same expressive power as finite automata and defines the class of regular languages~\cite{BuchiMSO,BuchiMSOInf, ElgotMSO,TrakhMSO,Thatcher&Wright:1968,rabin1969}. In this paper, we investigate fragments of MSO over finite and infinite words. In this context, understanding the expressive power of a fragment is often associated to a decision problem: \emph{membership}. Given a logical fragment, one may associate the class \Cs of all word languages that can be defined by a sentence of this fragment. When \Cs is such a class, the \emph{\Cs-membership problem} asks for a decision procedure that tests whether some input regular language belongs to \Cs. Intuitively, setting such an algorithm requires a deep understanding of \Cs: it involves considering \emph{all} languages within \Cs.

Membership has been solved for many natural fragments of MSO, the most prominent one being \fow: first-order logic equipped with a predicate ``$<$'' for the linear ordering. For finite words, the solution was found by Schützenberger, McNaughton and Papert~\cite{sfo,mnpfo}. They characterized the regular languages that are definable in \fow by a syntactic, easily decidable property on a canonical recognizer of this language (such as its minimal automaton or its syntactic monoid). This result was later generalized to infinite words by~Perrin~\cite{pfo}. It now serves as a commonly followed template, which was used successfully to solve membership for many other logical formalisms.

Research on this topic is still ongoing and membership remains open for several fragments. A prominent example is the quantifier alternation hierarchy of first-order logic, which classifies it into levels \sio{n} and \bso{n}. Despite years of investigation, only the lower levels have been solved by Simon~\cite{simonthm}, Pin and Weil~\cite{pwdelta,pwdelta2}, the authors~\cite{pzqalt} and the first author~\cite{pseps3,pseps3j}. Furthermore, making progress has often required moving beyond the standard approach to membership questions. The latest results for the levels \hbox{\bso{2}}, \sio{3} and \sio{4} are based on decision problems that are stronger than membership: \emph{separation} and \emph{covering}. Given a class \Cs, the \emph{\Cs-separation problem} asks for a decision procedure that takes \emph{two} input regular languages and tests whether there exists a third one in \Cs containing the first language while being disjoint from the second one. Covering, which we defined in~\cite{pzcovering,pzcovering2} is even more general: it takes two different objects as input: a regular language $L$ and a finite set of regular languages \Lb. It asks whether there exists a \Cs-cover \Kb of $L$ (\emph{i.e.}, a finite set of languages in \Cs whose union includes $L$) such that no language $K \in \Kb$ intersects all languages in \Lb. Separation is just the special case when $\Lb$ is a singleton. Both problems are decidable for \fow as we showed~\cite{pzfo,pzfoj}, both for finite and infinite~words.

Because of these results, separation and covering have quickly replaced membership as the central question when trying to ``understand'' a given class of languages. However, the main motivation for considering separation and covering is more profound: while harder than membership, they are also more rewarding with respect to the knowledge gained on the investigated class \Cs. Intuitively, a membership algorithm only yields benefits for the languages of \Cs: we are able to detect them and to build a description witnessing this membership. On the other hand, separation and covering algorithms are universal: their benefits apply to \textbf{\emph{all}} languages. An insightful point of view is to see them as \emph{approximation problems}. For example, given an input pair $(L_{1},L_{2})$, the objective of separation is to over-approximate $L_{1}$ by a language in \Cs while $L_{2}$ is the specification of what an acceptable approximation is.

\smallskip

In the paper, we investigate separation and covering for several natural fragments of \fow. Specifically, we consider the levels \sio{n} and \bso{n} in the quantifier alternation hierarchy and the two-variable fragment \fodw. However, we shall not work with these fragments themselves. Instead, we are interested in stronger variants which are built from them in a natural way. A crucial observation is that for these fragments, the drop in expressive power forbids the use of natural relations that could be defined from the linear order in full first-order logic. The main example is ``$+1$'': the \emph{successor relation}. While \fow is powerful enough to express it (``$x+1=y$'' is equivalent to ``$x < y \wedge \neg \exists z (x < z < y)$''), this is not the case for \fodw, \sio{n} and \bso{n}. Hence, there are two natural variants for each of these fragments: a weak one which is only equipped with the linear ordering (denoted \fodw, \sio{n} and \bso{n}) and a strong one which is equipped with additional predicates such as successor (denoted \fodp, \sipe{n} and \bspe{n}). Our objective in this paper is to investigate separation and covering problems associated to strong variants.

\smallskip
\noindent
{\bf State of the art.} Naturally, these strong logical fragments were first investigated using the membership problem. However, this proved to be unexpectedly difficult.  In most cases, even
when the weak variant is known to have decidable membership, proving that this is also the case for the strong one can be highly nontrivial. Examples include the membership proofs of \bspe{1} and \sipe{2}, which involve difficult and intricate combinatorial arguments~\cite{knast83,glasserdd,DBLP:journals/ijfcs/KufleitnerL12} or a wealth of algebraic machinery~\cite{pwdelta2,PinWeilVD}. Another issue is that most proofs directly deal with the strong variant. Given the jungle of such logical fragments, it is desirable to avoid such an approach, treating each variant of the same fragment independently. Instead, a satisfying approach would be to first obtain a solution of the decision problems for the weak variant before lifting it to the strong one via a \emph{generic transfer result}.

This idea has first been investigated by Straubing~\cite{StrauVD} for the membership problem in the setting of finite words. He chose to formulate his approach using algebraic terminology. It is known that any class of languages satisfying appropriate properties is characterized by some algebraic variety~$\mathsf{V}$: a language is in the class if and only if its syntactic monoid belongs to $\mathsf{V}$. This result is the variety theorem of Eilenberg~\cite{EilenbergB}. Straubing's approach was to capture the intuitive connection between weak and strong fragments using a generic operation on algebraic varieties called \emph{wreath product}. Though this is nontrivial, it has been shown that for most logical fragments (including the ones we consider in the paper), if $\mathsf{V}$ is the variety corresponding to the weak variant, then the strong one corresponds to the variety $\mathsf{V} \circ \mathsf{D}$: the wreath product of $\mathsf{V}$ with $\mathsf{D}$ (where the $\mathsf{D}$ is a fixed variety). Thus, Straubing's approach was to show that the operation $\mathsf{V} \mapsto \mathsf{V} \circ \mathsf{D}$ preserves the decidability of membership.

Unfortunately, this is not true in general~\cite{DBLP:journals/ijac/Auinger10}. In fact, while decidability is preserved for all natural logical fragments, there is no generic result that captures them all. In particular, for the less expressive fragments, one has to use completely \emph{ad hoc} proofs. It turns out that in the separation setting, this approach is more robust: it has been shown by Steinberg~\cite{steindelay} that decidability of separation is preserved by the operation $\mathsf{V} \mapsto \mathsf{V} \circ \mathsf{D}$. However, this result has several downsides:

\begin{itemize}
\item Steinberg's theorem is not about separation: it states a purely algebraic property of varieties of the form $\mathsf{V} \circ \mathsf{D}$ (they have ``decidable pointlikes''). The connection with separation is indirect and made with another result by Almeida~\cite{almeidasep}. Therefore, while interesting when already starting from algebra, this approach is less satisfying from a logical point of view: it hides the logical intuitions, while our primary goal is to understand the expressiveness of logics.

\item Going from logic to algebra requires to be acquainted with new notions and vocabulary, as well as involved theoretical tools. One has to manipulate three objects of different nature simultaneously: logic, classes of languages and algebraic varieties. Proofs are also often nontrivial and require a deep understanding of complex objects, which may be scattered in the bibliography.

\item Steinberg's result only applies to classes of languages closed under complement (which excludes the fragments \sic{n} in the quantifier alternation hierarchy). This limitation is tied to the connection with algebraic varieties which only holds for classes closed under complement. While this connection may be lifted to a more general setting~\cite{pinordered,PinWeilVD}, this requires introducing even more algebraic vocabulary.

\item These results are specific to finite words while we intend to investigate both finite and infinite words.
\end{itemize}

\smallskip
\noindent
{\bf Contributions.} We present a new transfer theorem applying to all fragments presented above. For each of them, we show that separation and covering for the strong variant reduce to the same problem for the weak one. Our approach is generic and similar to the original one of Straubing described formerly.  However, rather than choosing algebra to formulate it, we use a pure language theoretic point of view. Specifically, we define a product between classes of languages, called \emph{enrichment}. Given two classes \Cs and \Ds, it builds a new one denoted by $\Cs \circ \Ds$: the \Ds-enrichment of \Cs. As the notation suggests, this operation is designed as the language theoretic counterpart of the wreath product. We then show the two following properties:
\begin{enumerate}
\item For all fragments that we consider, if \Cs is the class corresponding to the weak variant, then the strong one corresponds $\Cs \circ \su$ (\su is a fixed class: the \emph{suffix languages}).
\item Given any class \Cs satisfying standard closure properties, covering and separation for $\Cs \circ \su$ reduce to the same problem for \Cs.
\end{enumerate}

Using such a language theoretic approach has several important benefits over the algebraic one. Let us summarize them.
\begin{itemize}
\item The definition of enrichment is simple, and requires much less machinery than the wreath product. We avoid a lot of algebraic vocabulary, which we do not need. We only work with two objects: logic and classes of languages. The only needed piece of algebra is the elementary definition of regular languages in terms of finite monoids.

\item Our proof is self-contained and much simpler than previous ones. It only relies on basic notions on regular languages. A consequence is that our techniques yield much more intuition on the logical point of view.

\item Enrichment makes sense for \emph{any} class of language, even if it is not closed under complement. Furthermore, closure under complement is not required for applying our reduction theorem. Thus, contrary to~\cite{steindelay} our results capture the \sic{n} levels in the quantifier alternation hierarchy of first-order logic.

\item Our definitions and proofs adapt smoothly to the setting of infinite words. We have two definitions of enrichment and two reduction theorems: the first are for classes of languages of finite words and the second for classes of languages of infinite words.

\item In both settings of finite and infinite words, our results apply to two different problems: separation and covering.
\end{itemize}

It is already known that covering and separation are decidable for the weak variants of many logical fragments. Thus, when combining these algorithms with our results, we shall obtain new separation and covering procedures for several strong variants. Over words, it is known that both problems are decidable for \fodw~\cite{pvzmfcs13,pzcovering,pzcovering2}, \sio{1}~\cite{cmmptsep,pzcovering,pzcovering2}, \bso{1}~\cite{pvzmfcs13,cmmptsep,pzbpol}, \sio{2}~\cite{pzqalt,pzqaltj,pzbpol}, \bso{2}~\cite{pzbpol} and \sio{3}~\cite{pseps3,pseps3j}. Thus, we obtain the decidability of separation and covering for \fodp, \sipe{1}, \bspe{1}, \sipe{2}, \bspe{2} and \sipe{3} over words. Over infinite words the situation is more complicated: while the state of the art is roughly the same as for finite words, many of these results are yet unpublished. It was shown in~\cite{ppzinf16} that separation is decidable for \sio{2} and \sio{3}. Thus, we get that separation is decidable for \sipe{2} and \sipe{3} over infinite~words.

\medskip
\noindent
{\bf Organization of the Paper.} In Section~\ref{sec:prelims}, we set up the notation and present the separation and covering problems. In Section~\ref{sec:logic}, we define the logical fragments that we investigate in the paper. Section~\ref{sec:wfwords} is devoted to our main theorem for languages of finite words:  we define the enrichment operation on classes of languages of finite words, and we show that covering and separation for $\Cs \circ \su$ reduce to the corresponding problem for \Cs. The next two sections are devoted to applying this result to our logical fragments: we do so for two-variable first-order logic in Section~\ref{sec:fo2} and the quantifier alternation hierarchy of first-order logic in Section~\ref{sec:hiera}. Finally, in Section~\ref{sec:theiw}, we generalize our results to the setting of infinite words: we adapt \su-enrichment for classes of languages of infinite words and we lift our reduction theorem to this setting.

\medskip
This paper is the full version of~\cite{pzsucc}. From the conference version, the point of view has been changed from a purely logical one to a language theoretic one with the \su-enrichment operation. In particular, this means that while the underlying ideas are the same, the theorem presented in this full version is more general and applies to all classes built using \su-enrichment. Additionally, the reduction for infinite words is new.

\section{Preliminaries}
\label{sec:prelims}
In this section, we introduce the objects that we investigate in the paper. We first recall basic definitions about (finite and infinite) words and regular \flangs. Then, we present the two decision problems that we consider: covering and separation.

\subsection{Words and classes of \flangs} An alphabet is a \emph{finite} set $A$ of symbols, which are called \emph{letters}. We shall consider both finite and infinite words. Given some alphabet $A$, we denote by $A^+$ the set of all nonempty finite words and by $A^{*}$ the set of all finite words over $A$ (\emph{i.e.}, $A^* = A^+ \cup \{\varepsilon\}$). Moreover, we write $A^\omega$ for the set of all infinite words over $A$. Note that we shall always use the term ``\fword'' to mean a finite word (\emph{i.e.}, an element of $A^*$). On the other hand, we shall speak of an ``\iword'' when considering an infinite word, \emph{i.e.}, an element of $A^\omega$. Finally, we let $A^\infty = A^* \cup A^\omega$.

If $u \in A^*$ and $v \in A^\infty$ we write $u \cdot v \in A^\infty$ or $uv \in A^\infty$ for the concatenation of $u$ and~$v$. Note that if $v \in A^*$, then $uv \in A^*$ and if $v \in A^\omega$, then $uv\in A^\omega$. We shall also consider infinite products. Let $(u_n)_{n \in \nat}$ by a infinite family of \fwords (\emph{i.e.}, $u_n \in A^*$ for all $n \in \nat$), then we may construct a new word or \iword $u_0u_1u_2u_3 \cdots \in A^\infty$ by concatenating them all. Observe that $u_0u_1u_2u_3 \cdots \in A^\omega$ when there are infinitely many indices $n \in \nat$ such that $u_n \neq \varepsilon$. Otherwise, $u_0u_1u_2u_3 \cdots \in A^*$. Finally, when $u \in A^*$ is a single \fword, we denote by $u^\omega \in A^\infty$ the infinite concatenation $uuuu\cdots$.

The \emph{length} of a \fword $u \in A^*$, denoted by $|u|$, is its number of letters. When $u \in A^\omega$ is an \iword, we let $|u| = \infty$. Since we consider logic, we shall often view words and \iwords as linearly ordered sets of labeled \emph{positions}: the domain of a \fword $u \in A^*$ is $\{0,\dots,|u|-1\}$, while the domain of an \iword is simply \nat. In particular, we shall use the following notation. Let $u$ be a \fword or an \iword and let $i,j$ be two integers. We let $u[i,j] \in A^*$ be the following \fword:
\begin{enumerate}
\item If $i \leq j \leq |u|-1$, then $u[i,j]$ is the infix of $u$ obtained by keeping all positions from $i$ to $j$ in $u$. For example, if $u = a_0 \cdots a_{|u|-1}$ is finite, we have $u[i,j] = a_i \cdots a_j$.
\item Otherwise, $u[i,j] = \varepsilon$.
\end{enumerate}

\medskip
\noindent
{\bf Languages and classes.} A  \emph{\flang} over an alphabet $A$ is a subset of $A^*$. Similarly, an \emph{\ilang} is a subset of $A^\omega$. In the paper, we investigate classes of \flangs and classes of \ilangs. A \emph{class of \flangs} \Cs is a map $A \mapsto \Cs(A)$ associating a set $\Cs(A)$ of \flangs over $A$ to each alphabet $A$. Similarly, a \emph{class of \ilangs} is a map $A \mapsto \Cs(A)$ which associates a set $\Cs(A)$ of \ilangs over $A$ to each alphabet $A$.
\begin{remark}
  For the sake of simplifying the presentation, it is usual to abuse notation by making the alphabet implicit: when $A$ is clear from the context, one simply writes $L \in \Cs$ for $L \in \Cs(A)$. Note however that we shall often manipulate distinct alphabets simultaneously.
\end{remark}

In the paper, we work with regular languages. The regular \flangs are those that can be equivalently defined by \emph{nondeterministic finite automata}~(NFA), \emph{finite monoids} or \emph{monadic second-order logic} (\mso) interpreted on \fwords. Similarly, regular \ilangs are those that can be equivalently defined by \emph{nondeterministic Büchi automata} (NBA), \emph{finite \isemis} or \emph{\mso} interpreted on \iwords.  In the paper we work with the algebraic definition of regular \flangs and \ilangs in terms of monoids and \isemis.  We recall these notions in Sections~\ref{sec:wfwords} and~\ref{sec:theiw} respectively.

\subsection{Closure properties}

In the paper, we only consider classes satisfying robust closure properties that we present now. We define them for classes of \flangs (the corresponding definitions for \ilangs are analogous).

\medskip
\noindent
{\bf Boolean operations.} We only consider lattices. A \emph{lattice of \flangs} is a class of \flangs \Cs such that for any alphabet $A$, the two following properties are satisfied:
\begin{itemize}
\item {\it Closure under union.} For any $L_1,L_2 \in \Cs(A)$, we have $L_1 \cup L_2 \in \Cs(A)$. Moreover, $\Cs(A)$ contains the empty union: $\emptyset \in \Cs(A)$.
\item {\it Closure under intersection.} For any $L_1,L_2 \in \Cs(A)$, we have $L_1 \cap L_2 \in \Cs(A)$. Moreover, $\Cs(A)$ contains the empty intersection: $A^* \in \Cs(A)$.
\end{itemize}
A \emph{Boolean algebra of \flangs} is a lattice closed under complement: for any alphabet~$A$, if $L \in \Cs(A)$ then $A^* \setminus L \in \Cs(A)$.

\begin{remark}
  Note that since \ilangs are subsets of $A^\omega$, the empty intersection and complement are interpreted over $A^\omega$ for classes of \ilangs. For example, the empty intersection is $A^\omega$, for any alphabet $A$.
\end{remark}

\medskip
\noindent
{\bf Quotient.} We shall also consider closure under \emph{right quotient} (we do not need left quotient). Consider an alphabet $A$. Given $L \subseteq A^*$ and any $u \in A^*$, we define the right quotient $Lu\inv \subseteq A^*$ of $L$ by $u$ as the \flang,
\[
  L u\inv \stackrel{\text{def}}{=}\{w\in A^* \mid wu \in L\}.
\]
We say that a class of \flangs \Cs is \emph{closed under right quotient} when for any alphabet, any $L \in \Cs(A)$ and any $u \in A^*$, $u\inv L \in \Cs(A)$. We shall not consider closure under quotient for classes of \ilangs.

\medskip
\noindent
{\bf Inverse image.} Finally, we also consider closure under inverse image. For the definition, we need to introduce monoid morphisms. A \emph{semigroup} is a set $S$ equipped with an associative multiplication, written $s \cdot t$ or $st$. A \emph{monoid} is a semigroup $M$ having a neutral element $1_M$, \emph{i.e.}, such that $s \cdot 1_M = 1_M \cdot s = s$ for all $s \in M$. Moreover, a monoid morphism is a mapping $\alpha: M \to N$ from a monoid to another, which respects the algebraic structure: for all $s,s'\in M$, we have $\alpha(s\cdot s')=\alpha(s)\cdot\alpha(s')$ and $\alpha(1_M) = 1_N$. Observe that for any alphabet $A$, the sets $A^+$ and $A^*$ are respectively a semigroup and a monoid when equipped with concatenation (the neutral element of $A^*$ is $\varepsilon$). Therefore, given any two alphabets $A,B$, we may define morphisms $\alpha: A^* \to B^*$.

Given a class of \flangs \Cs, we say that \Cs is \emph{closed under inverse image} when for any two alphabets $A,B$, any morphism $\alpha: A^* \to B^*$ and any \flang $L \in \Cs(B)$, we have $\alpha\inv(L) \in \Cs(A)$. We shall also consider a weaker variant of closure under inverse image: \emph{alphabetic} inverse image. We say that a morphism $\alpha: A^* \to B^*$ is \emph{alphabetic} when $\alpha(a) \in B$ for any letter $a \in A$ (the image of a letter is a letter). A class of \flangs \Cs is \emph{closed under alphabetic inverse image} when for any two alphabets $A,B$, any \emph{alphabetic} morphism $\alpha: A^* \to B^*$ and any \flang $L \in \Cs(B)$, we have $\alpha\inv(L) \in \Cs(A)$.

\medskip

We finish by lifting the definition of inverse image to classes of \ilangs. Observe we may lift any morphism $\alpha: A^* \to B^*$ as a map $\alpha: A^\infty \to B^\infty$. Indeed, if $w \in A^*$, then $\alpha(w)$ is already defined and if $w = a_0a_1a_2 \cdots \in A^\omega$, then we may define,
\[
  \alpha(w) =\alpha(a_0)\alpha(a_1)\alpha(a_2) \cdots \in B^\infty.
\]

\begin{remark}
  Note that when $w \in A^\omega$, $\alpha(w)$ may belong to either $B^\omega$ or $B^*$. This depends on whether there are infinitely many indices $n\in \nat$ such that $\alpha(a_n) \neq \varepsilon$. On the other hand, given an \ilang $L \subseteq B^\omega$, its inverse image $\alpha\inv(L)$ is necessarily an \ilang as well, \emph{i.e.}, a subset of $A^\omega$.
\end{remark}

Given a class of \ilangs \Cs, we say that \Cs is closed under inverse image when for any two alphabets $A,B$, any map $\alpha: A^\infty \to B^\infty$ generated by a morphism and any \ilang $L \in \Cs(B)$, we have $\alpha\inv(L) \in \Cs(A)$.

\subsection{Decision problems} We turn to the two decision problems that we shall consider:  separation and covering. Both of them are parametrized by an arbitrary class of \flangs or \ilangs \Cs and their purpose is to serve as mathematical tools for analyzing \Cs. We only present the definition for classes of \flangs (adapting it to \ilangs  is immediate).

\medskip
\noindent
{\bf Separation.} Given three \flangs $K,L_1,L_2$, we say that $K$ \emph{separates} $L_1$ from $L_2$ if $L_1 \subseteq K$ and $K \cap L_2 = \emptyset$. Furthermore, if \Cs is some class of \flangs and $L_1,L_2$ are two \flangs, we say that $L_1$ is \emph{\Cs-separable} from $L_2$ when there exists $K \in \Cs$ that separates $L_1$ from $L_2$.

\begin{remark}
  Observe that when \Cs is closed under complement, $L_1$ is \Cs-separable from $L_2$ if and only if $L_2$ is \Cs-separable from $L_1$. However, this is not true for classes that are not closed under complement.
\end{remark}

Given a class of \flangs \Cs, we may now define the \Cs-separation problem as follows:

\smallskip
\begin{tabular}{rl}
  {\bf INPUT:}  &  Two regular \flangs $L_1$ and $L_2$. \\[.5ex]
  {\bf OUTPUT:} &  Is $L_1$ $\Cs$-separable from $L_2$?
\end{tabular}
\smallskip

When investigating separation for a particular class \Cs, one usually considers two complementary objectives: finding an algorithm that decides it and finding a generic for constructing a separator in \Cs when there exists one.

\begin{remark}
  Separation generalizes another well-known decision problem: \emph{membership}. Given a class \Cs, this problem asks whether an input regular \flang $L$ belongs to \Cs. This is equivalent to asking whether it is \Cs-separable from its complement (which is also regular). Indeed, in that case, there is only one candidate for being a separator: $L$ itself. In other words, \Cs-membership reduces to \Cs-separation.
\end{remark}

\medskip
\noindent
{\bf Covering.} We now present the covering problem which, we originally introduced in~\cite{pzcovering,pzcovering2} as a natural generalization of separation.

\begin{remark}
  One of the primary motivations for introducing covering is that even if one is only interested in separation, considering covering is required for many classes.
\end{remark}

Given a \flang $L$, a \emph{cover of $L$} is a \emph{\bf finite} set of \flangs \Kb such that $L \subseteq \bigcup_{K \in \Kb} K$. Moreover, given a class \Cs, a \Cs-cover of $L$ is a cover \Kb of $L$ such that all $K \in \Kb$ belong to \Cs. Additionally, given a finite multiset\footnote{We speak of multiset here for the sake of allowing several copies of the same \flang in $\Lb$. This is natural. Indeed, \Lb is an input of our problem: what we have in hand is a set of \emph{recognizers} for the \flangs in \Lb, and distinct recognizers may well define the same \flang.} of \flangs \Lb, we say that a finite set   of \flangs \Kb is \emph{separating} for \Lb if for any $K\in\Kb$, there exists $L \in \Lb$ such that $K \cap L = \emptyset$ (\emph{i.e.}, no element of $\Kb$ intersects all \flangs in \Lb).

Consider a class \Cs. Given a \flang $L_1$ and a finite multiset of \flangs $\Lb_2$, we say that the pair $(L_1,\Lb_2)$ is \Cs-coverable when there exists a \Cs-cover of $L_1$ which is separating for $\Lb_2$. The \emph{\Cs-covering problem} is as follows:

\smallskip
\begin{tabular}{rl}
  {\bf INPUT:}  &  A regular \flang $L_1$ and a finite multiset of regular \flangs $\Lb_2$. \\[.5ex]
  {\bf OUTPUT:} &  Is $(L_1,\Lb_2)$ is $\Cs$-coverable?
\end{tabular}
\smallskip

As for separation, one has usually two goals when investigating \Cs-covering: getting an algorithm that decides it and finding a generic method for building separating \Cs-covers when they exist.
We complete this definition by explaining why covering generalizes separation: the latter is special case of the former when the multiset $\Lb_2$ is a singleton (provided that the class \Cs is a lattice). We state this in the following fact whose proof is easy and given in~\cite{pzcovering2}.

\begin{fct} \label{fct:septocove}
  Let \Cs be a lattice and $L_1,L_2$ two \flangs. Then $L_1$ is \Cs-separable from $L_2$, if and only if $(L_1,\{L_2\})$ is \Cs-coverable.
\end{fct}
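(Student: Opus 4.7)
The proof is a direct unpacking of the definitions, so my plan is to handle the two implications separately, with the main ingredient being that a lattice $\Cs$ is closed under finite unions (and contains $\emptyset$ as the empty union).

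For the forward direction, suppose $L_1$ is $\Cs$-separable from $L_2$, witnessed by some $K \in \Cs$ with $L_1 \subseteq K$ and $K \cap L_2 = \emptyset$. I would simply take $\Kb = \{K\}$: this is a finite subset of $\Cs$, and $L_1 \subseteq K = \bigcup_{K' \in \Kb} K'$ shows that $\Kb$ is a $\Cs$-cover of $L_1$. For the separating requirement, the unique element of $\Kb$, namely $K$, satisfies $K \cap L_2 = \emptyset$, so $\Kb$ is separating for the singleton multiset $\{L_2\}$. Hence $(L_1, \{L_2\})$ is $\Cs$-coverable.

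For the backward direction, suppose $(L_1, \{L_2\})$ is $\Cs$-coverable, and let $\Kb$ be a separating $\Cs$-cover. Since the only element of the multiset $\{L_2\}$ is $L_2$ itself, the separating condition forces $K \cap L_2 = \emptyset$ for every $K \in \Kb$. I would then set
\[
K^{*} = \bigcup_{K \in \Kb} K.
\]
Because $\Kb$ is finite, $\Kb \subseteq \Cs$, and $\Cs$ is closed under finite union (including the empty union, which yields $\emptyset \in \Cs$), we obtain $K^{*} \in \Cs$. The cover condition gives $L_1 \subseteq K^{*}$, and distributivity of intersection over union yields $K^{*} \cap L_2 = \bigcup_{K \in \Kb} (K \cap L_2) = \emptyset$. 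Thus $K^{*}$ is a $\Cs$-separator of $L_1$ from $L_2$.

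There is no real obstacle here; the only subtlety is making sure the empty case $\Kb = \emptyset$ (which forces $L_1 = \emptyset$) is covered, but this is precisely why the lattice axioms include closure under the empty union $\emptyset \in \Cs(A)$, so the same argument goes through unchanged. I would therefore present the proof as two short paragraphs, one per direction, explicitly invoking closure under finite union in the backward direction.
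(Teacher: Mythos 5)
Your proof is correct and is exactly the standard argument the paper has in mind (the paper itself defers the proof to~\cite{pzcovering2}, noting it is easy): singleton cover in one direction, union of the cover in the other, using closure of the lattice under finite (including empty) unions. Nothing to add.
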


\subsection{Suffix languages}\label{sec:suffix-languages} We finish this preliminary section by presenting a specific class of \flangs: the \emph{suffix \flangs} (\su).   While simple, \su will be crucial in the paper: we use it in a generic construction which builds new classes on top of already existing ones.

We first define \su and then present a classification of the \flangs it contains. For any alphabet $A$, $\su(A)$ consists of all finite Boolean combinations of languages of the form $A^*w$ for some $w \in A^*$. It is immediate by definition that \su is a Boolean algebra. We state this in the following proposition.

\begin{proposition} \label{prop:suvar}
  \su is a Boolean algebra.
\end{proposition}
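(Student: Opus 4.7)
The plan is to observe that Proposition~\ref{prop:suvar} is essentially a definitional unfolding. By definition, for each alphabet $A$, the set $\su(A)$ consists of \emph{all} finite Boolean combinations of the generating \flangs $\{A^*w \mid w \in A^*\}$. In other words, $\su(A)$ is the Boolean algebra generated, within the powerset of $A^*$, by this family of generators. Thus the plan is simply to verify, for an arbitrary alphabet $A$, the three closure properties required of a Boolean algebra of \flangs.

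First, I would check the neutral elements: the \flang $A^* = A^* \varepsilon$ is itself a generator (take $w = \varepsilon$), so $A^* \in \su(A)$, and hence $\emptyset = A^* \setminus A^*$ is a finite Boolean combination of generators and belongs to $\su(A)$ as well. Next, I would verify closure under union: if $L_1, L_2 \in \su(A)$, then each $L_i$ is a finite Boolean combination of generators $A^*w$, so $L_1 \cup L_2$ is again such a combination and belongs to $\su(A)$. Closure under intersection follows by the same argument (the intersection of two finite Boolean combinations is a finite Boolean combination), and closure under complement is immediate since the class of finite Boolean combinations of a fixed family of generators is, by construction, stable under complementation.

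Since each of the required closure properties is established uniformly for every alphabet $A$, the map $A \mapsto \su(A)$ is a Boolean algebra of \flangs. There is no real obstacle here: the only thing to be careful about is making the distinction between a single generator $A^*w$ (which need not itself be closed under Boolean operations) and the class $\su(A)$ of all finite Boolean combinations thereof, which is so closed precisely because we have taken the Boolean closure in the definition.
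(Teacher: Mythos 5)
Your proof is correct and matches the paper's treatment: the paper simply observes that the claim is immediate from the definition of $\su(A)$ as the set of all finite Boolean combinations of the generators $A^*w$, which is exactly the definitional unfolding you carry out. Your only addition is to spell out the routine verifications (including that $A^* = A^*\varepsilon$ is itself a generator), which is harmless and consistent with the paper's intent.
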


\begin{remark}
  While we shall not need this property, \su is also closed under quotient. On the other hand, it is not closed under inverse image.
\end{remark}

We also consider the following classification of the languages in \su (we call it a \emph{stratification} of \su). For any $k \in \nat$, we define a \emph{finite} class $\su_k$ (\emph{i.e.}, $\su_k(A)$ is a finite set for any $A$). Given an alphabet $A$, $\su_k(A)$ consists of all finite Boolean combinations of languages having the form $A^*w$ for some $w \in A^*$ such that $|w| \leq k$. One may verify that all strata $\su_k$ are finite Boolean algebras. Moreover, for any alphabet $A$, we have:
\[
  \su_k(A) \subseteq \su_{k+1}(A) \text{ for any $k \in \nat$} \qquad \text{and} \qquad \bigcup_{k \in \nat} \su_k(A) = \su(A).
\]
Our motivation for introducing this stratification of \su is the canonical equivalence that one may associate to each stratum $\su_k$. Consider an alphabet $A$. For any natural number $k \in \nat$ and any two words $w,w' \in A^*$, we write $w \keqsu w'$ if and only if the following condition holds:
\[
  \text{For any language $L \in \su_k(A)$,} \quad w\in L \Leftrightarrow w' \in L.
\]
By definition and since $\su_k$ is finite, \keqsu is an equivalence relation of finite index. Finally, since all strata $\su_k$ are Boolean algebras, the following lemma is immediate.

\begin{lemma} \label{lem:canoeq}
  Let $k \in \nat$. Then for any alphabet $A$, the languages in $\su_k(A)$ are exactly the unions of \keqsu-classes.
\end{lemma}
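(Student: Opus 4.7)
The plan is to prove both inclusions separately. The forward inclusion (every $L \in \su_k(A)$ is a union of $\keqsu$-classes) follows directly from the definition: if $L \in \su_k(A)$, $w \in L$, and $w \keqsu w'$, then the very definition of $\keqsu$ forces $w' \in L$, so $L$ is saturated by $\keqsu$ and is therefore a union of $\keqsu$-classes. No combinatorics is needed here.

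For the converse inclusion, the key idea is that $\su_k(A)$ is a \emph{finite} Boolean algebra, so we can describe each $\keqsu$-class as an explicit finite Boolean combination of languages in $\su_k(A)$. Given $w \in A^*$, I would write
\[
  [w]_{\keqsu} \;=\; \bigcap_{\substack{L \in \su_k(A)\\ w \in L}} L \;\cap\; \bigcap_{\substack{L \in \su_k(A)\\ w \notin L}} (A^* \setminus L),
\]
observe that both intersections range over the finite set $\su_k(A)$, and verify the equality from the definition of $\keqsu$. Since $\su_k$ is a Boolean algebra (this was noted right before the statement), this finite Boolean combination lies in $\su_k(A)$, so every $\keqsu$-class belongs to $\su_k(A)$.

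To conclude, I would use that $\keqsu$ has finite index (again because $\su_k(A)$ is finite, there are only finitely many distinct membership patterns on $\su_k(A)$), so any union of $\keqsu$-classes is actually a \emph{finite} union. Closure of $\su_k(A)$ under union (it is a Boolean algebra) then places this union in $\su_k(A)$, completing the second inclusion.

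I do not expect any real obstacle: the entire argument rests on two facts already established in the preceding paragraph, namely that each $\su_k$ is a \emph{finite} Boolean algebra. The one place to be slightly careful is making sure the displayed formula for $[w]_{\keqsu}$ is a well-defined element of $\su_k(A)$ even when one of the two intersections is empty — which is fine, since the empty intersection is $A^*$, itself a member of $\su_k(A)$ by closure under empty intersection.
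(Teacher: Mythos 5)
Your proof is correct and is exactly the standard argument the paper has in mind: the paper gives no explicit proof, declaring the lemma ``immediate'' from the fact that each $\su_k$ is a finite Boolean algebra, and your two inclusions (saturation from the definition of \keqsu, and expressing each class as a finite Boolean combination over the finite set $\su_k(A)$, then using finite index for unions) are precisely how that immediacy is cashed out.
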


Finally, we shall need the following result which follows Lemma~\ref{lem:canoeq} and gives an alternate definition of \su, which is sometimes simpler to manipulate. One may verify that for any $k \in \nat$, the equivalence classes of \keqsu are all languages of the form $A^*w$ for $|w| = k$ or $\{w\}$ for some $|w| \leq k-1$. Thus, we have the following lemma.

\begin{lemma} \label{lem:eqclasses}
  Let $A$ be an alphabet, let $k \in \nat$ and let $L$ be a language over $A$. Then, $L \in \su_k(A)$ if and only if $L$ is a union of languages having one the two following forms:
  \begin{enumerate}
  \item $A^*w$ for some $w \in A^*$ such that $|w| = k$.
  \item $\{w\}$ for some $w \in A^*$ such that $|w| \leq k-1$.
  \end{enumerate}
\end{lemma}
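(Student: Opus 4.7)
The plan is to apply Lemma~\ref{lem:canoeq} and reduce the statement to an explicit description of the \keqsu-equivalence classes. Concretely, fix an alphabet $A$ and $k \in \nat$, and let $\Fs_k$ denote the collection of all languages having the form $A^*w$ with $|w|=k$ or $\{w\}$ with $|w| \leq k-1$. I would prove that $\Fs_k$ is exactly the set of \keqsu-classes. Together with Lemma~\ref{lem:canoeq} this immediately yields the lemma, since a language is a union of \keqsu-classes if and only if it is a union of members of $\Fs_k$.

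First I would check that $\Fs_k$ partitions $A^*$. Any word $u \in A^*$ with $|u| \leq k-1$ belongs to the unique singleton $\{u\} \in \Fs_k$; any word $u$ with $|u| \geq k$ can be uniquely written $u = u'w$ with $|w|=k$, and therefore belongs to $A^*w$ and to no other member of $\Fs_k$ (if $u \in A^*v$ with $|v|=k$, then $v$ is the suffix of $u$ of length $k$, hence $v=w$; and $u \notin \{v\}$ for any $|v|\leq k-1$).

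Next I would show that each set $X \in \Fs_k$ is contained in a single \keqsu-class. By Proposition~\ref{prop:suvar} (applied to the finite Boolean algebra $\su_k$), it suffices to verify that any two words $u_1,u_2 \in X$ agree on membership in every generator $A^*v$ with $|v| \leq k$. For $X=\{w\}$ this is trivial. For $X=A^*w$ with $|w|=k$, write $u_i = u_i'w$; for any $v$ with $|v|\leq k$, $u_i \in A^*v$ iff $v$ is a suffix of $u_i$ of length at most $k$, iff $v$ is a suffix of $w$ (using $|v|\leq k = |w|$). This condition does not depend on $i$, so $u_1 \keqsu u_2$.

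Finally I would show that distinct members of $\Fs_k$ lie in distinct \keqsu-classes, by exhibiting a separating language in $\su_k$ for each pair. To separate $A^*w$ from $A^*w'$ with $|w|=|w'|=k$, $w \neq w'$, use $A^*w$ itself (no element of $A^*w'$ has $w$ as a suffix since $|w'|=|w|$ and $w' \neq w$). To separate two distinct singletons $\{w_1\},\{w_2\}$ with $|w_1|,|w_2|\leq k-1$, use $L_{w_1} = A^*w_1 \setminus \bigcup_{a \in A} A^*aw_1$, which equals $\{w_1\}$ and lies in $\su_k$ since every $aw_1$ has length at most $k$. The same language $L_{w_1}$ also separates $\{w_1\}$ from any $A^*w$ with $|w|=k$. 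This completes the identification of $\Fs_k$ with the set of \keqsu-classes, and Lemma~\ref{lem:canoeq} finishes the argument. The only delicate point is the case analysis for distinctness, but each case reduces to a straightforward length comparison.
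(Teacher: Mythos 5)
Your proposal is correct and follows the same route as the paper: identify the \keqsu-classes as exactly the sets $A^*w$ with $|w|=k$ and $\{w\}$ with $|w|\leq k-1$, then invoke Lemma~\ref{lem:canoeq}; the paper merely asserts this identification with ``one may verify,'' and you supply the verification. The only cosmetic slip is the citation of Proposition~\ref{prop:suvar} (which concerns \su, not $\su_k$) where you really use the fact that $\su_k$ is the Boolean algebra generated by the languages $A^*v$ with $|v|\leq k$, so agreement on these generators implies \keqsu-equivalence.
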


\section{Fragments of first-order logic}
\label{sec:logic}
In this section, we briefly recall the definition of first-order logic over words and \iwords. Moreover, we introduce the various fragments that we intend to investigate.

\subsection{First-order logic and fragments}

We first briefly define first-order logic. Consider an alphabet $A$. Recall that we view a word $w \in A^*$ as a linearly ordered set of labeled positions $\{0,\dots,|w|-1\}$. In first-order logic (\fow), one can quantify over these positions and use the following predicates:
\begin{itemize}
\item[$\bullet$] \emph{The label predicates}: for each $a \in A$, a	unary predicate ``$a(x)$'' selects all positions labeled with an $a$.
\item[$\bullet$] \emph{Linear order}: a binary predicate ``$x < y$'' interpreted as the (strict) linear order over the positions.
\end{itemize}

Each first-order sentence defines the \flang of all words satisfying it. For example, the sentence ``$\exists x \exists y\ (x < y \wedge a(x) \wedge b(y))$'' defines the \flang $A^*aA^*bA^*$. We shall freely use the name ``\fow'' to denote both first-order logic and the class of \flangs that may be defined by a first-order sentence.

Moreover, \fow also defines a class of \ilangs. Recall that the set of positions in an \iword is simply \nat. Thus, we may interpret \fow sentences on \iwords. For example, the sentence ``$\exists x \exists y\ (x < y \wedge a(x) \wedge b(y))$'' also defines the \ilang $A^*aA^*bA^\omega$. Therefore, \fow defines two classes: a class of \flangs and a class of \ilangs. We speak of \fow over \fwords and \fow over \iwords. We shall adopt a similar terminology for all fragments that we consider.

First-order logic itself is well-understood. The solution to the membership problem for \fow over \fwords is due to~Schützenberger~\cite{sfo}, McNaughton and Papert~\cite{mnpfo}. It is considered as a seminal result for this research field. It was later lifted to \iwords by Perrin~\cite{pfo}. Separation and covering were considered much later. Both problems were solved for \fwords and \iwords by the authors~\cite{pzfo,pzfoj}. However, the focus of our investigation in the present paper is not first-order logic itself. Instead, we are interested in specific fragments of first-order logic that we define now.

\medskip
\noindent
{\bf Two-variable first-order logic.} This fragment is denoted by \fodw. It restricts \fow sentences to those containing at most {\bf two} distinct variables. Note however that these two variables may be reused. For example, the sentence
\[
  \exists x \exists y\  x < y \wedge a(x) \wedge b(y) \wedge (\exists
  x\ y < x \wedge c(x))
\]
\noindent
is an \fodw sentence defining the \flang $A^*aA^*bA^*cA^*$ and the \ilang $A^*aA^*bA^*cA^\omega$. It is folklore and simple to verify that over words, \fodw is a Boolean algebra closed under quotient and inverse image. Similarly, over \iwords, \fodw is a Boolean algebra closed under inverse image.

\medskip
\noindent
{\bf Quantifier alternation hierarchy.} We shall also consider fragments within the quantifier alternation of first-order logic. It is natural to classify first-order sentences by counting the number of alternations between $\exists$ and $\forall$ quantifiers in their prenex normal form. More precisely, given a natural number $n \geq 1$, an \fow sentence is \sio{n} (resp.\ \pio{n}) when its prenex normal form has $(n -1)$ quantifier alternations (that is, $n$ blocks of quantifiers) and starts with an $\exists$ (resp.~ a\ $\forall$) quantifier. For example, a sentence whose prenex normal form is
\[
  \exists x_1 \exists x_2  \forall x_3 \exists x_4
  \ \varphi(x_1,x_2,x_3,x_4) \quad \text{(with $\varphi$ quantifier-free)}
\]
\noindent
is {\sio 3}. Observe that the sets of \sio{n} and \pio{n} sentences are not closed under negation: negating a \sio{n} sentence yields a \pio{n} sentence and vice versa. Thus, one also considers \bso{n} sentences: Boolean combinations of \sio{n} sentences. This yields hierarchies of classes of languages and \ilangs, and both are strict~\cite{BroKnaStrict}. The hierarchy for \fwords is depicted in Figure~\ref{fig:hieraover}. Colors depict the status of each fragment: green ($\mbox{\sio1}$, $\mbox{\sio2}$, $\mbox{\sio3}$, $\mbox{\bso1}$, $\mbox{\bso2}$) means that covering is decidable (hence also separation and membership); blue ($\mbox{\pio1}$, $\mbox{\pio2}$, $\mbox{\pio3}$) means that separation is decidable, while the status for covering is unknown; yellow ($\sio4$, $\pio4$) means that membership is decidable for \fwords and unknown for \iwords\footnote{Actually, for \iwords, very few results on separation and covering have been published, see Section~\ref{sec:theiw}.}, while the status for separation and covering is unknown; finally, red means that even the status for membership is unknown (which is also the case for $\bso4$ and all fragments~above).

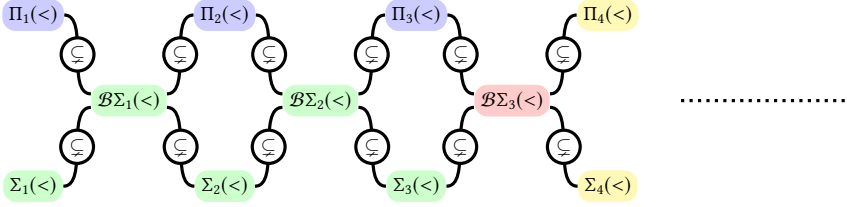
\begin{figure}[!htb]
  \begin{center}

    \begin{tikzpicture}

      \begin{scope}[scale=0.75,every node/.style={scale=0.75}]

        \node[obox] (s4) at (10.2,-1.5) {$\sio{4}$};
        \node[obox] (p4) at (10.2,1.5) {$\pio{4}$};

        \node[rbox] (b3) at (8.5,0.0) {$\bso{3}$};

        \node[gbox] (s3) at (6.8,-1.5) {$\sio{3}$};
        \node[bbox] (p3) at (6.8,1.5) {$\pio{3}$};

        \node[gbox] (b2) at (5.1,0.0) {$\bso{2}$};

        \node[gbox] (s2) at (3.4,-1.5) {$\sio{2}$};
        \node[bbox] (p2) at (3.4,1.5) {$\pio{2}$};

        \node[gbox] (b1) at (1.7,0.0) {$\bso{1}$};

        \node[gbox] (s1) at (0.0,-1.5) {$\sio{1}$};
        \node[bbox] (p1) at (0.0,1.5) {$\pio{1}$};

        \draw[ledg] (s1) to [out=0,in=180,looseness=0.8] node[linc] {$\subsetneq$} (b1.190);
        \draw[ledg] (p1) to [out=0,in=180,looseness=0.8] node[linc] {$\subsetneq$} (b1.170);
        \draw[ledg] (b1.-10) to [out=0,in=180,looseness=0.8] node[linc] {$\subsetneq$} (s2);
        \draw[ledg] (b1.10) to [out=0,in=180,looseness=0.8] node[linc] {$\subsetneq$} (p2);
        \draw[ledg] (s2) to [out=0,in=180,looseness=0.8] node[linc] {$\subsetneq$} (b2.190);
        \draw[ledg] (p2) to [out=0,in=180,looseness=0.8] node[linc] {$\subsetneq$} (b2.170);
        \draw[ledg] (b2.-10) to [out=0,in=180,looseness=0.8] node[linc] {$\subsetneq$} (s3);
        \draw[ledg] (b2.10) to [out=0,in=180,looseness=0.8] node[linc] {$\subsetneq$} (p3);
        \draw[ledg] (s3) to [out=0,in=180,looseness=0.8] node[linc] {$\subsetneq$} (b3.190);
        \draw[ledg] (p3) to [out=0,in=180,looseness=0.8] node[linc] {$\subsetneq$} (b3.170);

        \draw[ledg,dotted] ($(b3)+(3.0,0.0)$) to [out=0,in=180,looseness=0.4]
        ($(b3)+(6.0,0.0)$);

        \draw[ledg] (b3.-10) to [out=0,in=180,looseness=0.8] node[linc] {$\subsetneq$} (s4);
        \draw[ledg] (b3.10) to [out=0,in=180,looseness=0.8] node[linc] {$\subsetneq$} (p4);
      \end{scope}

    \end{tikzpicture}
    \caption{Quantifier alternation hierarchy within \fow.}
    \label{fig:hieraover}
  \end{center}
\end{figure}

It is folklore and simple to verify that over words, all levels \sio{n} (resp. \bso{n}) are lattices (resp. Boolean algebras) closed under quotient and inverse image. Similarly, over \iwords, all levels \sio{n} (resp. \bso{n}) are lattices (resp. Boolean algebras) closed under inverse image. We shall come back to the quantifier alternation hierarchy of \fow in Section~\ref{sec:hiera}.

\subsection{Enriched signatures}

Observe that one may define a seemingly stronger variant \fows of \fow by enriching its signature with the following natural predicates:
\begin{itemize}
\item A binary predicate  ``$x+1 =y$'' interpreted as the successor relation between positions.
\item A unary predicate ``$min(x)$'' which selects the leftmost position.
\item A unary predicate ``$max(x)$'' which selects the rightmost position (in a  finite word).
\item A constant ``$\varepsilon$'' which holds for the empty \fword.
\end{itemize}

\begin{remark}
  Naturally, ``$max$'' and ``$\varepsilon$'' are only useful when interpreting first-order sentences over finite words: \iwords cannot have a rightmost position nor be empty.
\end{remark}

Note however that \fows is only stronger in the \emph{syntactic} sense: it is known and simple to verify that \fow and \fows have the same expressive power, \emph{i.e.},  the corresponding classes of \flangs (resp.\ of \ilangs) are the same. In other words, the four above predicates $\min$, $\max$, ${+}1$ and $\varepsilon$ may be defined from the linear
order. For example $x+1 = y$ is defined by the formula $x < y \wedge \neg(\exists z\ x < z \wedge z <y)$.

\medskip

Nonetheless, this remark is crucial for the paper: we are not interested in first-order logic itself but in its \emph{fragments}. It turns out that for \fodw and the levels \sio{n}, \bso{n} in the quantifier alternation hierarchy, adding the above predicates to the signature yields strictly more expressive logics. Therefore, for each fragment, we are able to define \textbf{two} natural ``variants'': a weak one (whose signature consists of the linear order $<$ and of the letter predicates $a()$) and a strong one (obtained by enriching the signature of the weak one with the above predicates). We write \fodp for the strong variant of \fodw and $\sipe{n},\bspe{n}$ for the strong variants of \sio{n}, \bso{n}.

\begin{example}
  It is not possible to express successor in the two-variables restriction of first-order logic (\fodw). Intuitively, this is because it requires quantifying over a third variable (on the other hand, it is simple to express ``$min$'', ``$max$'' and``$\varepsilon$'').
\end{example}

\begin{remark}
  While the predicates ``$min$'', ``$max$'', ``$\varepsilon$'' are not explicitly mentioned in our notation, they are allowed in all strong variants. We omit them in the notation since they can be defined from ``$<$'' for all fragments except for $\sio{1}, \pio{1}$ and $\bso{1}$.
\end{remark}

Our objective in the paper is to investigate the covering and separation problems associated to strong variants. Our main contribution is a \emph{generic reduction technique}. It is designed to exploit the intuitive relationship between weak and strong variants: we reduce covering for the latter to the same problem for the former. We are then able to obtain covering and separation algorithms for several strong variants as corollaries of already existing results for the corresponding weak variants. In fact, we have two similar reduction theorems: one for finite words (presented in Section~\ref{sec:wfwords}) and one for \iwords (presented in Section~\ref{sec:theiw}).

Let us sketch our approach using the case of finite words. The reduction technique is based on two ingredients. The first one is a generic operation that can be applied to a class of languages: \emph{\su-enrichment} (here, \su denotes the class of suffix languages defined in Section~\ref{sec:prelims}). Given a class of languages \Cs, \su-enrichment constructs a larger class denoted $\Cs \circ \su$. The key idea is that \su-enrichment captures the intuitive relationship between weak and strong variants as a formal and generic connection between the associated classes: if \Cs is the class corresponding to a weak variant, then $\Cs \circ \su$ corresponds to the strong variant.

\begin{remark}
  The connection between weak and strong variants is not a new result, even formalized as a generic operation defined on classes of languages. For example, this was observed by Straubing~\cite{StrauVD} for the quantifier alternation hierarchy. However, an important point is that in the literature, these results are usually formulated using algebraic terminology. In contrast, in this paper, we work directly at the level of language classes. We shall come back to this point in the next section.
\end{remark}

The second ingredient is formulated as a generic reduction theorem. Given a lattice of languages \Cs closed under right quotient and inverse image, it states that $(\Cs \circ \su)$-covering reduces to $\Cs$-covering. By combining the two ingredients, we get a reduction that is generic to all logical fragments outlined above.

\section{Reducing strong to weak variants}
\label{sec:wfwords}
In this section, we present our generic reduction for languages of finite words. First, we define an operation which combines two class of \flangs \Cs and \Ds into a larger one $\Cs \circ \Ds$: the ``\emph{\Ds-enrichment of \Cs}''. This operation is crucial: as we shall prove later, when $\Ds = \su$, it captures the intuitive connection existing between weak and strong logical fragments.

\begin{remark}
  Enrichment is the language theoretic counterpart of an algebraic operation defined between varieties of semigroups: the \emph{wreath product} (see~\cite{StrauVD} for \fwords and~\cite{cartonwreath} for \iwords). In fact, we use the same notation: ``$\circ$''. We have two motivations for using a pure language theoretic point of view here:
  \begin{enumerate}
  \item This is much simpler: we avoid a lot of algebraic machinery.
  \item Manipulating this definition in proofs is more natural since we are only dealing with one object: \emph{classes of languages}. On the other hand, using the algebraic definition requires handling varieties of semigroups and classes of languages simultaneously.
  \end{enumerate}
  Our approach makes it necessary to prove the connection with logic, \emph{i.e.}, that \su-enrichment captures the link between weak and strong fragments.
\end{remark}

In this section, we first define enrichment and then present our reduction theorem: given any class of \flangs \Cs (satisfying appropriate properties), $(\Cs \circ \su)$-covering reduces to \Cs-covering. The remainder of the section will be devoted to proving this result.

\subsection{The enrichment of a class of \flangs} We are now ready to define enrichment. We start with a preliminary notion.

\medskip
\noindent
{\bf \Pb-taggings.} Let $A$ be an alphabet and let \Pb be a finite partition of $A^*$. Observe that $\Pb \times A$ is also a finite set. We use it as an extended alphabet and define a canonical map $\tau_\Pb: A^* \to (\Pb \times A)^*$. Given an arbitrary \fword $u \in A^*$, we denote by $\ppart{u}$ the unique language in the partition \Pb that contains $u$.

Let $w \in A^*$ be a \fword. Consider the decomposition of $w$ as a concatenation of letters: $w = a_1 \cdots a_n \in A^*$. We let $\tau_\Pb(w)$ be the \fword $\tau_\Pb(w) = b_1 \cdots b_n \in (\Pb \times A)^*$ where,
\[
  b_1 = (\ppart{\varepsilon},a_1) \quad \text{~~and~~} \quad b_i = (\ppart{a_1 \cdots a_{i-1}},a_i) \quad \text{for $2 \leq i \leq n$}.
\]
Note that when $w$ is empty, then $\tau_\Pb(\varepsilon) = \varepsilon$. Given any $w \in A^*$, we call $\tau_\Pb(w)$ the \emph{\Pb-tagging} of $w$. Observe that the \Pb-tagging of $w$ is simply a relabeling: each position $i$ in $w$ is given a new label encoding its original label in $A$ and the unique \flang in \Pb containing the prefix $w[1,i-1]$.

\begin{example} \label{ex:tag}
  Let $A = \{a,b\}$ and consider the \flangs $P_\varepsilon = \{\varepsilon\}$, $P_a = A^*a$ and $P_b = A^*b$. Clearly, $\Pb = \{P_\varepsilon,P_a,P_b\}$ is a partition of $A^*$. Let $w = babba \in A^*$, the \Pb-tagging of $w$ is $\tau_\Pb(w) = (P_\varepsilon,b)(P_b,a)(P_a,b)(P_b,b)(P_b,a)$.
\end{example}

\begin{remark}
  The map $w \mapsto \tau_\Pb(w)$ is not a morphism. Moreover, it is not surjective in general. Indeed, there are usually compatibility constraints between consecutive positions in $\tau_\Pb(w)$, as can be observed in Example~\ref{ex:tag} (on the other hand, $\tau_\Pb$ is injective).
\end{remark}

While $\tau_\Pb$ is not a morphism, it will often be convenient to decompose the image $\tau_{\Pb}(w)$ of some \fword $w \in A^*$. For this, we shall use a second map $\delta_\Pb: A^* \times A^* \to A^*$. Let $u,w \in A^*$ and consider the decomposition of $w$ as a concatenation of letters: $w = a_1 \cdots a_n \in A^*$. We let $\delta_\Pb(u,w) = b_1 \cdots b_n \in (\Pb \times A)^*$ where,
\[
  b_1 = (\ppart{u},a_1) \quad \text{~~and~~} \quad b_i = (\ppart{ua_1 \cdots a_{i-1}},a_i) \quad \text{for $2 \leq i\leq n$}.
\]
The following lemma may be verified from the definitions of $\tau_{\Pb}$ and $\delta_\Pb$.

\begin{lemma} \label{lem:deltadef}
  Let $A$ be an alphabet, \Pb a finite partition of $A^*$. Then given any $u \in A^*$ and any $w \in A^*$, we have $\tau_\Pb(uw) = \tau_\Pb(u) \cdot \delta_\Pb(u,w)$.
\end{lemma}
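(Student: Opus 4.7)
The plan is to prove the identity by a direct unfolding of the two definitions at the level of individual letters. I would fix arbitrary finite decompositions $u = c_1 \cdots c_m$ and $w = a_1 \cdots a_n$, so that $uw = c_1 \cdots c_m a_1 \cdots a_n$ is a word of length $m+n$. I would then compare $\tau_\Pb(uw)$ on one side and $\tau_\Pb(u) \cdot \delta_\Pb(u,w)$ on the other, position by position.

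For the first $m$ positions of $\tau_\Pb(uw)$, the definition of $\tau_\Pb$ gives $(\ppart{\varepsilon}, c_1)$ at position $1$ and $(\ppart{c_1 \cdots c_{i-1}}, c_i)$ at position $i$ for $2 \leq i \leq m$; by definition of $\tau_\Pb(u)$ these are exactly the $m$ letters making up $\tau_\Pb(u)$. For the last $n$ positions of $\tau_\Pb(uw)$, the definition of $\tau_\Pb$ gives $(\ppart{c_1 \cdots c_m a_1 \cdots a_{j-1}}, a_j) = (\ppart{u a_1 \cdots a_{j-1}}, a_j)$ at position $m+j$, which is exactly the $j$-th letter of $\delta_\Pb(u,w)$ by its definition (the cases $j=1$ and $j \geq 2$ matching the two clauses in the definition of $\delta_\Pb$). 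Concatenating the two blocks yields the desired equality.

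The only subtle point is keeping track of the base cases carefully: one must observe that the ``$\ppart{\varepsilon}$'' prefix of $\tau_\Pb(u)$ and the ``$\ppart{u}$'' prefix of $\delta_\Pb(u,w)$ fit together precisely because $\tau_\Pb(uw)$ uses, at position $m+1$, the prefix $c_1 \cdots c_m = u$. The edge cases $u = \varepsilon$ and $w = \varepsilon$ are immediate: when $u = \varepsilon$ we have $\tau_\Pb(u) = \varepsilon$ and $\delta_\Pb(\varepsilon, w) = \tau_\Pb(w)$ since $\ppart{\varepsilon \cdot a_1 \cdots a_{j-1}} = \ppart{a_1 \cdots a_{j-1}}$, while when $w = \varepsilon$ both sides reduce to $\tau_\Pb(u)$. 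There is no real obstacle here; the statement is essentially an index-bookkeeping lemma designed to let later proofs split $\tau_\Pb$ across a concatenation, and the verification is routine.
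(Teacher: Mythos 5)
Your proof is correct and matches the paper's approach: the paper omits the argument entirely, stating only that the lemma ``may be verified from the definitions of $\tau_{\Pb}$ and $\delta_\Pb$'', and your letter-by-letter unfolding (including the edge cases $u=\varepsilon$ and $w=\varepsilon$) is exactly that routine verification.
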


\noindent
{\bf Definition of enrichment.} Consider two classes of \flangs \Cs and \Ds. We define a new class of \flangs $\Cs \circ \Ds$ called the \emph{\Ds-enrichment of \Cs}. Given an alphabet $A$,  a \emph{\Ds-partition of $A^*$} is a \emph{finite} partition of $A^*$ into \flangs of \Ds.

We define $\Cs \circ \Ds$ as the following class of languages. Let $A$ be some alphabet. Given any language $L \subseteq A^*$, we have $L \in (\Cs \circ \Ds)(A)$ if and only if there exists a \Ds-partition \Pb of $A^*$ and \flangs $L_P \in \Cs(\Pb \times A)$ for all $P \in \Pb$ such that,
\[
  L = \bigcup_{P \in \Pb} \left(P \cap \tau_\Pb\inv(L_P)\right).
\]

\begin{remark}
  Strictly speaking, this definition makes sense for any two classes of \flangs \Cs and \Ds. On the other hand, one needs a few hypotheses for it to be robust. Specifically, it is natural to require \Cs to be closed under alphabetic inverse image: since we use \Cs for distinct alphabets, it makes sense to have a property connecting what \Cs is for these distinct alphabets. Similarly, requiring \Ds to be a Boolean algebra is natural since we use \Ds-partitions of $A^*$.

  It turns out that when \Cs is a lattice of \flangs closed under alphabetic inverse image and \Ds is a Boolean algebra of \flangs, $\Cs \circ \Ds$ is a lattice of \flangs containing both \Cs and \Ds (since we never use this property, its proof is left to the reader).
\end{remark}

\begin{example}
  Consider the class \at of alphabet testable \flangs: for any alphabet $A$, $\at(A)$ contains all Boolean combinations of \flangs $A^*aA^*$ for $a \in A$. We describe a \flang in $\at \circ \su$. Let $A =  \{a,b\}$. We claim that $A^*abA^*a$ belongs to $(\at \circ \su)(A)$. Consider the \su-partition of Example~\ref{ex:tag}: $\Pb = \{P_\varepsilon,P_a,P_b\}$ where $P_\varepsilon = \{\varepsilon\}$, $P_a = A^*a$ and $P_b = A^*b$. Clearly, $L = (\Pb \times A)^* \cdot (P_a,b) \cdot (\Pb \times A)^*$ belongs to $\at(\Pb \times A)$. Moreover, one may verify that:
  \[
    A^*abA^*a = (P_\varepsilon \cap \tau_\Pb\inv(\emptyset)) \cup (P_a\cap \tau_\Pb\inv(L)) \cup (P_b\cap \tau_\Pb\inv(\emptyset)) \in \at \circ \su.
  \]
  In fact, $\at \circ \su$ is exactly the class of locally testable \flangs, which is well-known in the literature~\cite{zalclt72,bslt73,mcnlt74}.
\end{example}

\begin{remark}\label{rem:stratification}
  If the class \Ds can be written $\Ds=\bigcup_{k\in\nat}\Ds_k$, then we obtain from the definition that $\Cs\circ\Ds=\bigcup_{k\in\nat}\Cs\circ\Ds_k$. In the case of \su-enrichment, we will use this remark with the natural stratification $(\su_k)_{k\in\nat}$ of \su defined in Section~\ref{sec:suffix-languages}.
\end{remark}

As we explained, we are mainly interested in the special case of \su-enrichment as it captures the intuitive connection between strong and weak logical fragments. Specifically, we prove in Section~\ref{sec:fo2} that \fodp is the \su-enrichment of \fodw. Moreover, we show in Section~\ref{sec:hiera} that for any $n \geq 1$ \sipe{n} and \bspe{n} are respectively the \su-enrichments of \sio{n} and \bso{n}.

\medskip

We now turn to the main theorem of the paper (more precisely, to its variant for finite words): given any lattice of languages \Cs closed under \emph{right quotient} and \emph{inverse image}, $(\Cs \circ \su)$-covering reduces to \Cs-covering. The remainder of the section is devoted to presenting this reduction. Let us start with an outline of the different steps it involves.

\smallskip
The reduction works with the algebraic definition of regular \flangs, which we recall. As we explained, given any alphabet $A$, the universal \flang $A^*$ is a monoid. Given an arbitrary monoid $M$ and a \flang $L \subseteq A^*$, we say that $L$ is \emph{recognized by $M$} if there exists a monoid morphism $\alpha: A^* \rightarrow M$ and a set $F \subseteq M$ such that $L = \alpha\inv (F)$. It is well-know that a \flang is regular if and only if it can be recognized by a \emph{finite} monoid. Moreover, if $L$ is regular, one may compute such a morphism recognizing $L$ from any representation of $L$ (such as an NFA or an \mso sentence).

\smallskip

Consider an input pair $(L,\Lb)$ for the covering problem: $L$ is a regular \flang and \Lb is a finite multiset of regular \flangs. Our reduction requires starting from a single monoid morphism $\alpha:  A^* \to M$ recognizing all \flangs in $\{L\} \cup \Lb$. This is mandatory, as the reduction is parametrized by $\alpha$.

\begin{remark} \label{rem:wfwords:computealpha}
  This requirement is not restrictive: it is simple to build such a morphism. Assume that $\{L\} \cup \Lb =  \{L_1,\dots,L_n\}$.  For each $i \leq n$, we may build a morphism $\alpha_i: A^* \to M_i$ recognizing $L_i$. It then suffices to use the morphism $\alpha: A^* \to M_1 \times \cdots \times M_n$ defined by $\alpha(w) = (\alpha_1(w),\dots,\alpha_n(w))$, which recognizes all \flangs $L_i$.
\end{remark}

Once we have the morphism $\alpha:  A^* \to M$ in hand, we use it in a generic construction which builds two objects. The first one is a new alphabet \awfa: the \emph{alphabet of well-formed \fwords}. The second one is a map $L \mapsto \wfwa{L}$ which associates a new regular \flang over \awfa to any \flang $L \subseteq A^*$ recognized by $\alpha$. We also extend this map to multisets.

Assume that \Cs is a lattice of \flangs closed under right quotient and inverse image. The reduction states that for any pair $(L,\Lb)$ such that all \flangs in $\{L\} \cup \Lb$ are recognized by $\alpha$ , the two following properties are equivalent:
\begin{enumerate}
\item $(L,\Lb)$ is $(\Cs \circ \su)$-coverable.
\item $(\wfwa{L},\wfwa{\Lb})$ is \Cs-coverable.
\end{enumerate}

This concludes our outline. The remainder of this section is organized as follows. We first present the construction which builds an alphabet of well-formed \fwords from an arbitrary morphism. Then, we state the reduction theorem and prove it.

\subsection{Languages of well-formed \fwords} We describe a generic construction which takes as input a morphism $\alpha: A^* \to M$ into a finite monoid $M$. It builds the following objects:
\begin{enumerate}
\item An alphabet \awfa, called the \emph{alphabet of well-formed \fwords associated to $\alpha$}.
\item A map associating to any \flang $L$ over $A^*$ \emph{recognized by $\alpha$} a regular \flang $\wfwa{L}$ over \awfa. We call $\wfwa{L}$ the \emph{\flang of well-formed \fwords associated to $L$}.
\end{enumerate}

We denote by $S$ the semigroup $S = \alpha(A^+)$, that is, the image in $M$ of all \textbf{nonempty} \fwords. Moreover, we write $E(S)$ for the set of idempotent elements in $S$ (\emph{i.e.}, $E(S)$ consists of all $e \in S$ such that $ee = e$). We also write $S^1$ for $S\cup\{1_M\}$ (notice that it may happen that $S=S^1$). Finally, we let ``$\square$'' be some symbol that does not belong to $M$. The \emph{alphabet of well-formed \fwords associated to $\alpha$}, denoted by \awfa, is defined as follows:
\[
  \awfa = (E(S) \cup \{\square\}) ~~\times~~ S^1 ~~\times~~ (E(S) \cup \{\square\}).
\]
Note that since $S$ depends on $\alpha$, so does \awfa. We are not interested in all \fwords of $\awfa^*$, but only in those that are ``well-formed''. Given a \fword $w \in \awfa^*$, we say that $w$ is \emph{well-formed} if it is {\bf nonempty} (\emph{i.e.}, $w \in \awfa^+$) and has the following form:
\[
  w = (\square,s_1,f_1) \cdot (e_2,s_2,f_2) \cdots (e_{n-1},s_{n-1},f_{n-1})	\cdot (e_n,s_n,\square)
\]
with $f_{i} = e_{i+1} \in E(S)$ for all $1\leq i \leq n-1$. In other words,
\[
  w = (\square,s_1,e_2) \cdot (e_2,s_2,e_3) \cdots (e_{n-1},s_{n-1},e_n)	\cdot (e_n,s_n,\square).
\]
In particular, well-formed \fwords of length $1$ are of the form $(\square,s,\square)$ with $s \in S^1$.

\begin{remark}
  The definition requires that $f_{i} = e_{i+1} \in E(S)$ for all $1\leq i \leq n-1$. This means that the idempotents $e_2,\dots,e_n$ must be the image under $\alpha$ of a {\bf nonempty} \fword. This is easy to miss. However, this property is crucial for proving the main theorem.
\end{remark}

It is straightforward to build an automaton recognizing the \flang of well-formed \fwords over \awfa. Thus, the following simple fact is immediate.

\begin{fct} \label{fct:wfwords:wfregular1}
  The \flang of all well-formed \fwords over \awfa is regular.
\end{fct}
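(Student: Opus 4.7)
The plan is to exhibit a small deterministic finite automaton recognizing the language of well-formed \fwords over \awfa, which is enough since regular \flangs are by definition those accepted by finite automata. The key observation is that well-formedness is a purely local constraint: (i) the first letter must have first component $\square$, (ii) the last letter must have third component $\square$, (iii) every letter that is not a boundary letter must have both outer components in $E(S)$, and (iv) for every pair of consecutive letters $(e,s,f)\cdot(e',s',f')$ the equality $f = e'$ must hold and must lie in $E(S) \cup \{\square\}$ (with the $\square$ case ruled out by (i)--(iii)). A finite-state device needs only to remember, between two successive positions, the third component of the letter just read, in order to check this matching.

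First, I would define an automaton $\mathcal{A} = (Q,\awfa,q_0,\delta,F)$ with state set $Q = \{q_0,q_{\bot}\} \cup E(S) \cup \{\square\}$, where $q_0$ is the initial state and $q_{\bot}$ is an absorbing rejecting sink. The intended invariant is that after reading a nonempty prefix $w$, the current state equals the third component of the last letter of $w$ (a value in $E(S) \cup \{\square\}$), or $q_{\bot}$ if $w$ is already not the prefix of any well-formed \fword. The final states are $F = \{\square\}$, so that acceptance enforces conditions (ii) and the nonemptiness requirement.

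Second, I would define $\delta$ as follows. From $q_0$, on reading $(x,s,y)$, go to $y$ if $x = \square$ and to $q_{\bot}$ otherwise, thereby enforcing condition (i). From a state $q \in E(S)$, on reading $(x,s,y)$, go to $y$ if $x = q$ (so in particular $x \in E(S)$) and to $q_{\bot}$ otherwise; this enforces both the matching condition of (iv) and the constraint that the shared value lies in $E(S)$. From state $\square$, every letter leads to $q_{\bot}$, which forbids any continuation after a letter ending with $\square$. Finally, $q_{\bot}$ loops to itself.

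Third, I would briefly verify by induction on length that $\mathcal{A}$ accepts exactly the well-formed \fwords: the invariant above is preserved at every step, and a run ends in $\square \in F$ iff the last letter has third component $\square$ and all intermediate matchings succeeded, which is precisely the well-formedness condition. Since $Q$ is finite, the \flang is regular. There is no real obstacle here; the only minor subtlety to double-check is the length-one case, namely that a letter $(\square,s,\square)$ with $s \in S^1$ is correctly accepted, which it is by the transition from $q_0$ directly to state $\square \in F$.
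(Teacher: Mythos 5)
Your proposal is correct and follows exactly the route the paper intends: the paper simply asserts that it is straightforward to build an automaton recognizing the well-formed \fwords, and your explicit DFA (remembering the third component of the last letter read, with a sink state) is a faithful instantiation of that construction, correctly handling the nonemptiness, the $\square$-boundary conditions, and the requirement that matching components lie in $E(S)$. Nothing further is needed.
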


We now associate a new \flang over \awfa to each \flang $L$ recognized by $\alpha$: we call it the \emph{\flang of well-formed \fwords associated to $L$}. As the name suggests, it is made exclusively of well-formed \fwords.

One defines a canonical morphism $\eval: \awfa^* \to M$ by defining the image of each letter from \awfa (there are four kinds of such letters). For $s \in S^1$ and $e,f \in E(S)$, we let:
\[
  \left\{
    \begin{array}{lllllll}
      \eval((e,s,f))       & = & esf, & \qquad & \eval((\square,s,f))       & = & sf, \\
      \eval((e,s,\square)) & = & es,  & \qquad & \eval((\square,s,\square)) & = & s.
    \end{array}
  \right.
\]
Consider a \flang $L$ recognized by $\alpha$. The \emph{\flang of well-formed \fwords associated to $L$}, denoted by \wfwa{L}, is defined as follows:
\[
  \wfwa{L} = \bigl \{w \in \awfa^* \mid w \text{ is well-formed and } \eval(w) \in \alpha(L)\bigr\} \subseteq \awfa^+.
\]
Observe that by definition, \wfwa{L} is the intersection of some \flang recognized by \eval with the \flang of well-formed \fwords. Hence, we have the following fact.

\begin{fct} \label{fct:wfwords:wfregular2}
  For any \flang $L \subseteq A^*$ recognized by $\alpha$, $\wfwa{L} \subseteq \awfa^*$ is regular.
\end{fct}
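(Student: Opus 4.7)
The plan is to show directly that $\wfwa{L}$ is the intersection of two regular languages over $\awfa^*$, and then invoke closure of regular languages under intersection.

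First I would observe that $\eval\colon \awfa^* \to M$ is, by its very construction, a monoid morphism from the free monoid $\awfa^*$ into the finite monoid $M$: it is defined letter by letter on the finite alphabet $\awfa$ and then extended multiplicatively. Since $L$ is recognized by $\alpha\colon A^* \to M$, the set $F = \alpha(L) \subseteq M$ is a subset of a finite monoid, so the language
\[
  \eval^{-1}(F) = \bigl\{w \in \awfa^* \mid \eval(w) \in \alpha(L)\bigr\}
\]
is recognized by the finite monoid $M$ via the morphism $\eval$, and is therefore regular.

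Next, let $W \subseteq \awfa^*$ denote the language of well-formed words. By Fact~\ref{fct:wfwords:wfregular1}, $W$ is regular. Unfolding the definition of $\wfwa{L}$, we have
\[
  \wfwa{L} = W \cap \eval^{-1}(\alpha(L)).
\]
Since the class of regular languages is closed under finite intersection, $\wfwa{L}$ is regular, which is what we wanted.

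There is essentially no obstacle: everything follows once one recognizes that $\eval$ is a morphism into a finite monoid and combines this with Fact~\ref{fct:wfwords:wfregular1}. The only thing worth double-checking is that our definition of $\eval$ really is compatible with concatenation of letters (it is, since it is declared to be a morphism and its values on letters are freely prescribed), and that no subtlety arises from the distinguished symbol $\square$ (it does not, as $\square$ only appears inside letters of $\awfa$ and does not interact with the monoid structure of $M$).
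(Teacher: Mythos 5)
Your proof is correct and follows exactly the argument the paper gives: you write $\wfwa{L}$ as the intersection of the (regular) language of well-formed words from Fact~\ref{fct:wfwords:wfregular1} with $\eval^{-1}(\alpha(L))$, which is regular because $\eval$ is a morphism into the finite monoid $M$. Nothing to add.
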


Finally, as explained in the outline, we extend the notation \wfwa{} to multisets, by setting $\wfwa{\Lb} = \{\wfwa{L} \mid L \in \Lb\}$ for any multiset \Lb consisting of \flangs recognized by $\alpha$.

\subsection{\texorpdfstring{Main theorem: reducing $(\Cs \circ \su)$-covering to \Cs-covering}{Main theorem: reducing (C o SU)-covering to C-covering}}

We may now state the theorem that reduces $(\Cs \circ \su)$-covering to \Cs-covering. It is restricted to classes of \flangs \Cs that are nontrivial:  that is, there should exist some alphabet $A$ such that $\Cs(A)$ contains a \flang $L$ which is neither empty nor universal ($L  \neq \emptyset$ and $L \neq A^*$).

\begin{theorem}[Reduction theorem]\label{thm:wfwords}
  Let $\alpha: A^* \to M$ be a morphism and let~\Cs be a nontrivial lattice of \flangs closed under right quotient and inverse image. Moreover, let $L$ be a \flang and let \Lb be a multiset of \flangs, all recognized by $\alpha$. Then, the following properties are equivalent:
  \begin{enumerate}
  \item $(L,\Lb)$ is $(\Cs \circ \su)$-coverable.
  \item $(L,\Lb)$ is $(\Cs \circ \su_{2|M|})$-coverable.
  \item $(\wfwa{L},\wfwa{\Lb})$ is \Cs-coverable.
  \end{enumerate}
\end{theorem}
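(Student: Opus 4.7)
The plan is to prove the chain $(2) \Rightarrow (1) \Rightarrow (3) \Rightarrow (2)$. The first arrow is immediate from $\su_{2|M|} \subseteq \su$, which yields $\Cs \circ \su_{2|M|} \subseteq \Cs \circ \su$. The substance of the theorem therefore lies in the two remaining implications: $(1) \Rightarrow (3)$ will transport a $(\Cs \circ \su)$-cover on $A^*$ across the \wfwagrd\ construction to a \Cs-cover on $\awfa^*$, while $(3) \Rightarrow (2)$ will run the transport backwards and, crucially, show that the resulting \su-partition can always be chosen at stratum $2|M|$.

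For $(1) \Rightarrow (3)$, I would fix a $(\Cs \circ \su)$-cover \Kb of $L$ separating for \Lb, and for each $K \in \Kb$, pick a witnessing decomposition $K = \bigcup_{P \in \Pb}\bigl(P \cap \tau_\Pb\inv(L_P)\bigr)$ with $\Pb$ a \su-partition of $A^*$ (depending on $K$) and $L_P \in \Cs(\Pb \times A)$. My goal is then to build a language $K^{\wfwagrd} \in \Cs(\awfa)$ so that $\{K^\wfwagrd \mid K \in \Kb\}$ is a \Cs-cover of $\wfwa{L}$ separating for $\wfwa{\Lb}$. The recipe is to pull the $L_P$ back along a natural map $\rho : \awfa^+ \to (\Pb \times A)^+$ that reconstructs the \Pb-tagging directly from a well-formed word: at each position of $w \in \awfa^+$, the \Pb-class of the corresponding $A^*$-prefix is determined by the $\alpha$-image accumulated so far, and this image is visible in the first coordinates and middle entries of $w$. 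Closure of \Cs under inverse image handles these pullbacks, and closure under right quotient lets us absorb the boundary letters carrying $\square$. Separation is then preserved because \eval is compatible with $\alpha$ by construction.

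The main obstacle is $(3) \Rightarrow (2)$, which additionally requires a combinatorial bound on the suffix depth. The core observation is that the \awfa-letter canonically attached to position $i$ of a word $u \in A^+$ depends only on a window of width at most $2|M|$ in the prefix $u[0,i-1]$: the required left idempotent $e_i$ is furnished by an idempotent factor of that prefix, and a pigeonhole argument on the $\alpha$-images of the last $|M|$ of its sub-prefixes always produces such a factor sitting inside the last $|M|$ letters; combining the windows governing the adjacent positions $i-1$ and $i$ yields the bound $2|M|$. Given this, I would take $\Pb$ to be the partition of $A^*$ into $\keqsu$-classes for $k = 2|M|$, which are languages of $\su_{2|M|}$ by Lemma~\ref{lem:eqclasses}. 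The window information then makes the assignment $u \mapsto (\text{canonical well-formed avatar of }u)$ factor through $\tau_\Pb$ via a letter-by-letter morphism $\Pb \times A \to \awfa$, so that a separating \Cs-cover of $\wfwa{L}$ pulls back, using closure of \Cs under inverse image, to \Cs-languages $L_P$ over $\Pb \times A$; these assemble into the desired $(\Cs \circ \su_{2|M|})$-cover of $L$ and separation is inherited via \eval. The technical heart is the rigorous verification of this factorization and of the matching of $\alpha$ with \eval at the boundary positions carrying $\square$, where closure under right quotient is again invoked.
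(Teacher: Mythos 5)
Your overall architecture (the chain $(2) \Rightarrow (1) \Rightarrow (3) \Rightarrow (2)$, a transport map in each nontrivial direction) coincides with the paper's, and your sketch of $(3) \Rightarrow (2)$ -- distinguished positions obtained by pigeonhole on the $\alpha$-images of the last $|M|$ sub-prefixes, a window of width $2|M|$, the canonical well-formed avatar factoring through $\tau_\Pb$ for the partition into $\eqsu{2|M|}$-classes via a letter-by-letter morphism, and right quotient to strip the final $\square$-letter -- is essentially the paper's Proposition~\ref{prop:wform:secdir}.

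The gap is in $(1) \Rightarrow (3)$. Your map $\rho$ rests on the claim that the \Pb-class of the $A^*$-prefix corresponding to a position of $w \in \awfa^+$ ``is determined by the $\alpha$-image accumulated so far.'' That is false: \Pb is a \su-partition, so the class of a prefix is determined by its \emph{suffix}, not by its image in $M$ (take $\alpha$ trivial: all accumulated images coincide while the \su-classes vary). What actually makes the tagging factor through a morphism on $\awfa^*$ is a construction you omit: first fix a \emph{single} stratum $k$ with $\Kb_A \subseteq \Cs \circ \su_k$ (possible since the cover is finite, and necessary since the transport map depends on $k$), then send each well-formed word to an explicit word $\ucroch{w} \in A^*$ built from chosen nonempty representatives in which every idempotent component $e$ is expanded to $\wcroch{e}^k$; only then is the suffix of length $k$ at each cut point -- hence the $\su_k$-class of the prefix -- pinned down by the idempotent in the current letter alone, rather than by an unbounded amount of context. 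Your sketch also leaves the components $P$ of the decomposition $K = \bigcup_{P}\bigl(P \cap \tau_\Pb\inv(L_P)\bigr)$ untreated: one must realize, inside $\Cs(\awfa)$, the condition ``$\ucroch{w} \in P$'', which for well-formed $w$ is a test on the last letter of $w$, and producing such a language in \Cs is precisely where the nontriviality hypothesis is used (pull back some $L$ with $\emptyset \neq L \neq D^*$ along a morphism erasing all non-final letters); neither inverse image of the $L_P$ nor right quotient, which you invoke here but which is in fact only needed in $(3) \Rightarrow (2)$, yields it.
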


Before we prove Theorem~\ref{thm:wfwords}, let us discuss its consequences. As announced, the theorem yields a generic reduction from $(\Cs \circ \su)$-covering to \Cs-covering for any nontrivial lattice of \flangs \Cs closed under right quotient and inverse image.

An important remark is that the proof is constructive. Since we intend to use the theorem as a reduction, this is of particular interest for the direction $(3) \Rightarrow (2)$. It is proved by exhibiting a generic construction: given as input a separating \Cs-cover of $(\wfwa{L},\wfwa{\Lb})$, we explain how to build a separating $(\Cs \circ \su_{2|M|})$-cover of $(L,\Lb)$. Thus, we actually get reductions for the two objectives associated to the covering problem: getting an algorithm that decides it, and finding a generic construction for building separating covers.

Note that one may adapt the statement of Theorem~\ref{thm:wfwords} to accommodate natural restrictions of covering, such as separation. Recall that this is the special case of inputs $(L,\Lb)$ where \Lb is a singleton. Thus, we have the following immediate corollary.

\begin{corollary} \label{thm:wfwords:sep}
  Let $\alpha: A^* \to M$ be a morphism and \Cs be a nontrivial lattice of \flangs closed under right quotient and inverse image. Moreover, let $L_1,L_2$ be two \flangs recognized by $\alpha$. Then, the following properties are equivalent:
  \begin{enumerate}
  \item $L_1$ is $(\Cs \circ \su)$-separable from $L_2$.
  \item $L_1$ is $(\Cs \circ \su_{2|M|})$-separable from $L_2$.
  \item \wfwa{L_1} is \Cs-separable from \wfwa{L_2}.
  \end{enumerate}
\end{corollary}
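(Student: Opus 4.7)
The plan is to derive Corollary~\ref{thm:wfwords:sep} as an immediate consequence of Theorem~\ref{thm:wfwords} together with Fact~\ref{fct:septocove}. Since separation is exactly the instance of covering where the multiset on the right is a singleton, I would instantiate Theorem~\ref{thm:wfwords} with $\Lb = \{L_2\}$: this yields the equivalence of three covering statements, namely that $(L_1,\{L_2\})$ is $(\Cs \circ \su)$-coverable, that $(L_1,\{L_2\})$ is $(\Cs \circ \su_{2|M|})$-coverable, and that $(\wfwa{L_1},\{\wfwa{L_2}\})$ is \Cs-coverable. Using $\wfwa{\{L_2\}} = \{\wfwa{L_2}\}$, these are exactly the covering versions of the three separation assertions (1), (2), (3) to be proved. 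It then suffices to translate each side via Fact~\ref{fct:septocove}, which identifies separation with covering against a singleton, provided the underlying class is a lattice.

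The only real content beyond quoting these results is therefore verifying that $\Cs \circ \su$ and $\Cs \circ \su_{2|M|}$ are lattices, so that Fact~\ref{fct:septocove} is applicable on all three lines. This is the property announced (without proof) in the remark following the definition of enrichment. I would check it directly: given two languages $K_1, K_2 \in (\Cs \circ \Ds)(A)$ witnessed by \Ds-partitions $\Pb_1, \Pb_2$ of $A^*$, the common refinement
\[
  \Pb = \bigl\{P_1 \cap P_2 \mid P_1 \in \Pb_1,\ P_2 \in \Pb_2,\ P_1 \cap P_2 \neq \emptyset\bigr\}
\]
is again a \Ds-partition since \Ds is a Boolean algebra (Proposition~\ref{prop:suvar} for $\su$, and the same for each stratum $\su_k$). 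The natural alphabetic morphisms $\pi_i : (\Pb \times A)^* \to (\Pb_i \times A)^*$ mapping $(P_1 \cap P_2, a) \mapsto (P_i, a)$ allow pulling back the \Cs-witnesses of $K_1, K_2$ to the common alphabet $\Pb \times A$, using closure of \Cs under alphabetic inverse image (which follows from closure under inverse image). Closure of \Cs under finite union and intersection on the alphabet $\Pb \times A$ then delivers union and intersection inside $\Cs \circ \Ds$, including the empty union (using $\emptyset \in \Cs$) and the empty intersection (using $A^* \in \Cs$ and the trivial one-block partition).

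The only obstacle I anticipate is to do this refinement carefully for the finite stratum $\su_{2|M|}$: one must ensure that the refinement of two $\su_{2|M|}$-partitions remains inside $\su_{2|M|}$. This is immediate from Lemma~\ref{lem:eqclasses}, which says that the languages in $\su_k(A)$ are exactly unions of \keqsu-classes, and from the fact that the \keqsu-equivalence relation is fixed (so taking intersections of unions of classes still yields unions of classes). With lattice-ness of all three classes established, the corollary follows by applying Fact~\ref{fct:septocove} line by line to the three equivalent covering statements delivered by Theorem~\ref{thm:wfwords}.
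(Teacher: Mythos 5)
Your proposal is correct and follows the same route as the paper, which obtains the corollary as an immediate specialization of Theorem~\ref{thm:wfwords} to singleton multisets via Fact~\ref{fct:septocove}. The additional verification that $\Cs \circ \su$ and $\Cs \circ \su_{2|M|}$ are lattices (via common refinement of the partitions and pullback along the alphabetic projections) is precisely the detail the paper announces in a remark and leaves to the reader, and your argument for it is sound.
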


\begin{remark}\label{rem:whatneedstobedone}
  In practice, applying Theorem~\ref{thm:wfwords} to obtain an actual covering or separation algorithm for a given class of \flangs requires clearing the following preliminary steps:
  \begin{enumerate}
  \item Prove that this class is the \su-enrichment $\Cs \circ \su$ of some lattice of \flangs~\Cs closed under right quotient and inverse image.
  \item Obtain a covering or separation algorithm for \Cs.
  \end{enumerate}
  The main point here is that it is usually much simpler to achieve these steps than to obtain directly a covering algorithm for the class. For all examples we shall present, we only take care of the first item and obtain the second from previously known results.
\end{remark}

We shall apply Theorem~\ref{thm:wfwords} to obtain covering and separation algorithms for concrete classes of languages in the next two sections.

We devote the rest of this section to proving Theorem~\ref{thm:wfwords}. We keep our notation: $\alpha: A^* \to M$ is a monoid morphism, $S$ is the semigroup $\alpha(A^+)$, and $S^1=S\cup\{1_M\}$. Recall that the associated alphabet of well-formed \fwords is:
\[
  \awfa = (E(S) \cup \{\square\}) ~~\times~~ S^1 ~~\times~~ (E(S) \cup \{\square\}).
\]
Consider a nontrivial lattice of \flangs \Cs closed under right quotient and inverse image. Our objective is to show that when $L$ and \Lb are respectively a \flang and a multiset of \flangs recognized by $\alpha$, the following properties are equivalent:
\begin{enumerate}
\item $(L,\Lb)$ is $(\Cs \circ \su)$-coverable.
\item $(L,\Lb)$ is $(\Cs \circ \su_{2|M|})$-coverable.
\item $(\wfwa{L},\wfwa{\Lb})$ is \Cs-coverable.
\end{enumerate}

We prove that $(1) \Rightarrow (3) \Rightarrow (2) \Rightarrow (1)$. Observe that the direction $(2) \Rightarrow (1)$ is trivial since $\Cs \circ \su_{2|M|} \subseteq \Cs \circ \su$. Thus, we concentrate on proving that $(1) \Rightarrow (3)$ and $(3) \Rightarrow (2)$.

\subsection{\texorpdfstring{From $(\Cs \circ \su)$-covering to \Cs-covering}{From (C o SU)-covering to C-covering}}

We start with the implication $(1) \Rightarrow (3)$ in Theorem~\ref{thm:wfwords}. The argument is based on the following proposition.

\begin{proposition} \label{prop:wform:firstdir}
  For any $k \geq 1$, there exists a map $\gamma: \awfa^* \to A^*$ satisfying the two following properties:
  \begin{enumerate}
  \item For any $L \subseteq A^*$ recognized by $\alpha$ and any well-formed \fword $w \in \awfa^+$, we have $w \in \wfwa{L}$ if and only if $\ucroch{w} \in L$.
  \item For any \flang $K \in (\Cs \circ \su_k)(A)$, there exists $H_K \in \Cs(\awfa)$ such that for any well-formed \fword $w \in \awfa^+$, we have $w \in H_K$ if and only if $\ucroch{w} \in K$.
  \end{enumerate}
\end{proposition}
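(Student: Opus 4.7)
The plan is to define $\gamma$ as a monoid morphism and then verify the two properties. For each $s \in S^1$, I fix a word $v_s \in A^*$ with $\alpha(v_s) = s$, choosing $v_{1_M} = \varepsilon$. For each $e \in E(S)$, I fix a word $u_e \in A^+$ with $\alpha(u_e) = e$; since $\alpha(u_e^N) = e^N = e$ for all $N \geq 1$, I may replace $u_e$ by a high enough power so that $|u_e| \geq k$. Then I let $\gamma: \awfa^* \to A^*$ be the morphism determined by $\gamma((\square, s, f)) = v_s$ and $\gamma((e, s, f)) = u_e v_s$ for $e \in E(S)$ (and any $f$).

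Property~(1) should then follow from a direct computation. For a well-formed $w = (\square, s_1, e_2)(e_2, s_2, e_3) \cdots (e_n, s_n, \square)$, one gets
\[
\gamma(w) = v_{s_1} u_{e_2} v_{s_2} u_{e_3} \cdots u_{e_n} v_{s_n},
\]
so $\alpha(\gamma(w)) = s_1 e_2 s_2 e_3 \cdots e_n s_n$ after collapsing $e_i^2 = e_i$, which matches $\eval(w)$ by definition. Since $L$ is recognized by $\alpha$, we have $L = \alpha\inv(\alpha(L))$, whence $\gamma(w) \in L \Leftrightarrow \eval(w) \in \alpha(L) \Leftrightarrow w \in \wfwa{L}$ (as $w$ is already well-formed).

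Property~(2) is the substantive part. Given $K = \bigcup_{P \in \Pb}(P \cap \tau_\Pb\inv(L_P))$ for a $\su_k$-partition $\Pb$ of $A^*$ and $L_P \in \Cs(\Pb \times A)$, the crucial observation enabled by $|u_e| \geq k$ is the following: for any well-formed $w = b_1 \cdots b_n$ and any $1 \leq i \leq n$, the $\keqsu$-class of the prefix $\gamma(b_1 \cdots b_i)$ is determined by $b_i$ alone. Indeed, for $i \geq 2$ the last $k$ letters of $\gamma(b_1 \cdots b_i)$ lie entirely within $u_{e_i} v_{s_i}$, so the class depends only on $(e_i, s_i)$; for $i = 1$, $\gamma(b_1) = v_{s_1}$ depends on $s_1$. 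Consequently, iterating Lemma~\ref{lem:deltadef} expresses $\tau_\Pb(\gamma(w))$ as a concatenation $T_1 T_2 \cdots T_n$ in which each block $T_i$ is a fixed function of the pair $(b_{i-1}, b_i)$ (with $b_0$ a dummy), so that $\tau_\Pb \circ \gamma$ is a \emph{2-local} function on well-formed inputs.

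To build $H_K$, I would encode this 2-locality by lifting $w$ to a word over the auxiliary pair alphabet $\widehat{A} = (\awfa \cup \{\star\}) \times \awfa$ and defining a morphism $\widehat{\phi}: \widehat{A}^* \to (\Pb \times A)^*$ by $\widehat{\phi}((b', b)) = T(b', b)$, so that $\widehat{\phi}$ applied to the canonical lift of any well-formed $w$ yields $\tau_\Pb(\gamma(w))$. Closure of $\Cs$ under inverse image then gives $\widehat{\phi}\inv(L_P) \in \Cs(\widehat{A})$. The main technical hurdle will be converting this language over $\widehat{A}$ back to one in $\Cs(\awfa)$, since alphabetic images are not among our assumed closure properties: I intend to handle this by a finite case-split on the possible ``previous letter'' $b'$, using right quotient to strip the last letter of $w$ and inverse images of auxiliary morphisms to reconstruct the pair-alphabet structure inside $\Cs(\awfa)$. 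Taking unions (as $\Cs$ is a lattice) over all ending pairs and over $P \in \Pb$ should then yield the required $H_K \in \Cs(\awfa)$.
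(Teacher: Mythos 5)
Your definition of $\gamma$ and your verification of Item~(1) are fine, but Item~(2) has a genuine gap that traces back to the design of $\gamma$. Because you attach the idempotent block $u_e$ only to the \emph{front} of $\gamma((e,s,f))$ and ignore the third component $f$, the \keqsu-class of the prefix $\gamma(b_1\cdots b_{i-1})$ is determined by $(e_{i-1},s_{i-1})$, i.e., by the \emph{previous} letter $b_{i-1}$ --- information that $b_i$ does not carry (well-formedness only transmits the third component $e_i$ of $b_{i-1}$ to $b_i$). So, as you yourself observe, $\tau_\Pb\circ\gamma$ is only $2$-local on well-formed inputs, and the map $w\mapsto\widehat{w}$ sending $w$ to its canonical lift over the pair alphabet is not a morphism. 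The closure properties at your disposal (lattice operations, right quotient, inverse image under morphisms) do not let you pull $\widehat{\phi}\inv(L_P)$ back through this lift: right quotient only strips letters at the right end and says nothing about the pairing of \emph{interior} consecutive positions, and an inverse morphic image cannot make the image of position $i$ depend on the letter at position $i-1$. Indeed, closing a class under preimages of such sliding-window relabelings is essentially what \su-enrichment adds, so one should not expect \Cs to absorb it. The fix is to change $\gamma$: set $\gamma((e,s,f))=\wcroch{e}^{k}\wcroch{s}\wcroch{f}^{k}$ (with the obvious adjustments when $e$ or $f$ is $\square$), so that by well-formedness the image of $b_1\cdots b_{i-1}$ \emph{ends} with $\wcroch{e_i}^{k}$, a word of length at least $k$ that is readable off the first component of $b_i$ itself. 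Then each block of $\tau_\Pb(\ucroch{w})$ is a function of a single letter, the required $\beta:\awfa^*\to(\Pb\times A)^*$ is a genuine morphism with $\beta(b)=\delta_\Pb(\wcroch{e}^{k},\gamma(b))$ for $b=(e,s,f)$, and $H_K=\beta\inv(L_P)$ lies in $\Cs(\awfa)$ by closure under inverse image alone.

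A second, smaller omission: for the component $P\in\Pb$ of the union defining $K$ you still need some $H_P\in\Cs(\awfa)$ with $w\in H_P$ iff $\ucroch{w}\in P$, which amounts to testing whether the last letter of $w$ lies in a subalphabet $\frB\subseteq\awfa$. The language $\awfa^*\frB$ need not belong to $\Cs(\awfa)$ for free; this is exactly where the hypothesis that \Cs is \emph{nontrivial} enters (one takes the inverse image of a nonempty, non-universal $L\in\Cs(D)$ under a morphism erasing all non-final-form letters and sending final-form letters to one of two words according to membership in $\frB$). Your proposal never invokes nontriviality, so this case is unaccounted for.
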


Before proving Proposition~\ref{prop:wform:firstdir}, we use it to show $(1) \Rightarrow (3)$ in Theorem~\ref{thm:wfwords}. Consider a \flang $L$ and a multiset of \flangs \Lb, all recognized by~$\alpha$. Assume that $(L,\Lb)$ is $(\Cs \circ \su)$-coverable. We have to prove that $(\wfwa{L},\wfwa{\Lb})$ is \Cs-coverable. By hypothesis on $(L,\Lb)$, we have a separating $(\Cs \circ \su)$-cover $\Kb_{A}$ for $(L,\Lb)$. We use it together with Proposition~\ref{prop:wform:firstdir} to construct a separating \Cs-cover $\Kb_{\awfa}$ for $(\wfwa{L},\wfwa{\Lb})$.

Since $\Kb_{A}$ contains finitely many \flangs all belonging to $\Cs \circ \su=\bigcup_{k\in\nat}\Cs\circ\su_k$ (by Remark~\ref{rem:stratification}), there is some $k \geq 1$ such that $\Kb_A\subseteq\Cs \circ \su_k$. Together with Proposition~\ref{prop:wform:firstdir}, this integer $k$ defines a map $\gamma: \awfa^* \to A^*$. In particular, for any $K \in \Kb_{A}$, Item~(2) of Proposition~\ref{prop:wform:firstdir} yields a \flang $H_K \in \Cs(\awfa)$. We define
\[
  \Kb_{\awfa} = \{H_K \mid K \in \Kb_{A}\}.
\]
To conclude the proof, we show that $\Kb_{\awfa}$ is a separating \Cs-cover for $(\wfwa{L},\wfwa{\Lb})$.

We first prove that $\Kb_{\awfa}$ is a \Cs-cover of $\wfwa{L}$. Let $w \in \wfwa{L}$, we have to find $H \in \Kb_{\awfa}$ such that $w \in H$. Since $w$ is well-formed, we know that $\ucroch{w} \in L$ by the first item in the proposition. Since $\Kb_{A}$ is a cover of $L$, we can find $K \in \Kb_{A}$ such that $\ucroch{w} \in K$. It then follows from the second item in the proposition that $w \in H_K$, which belongs to $\Kb_{\awfa}$ by definition. We conclude that $\Kb_{\awfa}$ is a cover of $\wfwa{L}$. Moreover, it is a \Cs-cover since all \flangs in $\Kb_{\awfa}$ belong to \Cs by Item~(2) of Proposition~\ref{prop:wform:firstdir}.

It remains to prove that $\Kb_{\awfa}$ is separating. Given any $H \in \Kb_{\awfa}$, we have to find $\wfwa{L'} \in \wfwa{\Lb}$ such that $H \cap \wfwa{L'} = \emptyset$. By definition, $H = H_K$ for some $K \in \Kb_{A}$. Moreover, since $\Kb_{A}$ is a separating cover of $(L,\Lb)$, there exists $L' \in \Lb$ such that $K \cap L' = \emptyset$. This entails that $H_K \cap \wfwa{L'} = \emptyset$. Indeed, otherwise, we would have $w \in H_K \cap \wfwa{L'}$ which would imply that $\ucroch{w} \in K \cap L'$ by the two items of Proposition~\ref{prop:wform:firstdir}, a contradiction. This terminates the proof of $(1) \Rightarrow (3)$ in Theorem~\ref{thm:wfwords}.

It now remains to prove Proposition~\ref{prop:wform:firstdir}.

\smallskip
\noindent
{\bf Proof of Proposition~\ref{prop:wform:firstdir}: definition of $\gamma$.} Fix a natural number $k$. We start by defining the map $\gamma:  \awfa^* \to A^*$ and we then show that it satisfies the desired properties. It turns out that $\gamma$ is a morphism. Hence, it suffices describe the image of letters in \awfa.

To any element $s \in \alpha(A^*)$, we associate an arbitrarily chosen \fword $\wcroch{s} \in A^*$ such that $\alpha(\wcroch{s}) = s$.  When $s \in S = \alpha(A^+)$, we require $\wcroch{s}$ to be {\bf nonempty} (note that this implies that $\wcroch{e}\ne\varepsilon$ when $e \in E(S)$, which is crucial in the proof). We are now ready to define our morphism $\gamma: \awfa^* \to A^*$, by defining the image of all four kinds of letters in~\awfa. Given  $s \in S$ and $e,f \in E(S)$, we define,
\[
  \left\{
    \begin{array}{lll}
      \gamma((e,s,f))             & = & \wcroch{e}^{k}\wcroch{s}\wcroch{f}^{k}, \\
      \gamma((\square,s,f))       & = & \wcroch{s}\wcroch{f}^{k},                \\
      \gamma((e,s,\square))       & = & \wcroch{e}^{k}\wcroch{s},                \\
      \gamma((\square,s,\square)) & = & \wcroch{s}.
    \end{array}
  \right.
\]
Now that we defined the morphism $\gamma: \awfa^* \to A^*$, it remains to prove that it satisfies the two properties of Proposition~\ref{prop:wform:firstdir}. We start with the first one, which is simpler.

\smallskip
\noindent
{\bf Proof of Proposition~\ref{prop:wform:firstdir}: first item.} Consider a \flang $L \subseteq A^*$ recognized by~$\alpha$. We have to show that $w \in \wfwa{L}$ iff $\ucroch{w} \in L$, for any well-formed \fword $w \in \awfa^+$.

Since $w$ is well-formed, $w \in \wfwa{L}$ if and only if $\eval(w) \in \alpha(L)$. Moreover, since~$\alpha$ recognizes $L$, $\ucroch{w} \in L$ if and only if $\alpha(\ucroch{w}) \in \alpha(L)$. Hence, it suffices to show $\eval(w) = \alpha(\ucroch{w})$. By definition,
\[
  \begin{array}{cll}
    w      & = & (\square,s_0,e_1) \cdot (e_1,s_1,e_2)\cdots (e_{n-1},s_{n-1},e_n) \cdot (e_n,s_n,\square), \\[1ex]
    \ucroch{w} & = & \wcroch{s_0}\wcroch{e_1}^{2k}\wcroch{s_1}\wcroch{e_2}^{2k} \cdots \wcroch{e_{n-1}}^{2k}\wcroch{s_{n-1}}\wcroch{e_n}^{2k}\wcroch{s_n}.
  \end{array}
\]
Hence, we have:
\[
  \begin{array}{cll}
    \eval(w)          & =  & s_0e_1e_1s_1e_2 \cdots e_{n-1}s_{n-1}e_ne_ns_n, \\[1ex]
    \alpha(\ucroch{w}) & = & s_0(e_1)^{2k}s_1(e_2)^{2k}\cdots (e_{n-1})^{2k}s_{n-1}(e_n)^{2k}s_n.
  \end{array}
\]
Since the $e_i \in E(S)$ are idempotents, we obtain indeed $\eval(w) = \alpha(\ucroch{w})$.

\smallskip
\noindent
{\bf Proof of Proposition~\ref{prop:wform:firstdir}: second item.} For the proof, we fix some arbitrary \flang $K \in (\Cs \circ \su_k)(A)$. We have to build a \flang $H_K \in \Cs(\awfa)$ such that:
\begin{equation} \label{eq:wform:goal1}
  \text{For any well-formed \fword $w \in \awfa^+$,} \quad \text{$w \in H_K$ if and only if $\ucroch{w} \in K$}.
\end{equation}
This is more involved. Recall that by definition of $\Cs \circ \su_k$, we have an $\su_k$-partition \Pb of $A^*$ and languages $L_P \in \Cs(\Pb \times A)$ for all $P \in \Pb$ such that:
\begin{equation}
  \label{eq:k-case2}
  K = \bigcup_{P \in \Pb} \left(P \cap \tau_\Pb\inv(L_P)\right).
\end{equation}

Since $\Cs(\awfa)$ is a lattice, it suffices to treat only two particular cases:
\begin{itemize}
\item $K \in \Pb$, and
\item $K = \tau_\Pb\inv(L_P)$ for some $P \in \Pb$.
\end{itemize}
Indeed, if for each $P\in\Pb$, we are able to exhibit $H^{}_P,H'_P\in \Cs(\awfa)$ such that for any well-formed \fword $w \in \awfa^+$, we have $w \in H_P$ iff $\ucroch{w} \in P$ and $w \in H'_P$  iff $\ucroch{w} \in \tau_\Pb\inv(L_P)$, then for $K$ given by~\eqref{eq:k-case2}, one can choose $K_H=\bigcup_{P \in \Pb} \left(H^{}_P\cap H'_P\right)$. We therefore treat these two cases.

\smallskip
\noindent
\textbf{Case~1}: $K \in \Pb$. Therefore, $K \in \su_k(A)$, since \Pb is an $\su_k$-partition of $A^*$. In this case, the argument is based on the following fact, which follows from the definition of $\gamma$.

\begin{fct} \label{fct:wform:prefixes}
  Consider two well-formed \fwords $w,w' \in \awfa^+$ with the same rightmost letter. Then, $\ucroch{w} \in K$ if and only if $\ucroch{w'} \in K$.
\end{fct}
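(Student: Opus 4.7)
The plan is to combine the explicit description of $\su_k$-classes (Lemmas~\ref{lem:canoeq} and~\ref{lem:eqclasses}) with the padding built into the morphism $\gamma$, in particular the requirement that the representative $\wcroch{e}$ is \emph{nonempty} whenever $e\in E(S)$. Since $K\in\Pb\subseteq\su_k(A)$, Lemma~\ref{lem:canoeq} reduces the claim to showing that $\ucroch{w}$ and $\ucroch{w'}$ lie in the same $\keqsu$-class, and Lemma~\ref{lem:eqclasses} makes these classes completely explicit: each one is either $A^*v$ with $|v|=k$ or a singleton $\{v\}$ with $|v|\leq k-1$. So I only need to show that $\ucroch{w}$ and $\ucroch{w'}$ either coincide outright, or both have length at least $k$ and share the same length-$k$ suffix.

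The heart of the argument is the observation that the image under $\gamma$ of the rightmost letter of a well-formed \fword is itself long enough to carry the relevant suffix. Indeed, a well-formed \fword of length at least $2$ ends with a letter of the form $(e,s,\square)$ with $e\in E(S)$, and $\gamma((e,s,\square))=\wcroch{e}^{k}\wcroch{s}$ has length $\geq k$ because $\wcroch{e}$ is nonempty by construction. Consequently, the last $k$ positions of $\ucroch{w}$ lie entirely within the image of the rightmost letter of $w$, and so depend only on that letter; the same is true for $\ucroch{w'}$.

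A short case analysis then concludes. If the common rightmost letter of $w$ and $w'$ is of the form $(\square,s,\square)$, then the definition of well-formed \fword forces $w=w'=(\square,s,\square)$ and the claim is trivial. Otherwise the common rightmost letter is $(e,s,\square)$ with $e\in E(S)$; by the preceding paragraph, both $\ucroch{w}$ and $\ucroch{w'}$ have length $\geq k$ (so neither sits in a singleton $\keqsu$-class) and they share the same length-$k$ suffix $v$. Hence they both belong to $A^*v$ and to no other set $A^*v'$ with $|v'|=k$, so they lie in the same $\keqsu$-class and therefore $\ucroch{w}\in K\Leftrightarrow\ucroch{w'}\in K$.

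The only delicate point, and the reason the exponent $k$ appears at all in the definition of $\gamma$, is the need to guarantee that the last $k$ positions of $\ucroch{w}$ are fully contained in the image of the rightmost letter; this is precisely what $|\wcroch{e}^{k}|\geq k$, itself a consequence of the nonemptiness of $\wcroch{e}$ for $e\in E(S)$, delivers. Without this clause, earlier letters of $w$ could leak into the relevant suffix and the equivalence would fail.
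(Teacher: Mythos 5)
Your proof is correct and follows essentially the same route as the paper's: membership in $K\in\su_k(A)$ is governed by the suffixes of length at most $k$, and these are determined by the rightmost letter of a well-formed word because $\gamma$ of that letter either has length at least $k$ (when its first component is an idempotent, thanks to $\wcroch{e}\neq\varepsilon$) or the word is the single letter $(\square,s,\square)$. You merely make explicit, via Lemmas~\ref{lem:canoeq} and~\ref{lem:eqclasses}, the case analysis that the paper leaves as "immediate."
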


\begin{proof}
  Since $K \in \su_k(A)$, it follows from the definition of $\su_k$ that given $u \in A^*$, whether $u \in K$ depends only on the suffixes of length at most $k$ in $u$. Moreover, by definition of the map $\gamma$, given a \emph{well-formed} \fword $w \in \awfa^+$, the suffixes of length at most $k$ in $\ucroch{w}$ depend only on the rightmost letter in $w$. The fact is then immediate.
\end{proof}

In view of Fact~\ref{fct:wform:prefixes}, there exist a sub-alphabet $\frB \subseteq \awfa$ such that for any well-formed \fword $w \in \awfa^+$, we have $\ucroch{w} \in K$ if and only if the rightmost letter in $w$ belongs to \frB. Thus, it suffices to define a \flang $H_K \in \Cs(\awfa)$ such that:
\[
  \text{For any well-formed \fword $w \in \awfa^+$,} \quad \text{$w \in H_K$ if and only if $w \in \awfa^* \cdot \frB$}.
\]
It will then be immediate that this \flang $H_K$ satisfies~\eqref{eq:wform:goal1} as desired. It remains to construct $H_K \in \Cs(\awfa)$ satisfying the above property.

For ensuring the condition $H_K\in\Cs(\awfa)$, we use the fact that \Cs is nontrivial. Indeed, this yields an alphabet $D$ such that $\Cs(D)$ contains some \flang $L$ satisfying $L \neq \emptyset$ and $L \neq D^*$. In particular, we have two \fwords $u,v \in D^*$ such that $u \in L$ and $v \not\in L$. Consider the morphism $\eta: \awfa^* \to D^*$ defined as follows. For any letter $b \in \awfa$:
\begin{itemize}
\item If $b$ is of the form $(e,s,\square)$ with $s \in S^1$ and $e \in E(S) \cup \{\square\}$ (\emph{i.e.}, $b$ is used as a rightmost letter in some well-formed \fword), then we define,
  \[
    \eta(b) = \left\{
      \begin{array}{ll}
        u & \text{if $b \in \frB$},   \\
        v & \text{if $b \not\in \frB$}.
      \end{array}
    \right.
  \]
\item Otherwise, $\eta(b) = \varepsilon$.
\end{itemize}
We define $H_K = \eta^{-1}(L)$. Clearly, $H_K \in \Cs(\awfa)$ since \Cs is closed under inverse image. It remains to show that it satisfies the desired property. Let $w \in \awfa^+$ be a well-formed \fword. By definition of well-formed \fwords, $w = w'b$ where $b$ is the unique letter in $w$ of the form $(e,s,\square)$ with $s \in S^1$ and $e \in E(S) \cup \{\square\}$. Thus, it follows that $\eta(w) = u \in L$ if $b \in \frB$ and $\eta(w) = v \not\in L$ otherwise. This exactly says that $w \in H_K$ if and only if $w \in \awfa^*\cdot\frB$, which concludes the proof of this case.

\smallskip
\noindent
\textbf{Case~2}: $K = \tau_\Pb\inv(L_P)$ for some $P \in \Pb$. In this case, the construction of $H_K$ is based on the following lemma.

\begin{lemma} \label{lem:wfwords:wwordsfinal}
  There exists a morphism $\beta: \awfa^* \to (\Pb \times A)^*$ such that for any well-formed \fword $w \in \awfa^+$, we have $\tau_\Pb(\ucroch{w}) = \beta(w)$.
\end{lemma}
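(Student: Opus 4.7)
The plan is to define $\beta$ letter-by-letter in a way that anticipates the left context that each letter will see inside a well-formed word, and then verify by an iterated application of Lemma~\ref{lem:deltadef} that the concatenation telescopes to $\tau_\Pb(\ucroch{w})$. The key insight is that although $\tau_\Pb$ is not a morphism, the fact that $\Pb$ is an $\su_k$-partition means that the tag attached to a position depends on the prefix only through its last $k$ letters; and the structure of well-formed words ensures that, between $\gamma(b_{i-1})$ and $\gamma(b_i)$, this last-$k$-letter information is entirely determined by the left idempotent coordinate recorded in $b_i$.

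Concretely, I would define the morphism $\beta$ on each letter $b \in \awfa$ as follows. If $b = (\square, s, f)$ or $b = (\square, s, \square)$, set $\beta(b) = \tau_\Pb(\ucroch{b})$. If $b = (e, s, f)$ or $b = (e, s, \square)$ with $e \in E(S)$, set $\beta(b) = \delta_\Pb(\wcroch{e}^k, \ucroch{b})$. This defines $\beta$ on the full alphabet and extends uniquely to a morphism $\beta: \awfa^* \to (\Pb \times A)^*$.

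It then remains to check, for a well-formed \fword $w = b_1 \cdots b_n$, that $\beta(w) = \tau_\Pb(\ucroch{w})$. By iterating Lemma~\ref{lem:deltadef}, one has
\[
  \tau_\Pb(\ucroch{w}) \;=\; \delta_\Pb(\varepsilon, \ucroch{b_1}) \cdot \prod_{i=2}^{n} \delta_\Pb\bigl(\ucroch{b_1}\cdots\ucroch{b_{i-1}},\; \ucroch{b_i}\bigr),
\]
so it suffices to check, factor by factor, that $\beta(b_i)$ coincides with the corresponding $\delta_\Pb$-factor. For $i = 1$, this is just the definition, since $\delta_\Pb(\varepsilon, \ucroch{b_1}) = \tau_\Pb(\ucroch{b_1})$ and $b_1$ has the form $(\square, \cdot, \cdot)$ by well-formedness. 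For $i \geq 2$, the well-formedness condition forces $b_{i-1}$ to end with third coordinate $e_i$ (the left idempotent of $b_i$), and the definition of $\gamma$ then ensures that $\ucroch{b_{i-1}}$ ends with the block $\wcroch{e_i}^k$, which has length at least $k$ because $\wcroch{e_i} \neq \varepsilon$. Consequently the last $k$ letters of the prefix $\ucroch{b_1}\cdots\ucroch{b_{i-1}}$ coincide with the last $k$ letters of $\wcroch{e_i}^k$; since every class of $\Pb$ is a union of $\keqsu$-classes (Lemma~\ref{lem:canoeq}), the value of $\delta_\Pb(u, v)$ depends on $u$ only through its last $k$ letters, and we conclude that $\delta_\Pb(\ucroch{b_1}\cdots\ucroch{b_{i-1}}, \ucroch{b_i}) = \delta_\Pb(\wcroch{e_i}^k, \ucroch{b_i}) = \beta(b_i)$.

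The only real obstacle is ensuring this last-$k$-letter independence argument is watertight, and this is exactly where the earlier design choices pay off: it is crucial that $\wcroch{e} \neq \varepsilon$ for every $e \in E(S)$ (so that $\wcroch{e}^k$ has length at least $k$) and that well-formedness forces the matching condition $f_{i-1} = e_i \in E(S)$ between consecutive letters, which together guarantee that the boundary between $\ucroch{b_{i-1}}$ and $\ucroch{b_i}$ is always ``padded'' by at least $k$ copies of a fixed non-empty block encoded in $b_i$ itself.
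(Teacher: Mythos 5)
Your proposal is correct and follows essentially the same route as the paper: you define $\beta$ letterwise in exactly the same way (via $\tau_\Pb$ for $\square$-initial letters and $\delta_\Pb(\wcroch{e}^k,\cdot)$ otherwise) and verify the claim by iterating Lemma~\ref{lem:deltadef}, replacing $\ucroch{b_1}\cdots\ucroch{b_{i-1}}$ by $\wcroch{e_i}^k$ in the first argument of $\delta_\Pb$ thanks to the nonemptiness of $\wcroch{e_i}$ and the matching idempotents in consecutive letters. Your phrasing as a telescoping product rather than an explicit induction is a cosmetic difference, and your remark that $\delta_\Pb(u,v)$ depends on $u$ only through its last $k$ letters is in fact a slightly more explicit justification of the substitution step than the paper's.
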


Again, Before proving Lemma~\ref{lem:wfwords:wwordsfinal}, we use it to construct $H_K$ and finish the proof of Proposition~\ref{prop:wform:firstdir}. We have a \flang $L_P \in \Cs(\Pb \times A)$ such that $K = \tau_\Pb\inv(L_P)$. We define
\[
  H_K = \beta^{-1}(L_P).
\]
Since \Cs is closed inverse image, it is immediate from the definition that $H_K \in \Cs(\awfa)$. We now prove that $H_K$ satisfies~\eqref{eq:wform:goal1}: for any well-formed \fword $w \in \awfa^+$, $w \in H_K$ if and only if $\ucroch{w} \in K$. We use Lemma~\ref{lem:wfwords:wwordsfinal}. Given a well-formed \fword $w \in \awfa^+$, we have $w \in H_K$ if and only if $\beta(w) \in L_P$. The lemma then says that this is equivalent to $\tau_\Pb(\ucroch{w}) \in L_P$, \emph{i.e.}, to $\ucroch{w} \in K$ by hypothesis on $K$.

\smallskip

It remains to prove Lemma~\ref{lem:wfwords:wwordsfinal}. Let us first define the morphism $\beta: \awfa^* \to (\Pb \times A)^*$. We use the map $\delta_\Pb: A^* \times A^* \to (\Pb \times A)^*$ that we defined at the beginning of the section.

Given any letter $(e,s,f) \in \awfa$, we define its image $\beta((e,s,f))$. There are two cases depending on whether $e = \square$ or $e \in E(S)$.

\begin{enumerate}
\item If $e = \square$, we define $\beta((\square,s,f)) = \tau_\Pb(\gamma((\square,s,f)))$.
\item If $e \in E(S)$, we define $\beta((e,s,f)) = \delta_\Pb(\wcroch{e}^k,\gamma((e,s,f)))$.
\end{enumerate}

It remains to show that for any well-formed \fword $w \in \awfa^+$, we have $\tau_\Pb(\ucroch{w}) = \beta(w)$. We have $w = b_1 \cdots b_n$ with $b_1,\dots,b_n \in \awfa$. We show that for any $\ell \leq n$, we have:
\[
  \tau_\Pb(\ucroch{b_1 \cdots b_\ell}) = \beta(b_1 \cdots b_\ell).
\]
We shall argue by induction on~$\ell$. The case $\ell = n$, will then yield the desired result. In the base case $\ell = 1$, since $w$ is well-formed, we know that $b_1 = (\square,s,f)$ for some $s \in S^1$ and $f \in E(S) \cup \{\square\}$. Thus, it is immediate by definition of $\beta$ that we have $\tau_\Pb(\ucroch{b_1}) = \beta(b_1)$. Assume now that $\ell \geq 2$. Since $\beta$ is a morphism, we have
\[
  \beta(b_1 \cdots b_\ell) = \beta(b_1 \cdots b_{\ell-1}) \cdot \beta(b_\ell).
\]
It then follows from the induction hypothesis that,
\[
  \beta(b_1 \cdots b_\ell) = \tau_\Pb(\ucroch{b_1 \cdots b_{\ell-1}}) \cdot \beta(b_\ell).
\]
Since $w$ is well-formed, we know that there exist $f,g \in E(S) \cup \{\square\}$, $s,t \in S^1$ and $e \in E(S)$ such that $b_{\ell-1} = (g,t,e)$ and $b_{\ell} = (e,s,f)$. Hence, $\beta(b_\ell) = \delta_\Pb(\wcroch{e}^k,\gamma(b_\ell))$. Moreover, since $b_{\ell-1} = (g,t,e)$, it follows from the definition of $\gamma$ that $\wcroch{e}^k$ is a suffix of $\ucroch{b_1 \cdots b_{\ell-1}}$. Thus, since $\Pb$ is a $\su_k$-partition of $A^*$ and $\wcroch{e}^k$ has length at least $k$ (this is where $\wcroch{e}$ being nonempty is crucial), we have $\ppart{\wcroch{e}^k}=\ppart{\ucroch{b_1 \cdots b_{\ell-1}}}$. Hence,
\[
  \beta(b_\ell) = \delta_\Pb(\wcroch{e}^k,\gamma(b_\ell)) = \delta_\Pb(\ucroch{b_1 \cdots b_{\ell-1}},\gamma(b_\ell)).
\]
Altogether, this yields,
\[
  \beta(b_1 \cdots b_\ell) = \tau_\Pb(\ucroch{b_1 \cdots b_{\ell-1}}) \cdot \delta_\Pb(\ucroch{b_1 \cdots b_{\ell-1}},\gamma(b_\ell)).
\]
By Lemma~\ref{lem:deltadef}, this says that $\beta(b_1 \cdots b_\ell) = \tau_\Pb(\ucroch{b_1 \cdots b_{\ell}})$, which concludes the proof.

\subsection{\texorpdfstring{From \Cs-covering to $(\Cs \circ \su)$-covering}{From C-covering to (C o SU)-covering}}

We now turn to the direction $(3) \Rightarrow (2)$ in Theorem~\ref{thm:wfwords}. The argument is based on the following proposition which states a generic property of the morphism $\alpha: A^* \to M$.

\begin{proposition} \label{prop:wform:secdir}
  There exists a map $\eta: A^* \to \awfa^*$ such that:
  \begin{enumerate}
  \item For any $L \subseteq A^*$ recognized by $\alpha$, we have $L = \eta\inv(\wfwa{L})$.
  \item For any $K \in \Cs(\awfa)$, we have $\eta\inv(K) \in (\Cs \circ\su_{2|M|})(A)$.
  \end{enumerate}
\end{proposition}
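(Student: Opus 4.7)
The plan is to build a map $\eta: A^* \to \awfa^*$ that sends every word to a well-formed word whose $\eval$-image coincides with $\alpha(w)$, and whose construction is ``prefix-local'' in the $\su_{2|M|}$-sense. Item~(1) will then follow immediately, and item~(2) will follow by pushing the closure properties of \Cs through the correspondence between $\eta$ and the tagging $\tau_\Pb$.

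For the construction of $\eta$, my plan is to attach to every prefix $u \in A^*$ of length at least $|M|$ a canonical idempotent $e(u) \in E(S)$. This can be done by a pigeonhole argument: among the $|M|+1$ suffixes of $u$ of lengths $0, 1, \ldots, |M|$, two must have equal $\alpha$-image, and an $\omega$-power extracted from this repeat yields an idempotent in $E(S)$ that depends only on the last $|M|$ letters of $u$, hence only on the $\su_{2|M|}$-class of $u$ (the extra factor of~$2$ gives the robustness to absorb a one-letter shift when passing from $u$ to $ua$). For short prefixes, fix an arbitrary default. Then for $w=a_1\cdots a_n$ with $n\geq 1$, I would use $e(\cdot)$ to produce a local factorization $w = v_0 u_1 v_1 \cdots u_k v_k$ with $\alpha(u_j)$ idempotent, and set
\[
\eta(w) = (\square,\alpha(v_0),\alpha(u_1))(\alpha(u_1),\alpha(v_1),\alpha(u_2))\cdots (\alpha(u_k),\alpha(v_k),\square),
\]
so that the telescoping identity $e^2=e$ gives $\eval(\eta(w)) = \alpha(w)$. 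For $w = \varepsilon$, set $\eta(\varepsilon) = (\square,1_M,\square)$.

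Given such $\eta$, item~(1) is immediate: since $\eta(w)$ is well-formed and $\eval(\eta(w))=\alpha(w)$, for any $L$ recognized by $\alpha$ we have $w\in L \Leftrightarrow \alpha(w)\in \alpha(L) \Leftrightarrow \eval(\eta(w))\in \alpha(L) \Leftrightarrow \eta(w)\in \wfwa{L}$, whence $L=\eta\inv(\wfwa{L})$. For item~(2), let $\Pb$ be the $\su_{2|M|}$-partition of $A^*$ given by Lemma~\ref{lem:eqclasses}. The construction is designed so that the $i$-th letter of $\eta(w)$ depends only on $a_i$ and on the $\su_{2|M|}$-class of the prefix $a_1\cdots a_{i-1}$, i.e. only on the tag of position $i$ in $\tau_\Pb(w)$, \emph{except} for the end-marker $\square$ on the rightmost letter. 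So there is a morphism $\psi:(\Pb\times A)^*\to \awfa^*$ with $\eta(w) = \psi(\tau_\Pb(w))$ up to the last-letter adjustment. I would then split according to the $\Pb$-class of $w$ itself: for each $P\in\Pb$, the class $P$ determines the suffix of $w$ of length up to $2|M|$, hence determines the correction that turns $\psi(\tau_\Pb(w))$ into $\eta(w)$ at the final position. Using this, I would define $L_P\in\Cs(\Pb\times A)$ as a right-quotient of $\psi\inv(K)$ by a fixed $\su_{2|M|}$-class tag, which lies in $\Cs(\Pb\times A)$ by closure of \Cs under inverse image and right quotient. Finally, $\eta\inv(K)=\bigcup_{P\in\Pb}\bigl(P\cap\tau_\Pb\inv(L_P)\bigr)$ by construction, which is exactly an element of $(\Cs\circ\su_{2|M|})(A)$.

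The main obstacle is the construction of $\eta$: one must simultaneously ensure well-formedness, the identity $\eval(\eta(w))=\alpha(w)$, and the prefix-locality needed for $\Cs\circ\su_{2|M|}$. The pigeonhole argument that extracts a canonical idempotent from any long-enough prefix is the technical heart, and the factor of $2$ in $2|M|$ is forced by the need for the chosen idempotent to remain stable under appending one further letter (so that consecutive letters of $\eta(w)$ match on their shared coordinate). The end-marker problem, which cannot be handled by a pure morphism since ``being the last position'' is not prefix-local, is absorbed by the $\bigcup_{P}(P\cap\tau_\Pb\inv(L_P))$ shape of $\Cs\circ\su_{2|M|}$, exploiting that $P$ records the suffix of $w$ of length up to $2|M|$.
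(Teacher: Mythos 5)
Your overall architecture matches the paper's proof: item~(1) reduces to showing that $\eta(w)$ is well-formed with $\eval(\eta(w)) = \alpha(w)$, and item~(2) is handled by factoring $\eta$ (up to its final letter) through the tagging $\tau_\Pb$ via a morphism and absorbing the end-marker with a right quotient indexed by the $\su_{2|M|}$-class of $w$; this is exactly how the paper proceeds. The gap is in the construction of $\eta$ itself, which you correctly flag as the main obstacle but do not overcome. You require the idempotent components of the letters of $\eta(w)$ to be images $\alpha(u_j)$ of \emph{actual infixes} $u_j$ of $w$ forming a genuine factorization $w = v_0u_1v_1\cdots u_kv_k$, so that $\eval(\eta(w))=\alpha(w)$ follows from $e^2=e$. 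Such a factorization cannot be produced prefix-locally: the idempotent infixes that the pigeonhole yields inside the length-$|M|$ windows ending at two nearby positions may overlap, and extracting a pairwise disjoint subfamily forces a greedy left-to-right selection that is not determined by the $\su_{2|M|}$-class of the prefix. Concretely, take $A=\{a\}$ and $\alpha:A^*\to\mathbb{Z}/p\mathbb{Z}$ with $\alpha(a)=1$: all positions beyond $2|M|$ of $a^n$ carry the same tag, so a morphism from the tagging must treat them identically, while the only nonempty infixes with idempotent image have length divisible by $p$ --- the two requirements are incompatible. (Your pigeonhole also produces the wrong kind of stabilizer: comparing two suffixes $s_1$ and $s_2=ts_1$ of $u$ gives $\alpha(t)\alpha(s_1)=\alpha(s_1)$, a \emph{left}-stabilizer, whereas the telescoping needs a \emph{right}-stabilizer of the image of the prefix.)

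The paper's construction resolves precisely this point: the idempotent $e_i$ attached to a distinguished position $x_i$ is \emph{not} required to be the image of an infix of $w$; it is any idempotent with $\alpha(u_i)\cdot e_i=\alpha(u_i)$, where $u_i$ is the suffix of length at most $|M|$ of the prefix $w[0,x_i-1]$ (obtained by pigeonholing on the \emph{prefixes} of that window, which yields a right-stabilizer). The word is then cut at \emph{all} distinguished positions, $w=w_0w_1\cdots w_n$, with no positions consumed by idempotent blocks, and $\eval(\eta(w))=\alpha(w_0)e_0\alpha(w_1)e_1\cdots e_{n-1}\alpha(w_n)$ collapses to $\alpha(w)$ by absorption, $\alpha(w_0\cdots w_i)e_i=\alpha(w_0\cdots w_i)$, rather than by $e^2=e$ applied to a doubled infix. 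The set of distinguished positions and the attached idempotents are then genuinely determined, position by position, by the last $2|M|$ letters of the prefix; the factor $2$ is needed to reach back from a distinguished position to the preceding one and to its own $|M|$-type (consecutive distinguished positions are at most $|M|$ apart), not merely to absorb a one-letter shift. With $\eta$ so defined, the rest of your plan for item~(2) goes through essentially as you describe.
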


Before we prove Proposition~\ref{prop:wform:firstdir}, we use it to finish the proof of Theorem~\ref{thm:wfwords}. Consider a \flang $L$ and a multiset of \flangs \Lb, all recognized by $\alpha$. Moreover, assume that $(\wfwa{L},\wfwa{\Lb})$ is \Cs-coverable. We have to show that $(L,\Lb)$ is $(\Cs \circ \su_{2|M|})$-coverable.

By hypothesis, there exists a separating \Cs-cover $\Kb_{\awfa}$ of $(\wfwa{L},\wfwa{\Lb})$, we use it to build a separating $(\Cs \circ\su_{2|M|})$-cover $\Kb_{A}$ of $(L,\Lb)$. Consider the map $\eta: A^* \to \awfa^*$ given by Proposition~\ref{prop:wform:secdir}. We define,
\[
  \Kb_{A} = \{\eta\inv(K) \mid K \in \Kb_{\awfa}\}.
\]
To conclude the proof, we show that $\Kb_{A}$ is a separating $(\Cs \circ\su_{2|M|})$-cover of $(L,\Lb)$.

Let us first prove that $\Kb_{A}$ is a $(\Cs \circ \su_{2|M|})$-cover of $L$. Let $w \in L$, we first have to find $H \in \Kb_{A}$ such that $w \in H$. Since $w \in L$, we know from the first item in Proposition~\ref{prop:wform:secdir} that $\croch{w} \in \wfwa{L}$. Hence, since $\Kb_{\awfa}$ is a cover of \wfwa{L}, there exists $K \in \Kb_{\awfa}$ such that $\croch{w} \in K$. It now follows that $w$ belongs to $\eta\inv(K) \in \Kb_{A}$. We conclude that $\Kb_{A}$ is a cover of $L$. Moreover, it is a $(\Cs \circ \su_{2|M|})$-cover by Item~(2) in Proposition~\ref{prop:wform:secdir}.

We now prove that $\Kb_{A}$ is separating. Let $H \in \Kb_{A}$, we have to find $L' \in \Lb$ such that $L' \cap H = \emptyset$. By definition $H = \eta\inv(K)$ for some $K \in \Kb_{\awfa}$. Since $\Kb_{\awfa}$ is a separating for $(\wfwa{L},\wfwa{\Lb})$, we know that there exists $L \in \Lb$ such that $K \cap \wfwa{L} = \emptyset$. It is immediate that $\eta\inv(K) \cap L = \emptyset$ since $L = \eta\inv(\wfwa{L})$ by Item~(1) in Proposition~\ref{prop:wform:secdir}.

It remains to prove Proposition~\ref{prop:wform:secdir}, to which we devote the rest of this section.

\smallskip
\noindent
{\bf Proof of Proposition~\ref{prop:wform:secdir}: definition of $\eta$.} We begin by defining the map $\eta: A^* \to \awfa^*$. Let us point out that $\eta$ is {\bf not} be a morphism (otherwise, since \Cs is closed under inverse image, all $\eta^{-1}(K)$ would belong to $\Cs(A)$, which is not the case in general). We start with a preliminary definition.

Given a \fword $w$, a position $x$ in $w$ (\emph{i.e.}, $x\in \{0,1,\ldots,|w|-1\}$) and a natural number $k \in \nat$, we define the \emph{\ktype} of $x$ as the following \fword of length at most $k$:
\begin{itemize}
\item If $x < k$, then the \ktype of $x$ is the prefix $w[0,x-1]$ of length $x$.
\item If $x \geq k$, then the \ktype of $x$ is the infix $w[x-k,x-1]$ of length $k$.
\end{itemize}

For the construction of $\eta$, we fix $k = |M|$. Moreover, we choose an arbitrary order on the set of idempotents $E(S)$ (recall that $S = \alpha(A^+)$).

Consider a nonempty \fword $w \in A^+$ and a position $x$ in $w$. We say that $x$ is \emph{distinguished} when there exists an idempotent $e \in E(S)$ such that the \ktype $u$ of $x$ satisfies $\alpha(u) \cdot e = \alpha(u)$. The following fact states that distinguished positions are frequent.

\begin{fct} \label{fct:wform:herecomesdis}
  Let $w \in A^+$ be such that $|w| \geq k$ and let $y \geq k-1$ be a position in $w$. Then, there exists a distinguished position $x$ in $w$ such that $y-(k-1) \leq x \leq y$.
\end{fct}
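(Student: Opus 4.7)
The plan is to apply the pigeonhole principle to the $k$ letters of $w$ lying inside the window. Set $x_0 = y - (k-1)$ and, for $0 \leq j \leq k-1$, let $a_j = w[x_0 + j]$. Consider the $k+1$ partial products $P_j = \alpha(a_0 \cdots a_{j-1}) \in M$ for $j = 0, 1, \ldots, k$ (with $P_0 = 1_M$). Since $|M| = k$, two of them must coincide: there exist $0 \leq j_1 < j_2 \leq k$ with $P_{j_1} = P_{j_2}$.

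Next I would extract an idempotent in $E(S)$ that right-stabilizes $P_{j_1}$. Let $c = \alpha(a_{j_1} \cdots a_{j_2 - 1})$; since $j_1 < j_2$, this is the image of a nonempty \fword, so $c \in S$. The idempotent power $e = c^{\omega}$ thus belongs to $E(S)$, and from $P_{j_1} \cdot c = P_{j_2} = P_{j_1}$ one deduces by iteration that $P_{j_1} \cdot e = P_{j_1}$.

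The final step is to verify that $x := x_0 + j_1$ is distinguished. Since $0 \leq j_1 \leq k-1$, this $x$ lies in the window. Writing $u$ for its $k$-type, by definition $u = w[\ell, x-1]$ where $\ell = \max(0, x-k)$. Because $j_1 \leq k-1$ gives $\ell \leq x_0$, we can split $u = v \cdot (a_0 \cdots a_{j_1 - 1})$ with $v = w[\ell, x_0 - 1]$ (possibly empty). Hence $\alpha(u) = \alpha(v) \cdot P_{j_1}$, and
\[
\alpha(u) \cdot e = \alpha(v) \cdot P_{j_1} \cdot e = \alpha(v) \cdot P_{j_1} = \alpha(u),
\]
as required.

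The only delicate point is to notice that the loop revealed by pigeonhole lies entirely inside the window, so the looping factor $a_{j_1} \cdots a_{j_2 - 1}$ appears as a suffix of the $k$-type $u$; because $P_{j_1}$ then sits as the rightmost multiplicand of $\alpha(u)$, the idempotent $e$ is absorbed on the right. The degenerate subcase $j_1 = 0$ (where $v$ and $a_0 \cdots a_{j_1 - 1}$ are both empty) poses no issue: it simply forces $P_0 = 1_M$ to coincide with some $P_{j_2}$, whence $c = 1_M \in S$ and $e = 1_M$, making the conclusion trivial.
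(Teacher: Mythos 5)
Your proof is correct and follows essentially the same route as the paper's: a pigeonhole argument on the $k+1$ partial products of the length-$k$ window ending at $y$, extraction of an idempotent power of the looping factor, and the observation that this partial product sits as a suffix of the $k$-type of $x = x_0 + j_1$ (which is exactly the paper's position $y-(k-j_1-1)$), so the idempotent is absorbed on the right. Your explicit treatment of the degenerate case $j_1=0$ is a small bonus of care not spelled out in the paper, but the argument is the same.
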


\begin{proof}
  This follows from the pigeonhole principle. By definition, $w[y-(k-1),y]$ is a \fword $a_1\cdots a_k$ of length $k$. For all $1 \leq j \leq k$, we define $w_j = a_1 \cdots a_j$. Moreover, we let $w_0 = \varepsilon$. By definition, we have $k = |M|$. Thus, we obtain from the pigeonhole principle that there exist $0 \leq j_1 < j_2 \leq k$ such that, $\alpha(w_{j_1}) = \alpha(w_{j_2})$.

  We claim that the position $x = y - (k - j_1-1)$ in $w$ is distinguished. Indeed, by definition, we have $w_{j_2} = w_{j_1} a_{j_1+1} \cdots a_{j_2}$. Therefore,
  \[
    \begin{array}{lll}
      \alpha(w_{j_1}) & = & \alpha(w_{j_2}) \\
                 & = & \alpha(w_{j_1}) \cdot \alpha(a_{j_1 + 1} \cdots a_{j_2}) \\
                 & = & \alpha(w_{j_1}) \cdot (\alpha(a_{j_1 + 1} \cdots a_{j_2}))^p \quad \text{for all $p \in \nat$}.
    \end{array}
  \]
  It is standard (and easy to check) that there exists $p \geq 1$ such that $(\alpha(a_{j_1 + 1} \cdots a_{j_2}))^p$ is an idempotent $e \in E(S)$. Therefore, since $w_{j_1}$ is a suffix of the \ktype of the position $x = y - (k -j_1-1)$, we know that $x$ is distinguished, as witnessed by the idempotent~$e$.
\end{proof}

We are now ready to define the map $\eta: A^* \to \awfa^*$. Consider a \fword $w \in A^*$. If $w$ does not contain any distinguished position, we define,
\[
  \croch{w} = (\square,\alpha(w),\square).
\]
Otherwise, $w$ has $n \geq 1$ distinguished positions, say $x_0 < \cdots < x_{n-1}$. We let $u_0,\cdots,u_{n-1}$ be their respective \ktypes. Finally, let $e_0,\dots,e_{n-1}\in E(S)$ be such that for all $i \geq 0$, $e_i$ is the smallest idempotent (according to the arbitrary order that we fixed over $E(S)$) such that $\alpha(u_{i}) \cdot e_i = \alpha(u_{i})$. We define $\croch{w} \in \awfa^+$ as the following well-formed \fword:
\begin{equation}\label{eq:crochw}
  \croch{w} = (\square,\alpha(w_0),e_0)\cdot (e_0,\alpha(w_1),e_1) \cdots (e_{n-2},\alpha(w_{n-1}),e_{n-1})\cdot (e_{n-1},\alpha(w_{n}),\square)
\end{equation}
\noindent
where $w_0,\dots,w_{n}$ are the unique \fwords such that $w$ may be decomposed as $w = w_0w_1 \cdots w_{n}$ where for all $i\geq0$, the word $w_{i+1}$ starts at position $x_i$ in $w$. In other terms, $w_0 =  w[0,x_0-1]$, $w_i = w[x_{i-1},x_{i}-1]$ for $1 \leq i \leq n-1$ and $w_{n} = w[x_{n-1},|w|-1]$. Observe that for any $w \in A^*$, $\croch{w} \in \awfa^+$ is well-formed by construction.

It remains to prove that this definition satisfies the two items in Proposition~\ref{prop:wform:secdir}.

\smallskip
\noindent
{\bf Proof of Proposition~\ref{prop:wform:secdir}: first item.} Consider a \flang $L \subseteq A^*$ recognized by $\alpha$, we have to show that $L = \eta\inv(\wfwa{L})$. This amounts to proving that for any $w \in A^*$, $w \in L$ if and only if $\croch{w} \in \wfwa{L}$.

Since $\alpha$ recognizes $L$, $w \in L$ if and only if $\alpha(w) \in \alpha(L)$. Moreover, since $\croch{w}$ is well-formed, by definition, $\croch{w} \in \wfwa{L}$ if and only if $\eval(\eta(w)) \in \alpha(L)$. Hence, it suffices to prove that $\eval(\croch{w}) = \alpha(w)$. This is immediate from the definition if $w$ has no distinguished position. Otherwise, $w$ may be decomposed as $w = w_0 \cdots w_{n}$ and~\eqref{eq:crochw} holds.
By choice of the idempotents used in the construction, we have $\alpha(w_1\cdots w_{i}) \cdot e_i =\alpha(w_1\cdots w_{i})$ for all $i \geq 0$. Hence, it is immediate from a simple induction that
\[
  \eval(\croch{w}) = \alpha(w_0)\cdot e_0 \cdot \alpha(w_1) \cdot e_1 \cdot \alpha(w_2) \cdot \cdots \cdot e_{n-1}\cdot \alpha(w_{n}) =  \alpha(w_1 \cdots w_{n}) = \alpha(w).
\]

\smallskip
\noindent
{\bf Proof of Proposition~\ref{prop:wform:secdir}: second item.} Given an arbitrary \flang $K \in \Cs(\awfa)$, we have to prove that $\eta\inv(K)$ belongs to $(\Cs \circ \su_{2k})(A)$ (recall that we fixed $k = |M|$).

By definition of $\Cs \circ \su_{2k}$, the first thing we have to do is choose some $\su_{2k}$-partition of $A^*$. Recall that $\eqsu{2k}$ denotes the canonical equivalence associated to $\su_{2k}$: given $w,w' \in A^*$, we have $w \eqsu{2k} w'$ when $w \in L \Leftrightarrow w' \in L$ for any $L \in \su_{2k}$. We denote by \Pb the partition of $A^*$ into $\eqsu{2k}$-classes. By Lemma~\ref{lem:canoeq}, \Pb is a $\su_{2k}$-partition of $A^*$. It now remains to exhibit languages $L_P \in \Cs(\Pb \times A)$ for all $P \in \Pb$ such that
\[
  \eta\inv(K) = \bigcup_{P \in \Pb} \left(P \cap \tau_\Pb\inv(L_P)\right).
\]
We start with preliminary definitions. We know that for any \fword $w \in A^*$ having $n$ distinguished positions, $\croch{w} \in \awfa^+$ has length $n+1$ and is built by decomposing $w$ according to these distinguished positions. We let $\tcroch{w} \in \awfa^*$ be the (possibly empty) prefix of \croch{w} made of the first $n$ letters of $\croch{w}$ (\tcroch{w} is not well-formed: the third component of the rightmost letter is not ``$\square$''). The argument is now based on the two following lemmas.

\begin{lemma} \label{lem:canonicone}
  Let $P \in \Pb$. Then there exists a letter $b_P \in \awfa$ such that for any $w \in P$, the rightmost letter in $\croch{w}$ is $b_P$, \emph{i.e.}, $\croch{w} = \tcroch{w} \cdot b_P$.
\end{lemma}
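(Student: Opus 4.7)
The plan is to exploit the characterization of \eqsu{2k}-classes given by Lemma~\ref{lem:eqclasses}: either $P = \{w_0\}$ is a singleton with $|w_0| \leq 2k-1$, or $P = A^* v$ for some $v \in A^*$ with $|v| = 2k$. In the first case there is nothing to do: $P$ contains a single \fword $w_0$, and we just define $b_P$ to be the rightmost letter of $\croch{w_0}$.

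The interesting case is $P = A^* v$ with $|v| = 2k$. The goal is then to argue that the rightmost letter of $\croch{w}$ depends only on the suffix $v$ of $w$, and not on what precedes it. First I would observe that since $k = |M| \geq 1$, any $w \in P$ has length at least $2k \geq k$, so Fact~\ref{fct:wform:herecomesdis}, applied with $y = |w|-1$, yields a distinguished position $x$ in $w$ in the range $|w|-k \leq x \leq |w|-1$. In particular, $w$ has at least one distinguished position and the last one, $x_{n-1}$, satisfies $x_{n-1} \geq |w|-k$.

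The key observation is that for any position $x$ with $|w|-k \leq x \leq |w|-1$, we have $x \geq k$ (since $|w| \geq 2k$), so its \ktype is $w[x-k,x-1]$, an infix lying entirely within $w[|w|-2k,|w|-1] = v$. Consequently, whether such a position is distinguished, and if so which idempotent is chosen for it by the construction of $\eta$, depends only on $v$. In particular the relative position of $x_{n-1}$ inside the last $2k$ positions, the value $\alpha(u_{n-1})$, and the associated idempotent $e_{n-1}$ are all determined by $v$. Finally, $w_n = w[x_{n-1},|w|-1]$ is also entirely contained in the suffix $v$, so $\alpha(w_n)$ is determined by $v$ as well. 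Thus the rightmost letter of $\croch{w}$, namely $(e_{n-1},\alpha(w_n),\square)$, depends only on $v$, i.e., only on $P$, and can be defined as $b_P$.

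The main obstacle is essentially bookkeeping: one has to verify that nothing the construction of $\eta$ reads beyond the last $2k$ letters can influence the rightmost letter of $\croch{w}$. The tight control comes from Fact~\ref{fct:wform:herecomesdis} forcing $x_{n-1}$ into the last $k$ positions, which in turn pushes its \ktype back at most to position $|w|-2k$. This is precisely why $\su_{2k}$ (and not $\su_k$) appears in the statement of Theorem~\ref{thm:wfwords}: one needs $k$ extra letters of context to see the \ktype of the last distinguished position.
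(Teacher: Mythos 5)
Your proof is correct and follows essentially the same route as the paper's: both arguments rest on Fact~\ref{fct:wform:herecomesdis} placing the rightmost distinguished position within the last $k$ positions of $w$, so that this position, its \ktype, the chosen idempotent and the final factor are all read off the last $2k$ letters of $w$, which are fixed by $P$. The paper merely packages this as a comparison of two words $w \eqsu{2k} w'$ (handling the case without distinguished positions via $\{w\}\in\su_{2k}$), whereas you invoke the explicit description of the classes from Lemma~\ref{lem:eqclasses}; the content is the same.
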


\begin{lemma} \label{lem:canonictwo}
  There exists a morphism $\beta: (\Pb \times A)^* \to \awfa^*$ such that for any $w \in A^*$, $\beta(\tau_\Pb(w)) = \tcroch{w}$.
\end{lemma}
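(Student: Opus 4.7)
I will define $\beta$ letterwise and show that the concatenation of its images reconstructs $\tcroch{w}$ exactly. The guiding principle is that the $\su_{2k}$-class $P$ of a prefix encodes enough local information to produce, on its own, the letter of $\tcroch{w}$ contributed by the next position. By Lemma~\ref{lem:eqclasses}, each $P \in \Pb$ is either a singleton $\{v\}$ with $|v|<2k$ or a class $A^*v$ with $|v|=2k$, so from $P$ alone we can recover either the entire prefix (when short) or its last $2k$ letters (when long).

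Suppose the letter $(P,a)\in \Pb\times A$ occurs at position $j$ of $\tau_\Pb(w)$, so $P=\ppart{w[0,j-1]}$. First, the \ktype of $j$ (a suffix of $w[0,j-1]$ of length $\min(j,k)\leq 2k$) is determined by $P$, so $P$ dictates whether $j$ is distinguished and, if so, which idempotent $e_i\in E(S)$ is selected. Second, when $j$ is distinguished and not the first, we must locate the preceding distinguished position $x_{i-1}$, recover its idempotent $e_{i-1}$, and compute $\alpha(w_i)$ where $w_i=w[x_{i-1},j-1]$. When $j\geq k$, Fact~\ref{fct:wform:herecomesdis} (applied with $y=j-1$) forces $x_{i-1}\in[j-k,j-1]$; the \ktype of any candidate position in this range lies in $w[j-2k,j-2]$, which is contained in the suffix of length $2k$ that $P$ encodes, and the infix $w_i$ itself sits within the last $k$ letters of the prefix, hence is read off from $P$. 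When $j<k$, the prefix has length ${<}2k$, so $P$ is a singleton giving the entire prefix, and all quantities are read directly.

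Define the morphism $\beta:(\Pb\times A)^*\to \awfa^*$ on a letter $(P,a)$ by: output $\varepsilon$ if the position corresponding to $P$ is not distinguished; output $(\square,\alpha(w_0),e_0)$ if it is the first distinguished position (using $P$ to recover $w_0$); output $(e_{i-1},\alpha(w_i),e_i)$ otherwise, with the preceding idempotent and infix computed from $P$ as above. A routine verification then yields $\beta(\tau_\Pb(w))=\tcroch{w}$: non-distinguished positions contribute $\varepsilon$, while the $i$-th distinguished position contributes exactly the $i$-th letter of $\tcroch{w}$. Positions strictly beyond $x_{n-1}$ are non-distinguished and drop out, which matches the fact that $\tcroch{w}$ omits the last letter of $\croch{w}$ (the one encoding the trailing segment $w_n$).

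The main obstacle is the careful case analysis certifying that $(P,a)$ truly determines every piece of data at the boundary between short prefixes (length ${<}k$) and long ones (length ${\geq}k$). The choice of stratum $2k=2|M|$ is what makes the scheme work: $k$ letters are needed to read the \ktype of the current position, and up to $k$ additional letters to locate and certify the preceding distinguished position together with the separating infix, which is precisely the information retained by the $\su_{2k}$-partition.
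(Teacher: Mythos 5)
Your proof is correct and follows essentially the same route as the paper's: a letterwise morphism sending non-distinguished positions to $\varepsilon$ and distinguished ones to the corresponding letter of $\tcroch{w}$, with Fact~\ref{fct:wform:herecomesdis} used both to bound the gap to the preceding distinguished position by $k$ (so a $2k$-suffix, or the whole short prefix, determines everything) and to handle the first distinguished position. The paper merely packages the well-definedness of this assignment as a separate statement (Lemma~\ref{lem:wform:canonic}), phrased as invariance under $\eqsu{2k}$ rather than your direct ``read it off from $P$'' formulation, but the content is identical.
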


Before proving these lemmas, let us use them to finish the proof that $\eta\inv(K)$ belongs to $(\Cs \circ \su_{2k})(A)$. For any $P \in \Pb$, we let $b_P \in \awfa$ be as defined in Lemma~\ref{lem:canonicone}. Moreover, let $\beta: (\Pb \times A)^* \to \awfa^*$ be the morphism described in Lemma~\ref{lem:canonictwo}. We claim that:
\[
  \eta\inv(K) = \bigcup_{P \in \Pb} \left(P \cap \tau_\Pb\inv(\beta\inv(K(b_P)\inv))\right).
\]
This concludes the proof: since \Cs is closed under right quotient and inverse image,  we know that for any $P \in \Pb$, $\beta\inv(K(b_P)\inv) \in \Cs(\Pb \times A)$. Thus,   it is immediate that $\eta\inv(K)$ belongs to $(\Cs \circ \su_{2k})(A)$ by definition.

Let us prove the claim. Consider a \fword $w \in A^*$ and let $P$ be the unique language in the partition \Pb of $A^*$ such that $w \in P$. It suffices to show that $w \in \eta\inv(K)$ if and only if $w \in \tau_\Pb\inv(\beta\inv(K(b_P)\inv))$. By Lemma~\ref{lem:canonicone}, we know that $w \in \eta\inv(K)$ if and only if $\tcroch{w} \cdot b_P \in K$, \emph{i.e.}, $\tcroch{w} \in K(b_P)\inv$. Finally, since $\beta(\tau_\Pb(w)) = \tcroch{w}$ by Lemma~\ref{lem:canonictwo}, this is equivalent to $w \in \tau_\Pb\inv(\beta\inv(K(b_P)\inv))$, which concludes the proof.

\smallskip
It remains to prove Lemmas~\ref{lem:canonicone} and \ref{lem:canonictwo}. 

\smallskip
\noindent
{\it Proof of Lemma~\ref{lem:canonicone}.} Since any $P \in \Pb$ is by definition a \eqsu{2k}-class, this amounts to proving that given $w,w' \in A^*$ such that $w \eqsu{2k} w'$, $\croch{w}$ and $\croch{w'}$ have the same rightmost letter. We consider two possible cases.

If $w$ has no distinguished position, then we have $|w| < k$  by Fact~\ref{fct:wform:herecomesdis}, hence $\{w\} \in \su_{2k}$. Since $w \eqsu{2k} w'$, we have $w' \in \{w\}$, \emph{i.e.}, $w = w'$.  The result is now immediate.

Assume on the contrary that $w$ contains at least one distinguished position. We let $x$ be the rightmost one, and $u$ be the \ktype of $x$. By definition, the rightmost letter in $\croch{w}$ is $(e,\alpha(v),\square)$ where $v= w[x,|w|-1]$ and $e \in E(S)$ the smallest idempotent such that $\alpha(u) \cdot e = \alpha(u)$. Note that $uv$ is a suffix of $w$ by definition. Since $x$ is the rightmost distinguished position by definition, it follows from Fact~\ref{fct:wform:herecomesdis} that $|v| \leq k$ (otherwise, there would be another distinguished position strictly to the right of $x$). It follows that $|uv| \leq 2k$. Thus, since $w \eqsu{2k} w'$, $uv$ is a suffix of $w'$ as well. It now follows from the definitions that the rightmost letter in $w'$ must be $(e,\alpha(v),\square)$ as well. This concludes the proof of Lemma~\ref{lem:canonicone}.

\smallskip
\noindent
{\it Proof of Lemma~\ref{lem:canonictwo}.} Let us start with a few simple observations. Consider some \fword $w \in A^*$. By definition, if $w$ has no distinguished position, then $\tcroch{w} = \varepsilon$. Otherwise, $w$ has $n \geq 1$ distinguished positions $x_0 <\cdots < x_{n-1}$ and,
\[
  \tcroch{w} = (\square,\alpha(w_0),e_0)\cdot (e_0,\alpha(w_1),e_1) \cdots
  (e_{n-2},\alpha(w_{n-1}),e_{n-1})
\]
where $w_0 =  w[0,x_0-1]$, $w_i = w[x_{i-1},x_{i}-1]$ for $1 \leq i \leq n-1$. Note that there is a natural bijection between the distinguished positions of $w$ and the positions of $\tcroch{w}$, which associates to any distinguished position $x_i$ in $w$ the position $\pcroch{x_i} = i$ in $\tcroch{w}$.

By definition any position $x$ in $w$ may also be viewed as a position of $\tau_\Pb(w) \in (\Pb \times A)^*$. Because of our choice of $k$ as $|M|$, the map $w \mapsto \tcroch{w}$ is designed so that for any position $x$ in $w$, whether $x$ is distinguished and if so the label of $\pcroch{x}$ in $\tcroch{w}$ depends only on the label of $x$ in $\tau_\Pb(w)$. Let us state this property in the following lemma.

\begin{lemma} \label{lem:wform:canonic}
  For any letter $(P,a) \in \Pb \times A$, one of the two following properties hold:
  \begin{enumerate}
  \item For any $w \in A^*$ and any position $x$ in $w$, if $x$ has label $(P,a)$ in $\tau_\Pb(w)$, then $x$ is not distinguished.
  \item There is a letter $c_{(P,a)} \in \awfa$ such that for any $w \in A^*$ and any position $x$ in~$w$, if $x$ has label $(P,a)$ in $\tau_\Pb(w)$, then $x$ is distinguished and $\pcroch{x}$ has label $c_{(P,a)}$ in~$\tcroch{w}$.
  \end{enumerate}
\end{lemma}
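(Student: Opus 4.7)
The plan is to show that the partition $\Pb$ is fine enough that $P$ alone determines everything about $w[0,x-1]$ that enters the construction of $\croch{w}$ near position~$x$. By Lemma~\ref{lem:eqclasses}, each $P \in \Pb$ is either a singleton $\{v\}$ with $|v|\leq 2k-1$, or a language of the form $A^*v$ with $|v|=2k$; in both cases $P$ determines the suffix of length $\min(2k,|u|)$ of every $u \in P$. Applied to $u = w[0,x-1]$, this tells us that $P$ determines the last $\min(2k,x)$ letters of the prefix preceding $x$.

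The first step is to handle distinguishedness. By definition this depends only on the $k$-type of $x$, which is a suffix of $w[0,x-1]$ of length $\min(k,x) \leq 2k$ and is therefore determined by $P$. If $P$ dictates that $x$ is not distinguished, case~(1) of the lemma holds for $(P,a)$. Otherwise, I need to exhibit a single letter $c_{(P,a)} \in \awfa$ matching the label of $\pcroch{x}$ in $\tcroch{w}$ for every such pair $(w,x)$. Reading off~\eqref{eq:crochw}, this label is a triple $(e^-, \alpha(v_x), e_x)$, where $e_x$ is the smallest idempotent satisfying $\alpha(u)\cdot e_x = \alpha(u)$ for $u$ the $k$-type of $x$, $v_x$ is the chunk from the previous distinguished position $x^-$ up to position $x-1$ (or equals $w[0,x-1]$ if $x$ is the first distinguished position, in which case $e^-=\square$), and otherwise $e^-$ is the analogous idempotent attached to $x^-$.

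Since $e_x$ is built from the $k$-type of $x$, it is already determined by $P$. The remaining work is to show that $v_x$ and $e^-$ are determined by $P$ as well. I split on whether $x<k$ or $x\geq k$. If $x<k$, then Lemma~\ref{lem:eqclasses} forces $P$ to be the singleton $\{w[0,x-1]\}$, so the whole prefix is known, together with all of its distinguished positions and their associated idempotents, and there is nothing more to check. If $x\geq k$, then Fact~\ref{fct:wform:herecomesdis} applied with $y=x-1$ supplies a distinguished position in $[x-k,x-1]$, so necessarily $x-k\leq x^- \leq x-1$; in particular, $x$ is not the first distinguished position. It follows that $v_x$ has length $x-x^- \leq k$, and that the $k$-type of $x^-$ is a suffix of $w[0,x^--1]$ of length at most $k$. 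Together these two pieces of data occupy at most the last $(x-x^-)+k\leq 2k$ positions of $w[0,x-1]$, so both are determined by $P$, and so are $\alpha(v_x)$ and $e^-$. Setting $c_{(P,a)}=(e^-,\alpha(v_x),e_x)$ then yields case~(2).

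The main subtlety is the choice of $2k$ rather than $k$ as the stratum index in Proposition~\ref{prop:wform:secdir}: this is exactly the margin we need to expose simultaneously the $k$-type of $x^-$ and the chunk $v_x$ of length up to~$k$, totalling at most $2k$ letters of suffix. Past this bookkeeping there is no conceptual difficulty, and the case analysis above closes the proof.
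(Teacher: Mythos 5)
Your proof is correct and follows essentially the same route as the paper's: both arguments rest on Fact~\ref{fct:wform:herecomesdis} to bound the distance to the preceding distinguished position by $k$, and on the observation that its $k$-type concatenated with the chunk $v_x$ occupies at most the last $2k$ letters of $w[0,x-1]$, which is exactly what the stratum $\su_{2k}$ records. The only (cosmetic) difference is that you split on $x<k$ versus $x\geq k$ while the paper splits on whether $x$ is the leftmost distinguished position, and each case analysis immediately implies the other.
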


Before proving Lemma~\ref{lem:wform:canonic}, we first use it to define the morphism $\beta: (\Pb \times A)^* \to \awfa^*$ and finish the argument for Lemma~\ref{lem:canonictwo}. We have to define the image of each letter in $\Pb \times A$. Let $(P,a) \in \Pb \times A$ be a letter.
\begin{itemize}
\item If $(P,a)$ satisfies the first item Lemma~\ref{lem:wform:canonic}, we let $\beta((P,a)) = \varepsilon$.
\item If $(P,a)$ satisfies the first second item in Lemma~\ref{lem:wform:canonic},  we let $\beta((P,a)) = c_{(P,a)}$.
\end{itemize}
It is now immediate from Lemma~\ref{lem:wform:canonic} that $\beta$ satisfies the desired property: for any $w \in A^*$, $\beta(\tau_\Pb(w)) = \tcroch{w}$.

\smallskip

It remains to prove Lemma~\ref{lem:wform:canonic}. By definition of the partition \Pb, this amounts to proving that given $w,w' \in A^*$ and $x,x'$ positions in $w,w'$ such that $w[0,x-1] \eqsu{2k} w'[0,x'-1]$, $x$ is distinguished if and only if $x'$ is distinguished and in that case, \pcroch{x} and \pcroch{x'} in \tcroch{w} and \tcroch{w'} have the same label. If $w[0,x-1] \eqsu{2k} w'[0,x'-1]$, then $x$ and $x'$ have the same \ktype, and since the \ktype of a position determines whether it is distinguished or not, $x$ and $x'$ are either both distinguished, or none of them is.

We now concentrate on the second property. Assume that $x$ and $x'$ are distinguished, we show that the positions \pcroch{x} and \pcroch{x'} in \tcroch{w} and \tcroch{w'} carry the same label. We define $u$ as the common \ktype of $x$ and $x'$ and $e \in E(S)$ as the smallest idempotent such that $\alpha(u) \cdot e = \alpha(u)$. We distinguish two cases.

Assume first that $x$ is the leftmost distinguished position in $w$. It follows that the label of $\pcroch{x}$ is $(\square,\alpha(w[0,x-1]),e)$. We have to show that this is also the label of \pcroch{x'}. By hypothesis, we know that the prefix $w[0,x-1]$ contains no distinguished position. Thus, Fact~\ref{fct:wform:herecomesdis} yields that $w[0,x-1] < k$. Since $w[0,x-1] \eqsu{2k} w'[0,x'-1]$, it is then immediate that $w[0,x-1] = w'[0,x'-1]$. We conclude that $x'$ is also the leftmost distinguished position of $w'$ and that $\pcroch{x},\pcroch{x'}$ have the same label, namely $(\square,\alpha(w[0,x-1]),e)$.

Assume now that $x$ is not the leftmost distinguished position in $w$. Let $y$ be the distinguished position which directly precedes $x$ in $w$. Furthermore, let $v$ be the \ktype of $y$ and $f \in E(S)$ be the smallest idempotent such that $\alpha(v) \cdot f = \alpha(v)$.	By definition, the label of \pcroch{x} is $(f,\alpha(w[y,x-1]),e)$. We have to show that this is also the label of \pcroch{x'}. By Fact~\ref{fct:wform:herecomesdis}, we have $|w[y,x-1]| \leq k$ (otherwise, there would be a third distinguished position in $w$ strictly between $y$ and $x$, contradicting the definition of $y$). Therefore, $|v \cdot w[y,x-1]| \leq 2k$. Moreover, $v \cdot w[y,x-1]$ is a suffix of $w[0,x-1]$ by definition. Since $w[0,x-1] \eqsu{2k} w'[0,x'-1]$, we obtain that $v \cdot w[y,x-1]$ is also a suffix of $w'[0,x'-1]$. This shows that $\pcroch{x'}$ has label $(f,\alpha(w[y,x-1]),e)$ as well, which concludes the proof.

\section{Application to two-variable first-order logic}
\label{sec:fo2}
This is the first of two sections in which we illustrate Theorem~\ref{thm:wfwords} and use it to obtain algorithms for a particular class of languages. Here, we consider the two-variable fragment of first-order logic over words (defined in Section~\ref{sec:logic}). Specifically, we show that the covering is decidable for the strong variant: \fodp.  Let us state this result.

\begin{corollary}[of Theorem~\ref{thm:wfwords}] \label{cor:fod}
  Covering and separation are decidable for \fodp over words.
\end{corollary}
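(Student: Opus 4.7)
The plan is to apply Theorem~\ref{thm:wfwords} with $\Cs = \fodw$, which reduces \fodp-covering to \fodw-covering provided three facts can be established: (i)~\fodw is a nontrivial lattice closed under right quotient and inverse image; (ii)~$\fodp = \fodw \circ \su$; and (iii)~covering is decidable for \fodw. Item~(i) is folklore and already recorded in Section~\ref{sec:logic}, as \fodw is in fact a Boolean algebra closed under quotient and inverse image, and is clearly nontrivial. Item~(iii) is the known covering algorithm for \fodw~\cite{pvzmfcs13,pzcovering,pzcovering2}. Decidability of \fodp-separation is then inherited via Fact~\ref{fct:septocove}.

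The real content is therefore item~(ii). For the inclusion $\fodw \circ \su \subseteq \fodp$, consider $L = \bigcup_{P \in \Pb}(P \cap \tau_\Pb\inv(L_P))$ with $\Pb \subseteq \su_k$ a $\su_k$-partition and each $L_P$ \fodw-definable. The test $u \in P$ is \fodp-expressible because every language in $\su_k$ is a Boolean combination of languages of the form $A^*w$ with $|w|\leq k$ by Lemma~\ref{lem:eqclasses}, and $A^*w$ is defined by a two-variable sentence that walks $k$ successor steps backward from $\max$, reusing the variables $x$ and $y$ alternately. For $\tau_\Pb\inv(L_P)$, relativize the \fodw sentence $\psi_P$ defining $L_P$: each label atom $(Q,a)(x)$ is replaced by $a(x) \wedge \chi_Q(x)$, where $\chi_Q(x)$ is an \fodp formula that inspects the at most $k$ positions immediately preceding $x$ in order to decide whether the prefix of length $x$ lies in $Q$. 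The resulting sentence stays within \fodp.

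The reverse inclusion $\fodp \subseteq \fodw \circ \su$ is the main technical hurdle. Given an \fodp sentence $\varphi$, choose $k$ larger than any successor chain that $\varphi$ can probe, and let \Pb be the partition of $A^*$ into \eqsu{k}-classes, which is an $\su_k$-partition by Lemma~\ref{lem:canoeq}. In the tagged word $\tau_\Pb(u)$, the label of each position $x$ encodes its letter together with the \eqsu{k}-class of the prefix $u[0,x-1]$, hence the $k$ letters immediately preceding $x$ (as well as whether $x$ has fewer than $k$ predecessors, in particular whether $x$ is the minimum). Translate $\varphi$ into an \fodw sentence $\psi$ over $\Pb \times A$ by eliminating the enriched predicates: $\min$, $\max$ and $\varepsilon$ are definable from $<$ alone, while each successor atom is absorbed by the tag, since the label of $x$ already determines the letter and partition class of position $x+1$. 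Any subformula $\exists y\,(x+1 = y \wedge \theta(y))$ therefore collapses to a finite disjunction of tag-conditions on $x$, introducing no new quantifier. Taking $L_P = L(\psi)$ for every $P \in \Pb$ yields $L(\varphi) = \tau_\Pb\inv(L(\psi)) = \bigcup_{P \in \Pb}(P \cap \tau_\Pb\inv(L_P))$, witnessing $L(\varphi) \in \fodw \circ \su$.

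The main obstacle is the variable budget in this last translation: successor elimination must not introduce a third variable. Two points make it work. First, because the tag of a position carries enough local information, every successor-based subformula reduces to a purely atomic condition on a tag, adding no quantifiers. Second, the choice of $\Pb$ as the \eqsu{k}-classes guarantees that all context of length $\leq k$ that $\varphi$ can inspect is already exposed as label information, so a single syntactic pass suffices. Once this translation is made precise, it combines with Theorem~\ref{thm:wfwords} to prove Corollary~\ref{cor:fod}.
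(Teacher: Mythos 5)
Your overall architecture matches the paper exactly: invoke Theorem~\ref{thm:wfwords} with $\Cs=\fodw$, cite the known \fodw-covering algorithm, and reduce everything to proving $\fodp = \fodw\circ\su$ (the paper's Proposition~\ref{prop:fo2enrich}). Your argument for the inclusion $\fodw\circ\su\subseteq\fodp$ is also essentially the paper's (its Lemma~\ref{lem:prelimfo2} walks $k$ successor steps backward exactly as you describe). The problem is the converse inclusion.

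There is a genuine gap in your translation for $\fodp\subseteq\fodw\circ\su$. The tag of a position $x$ in $\tau_\Pb(u)$ is $(\ppart{u[0,x-1]},a_x)$: it is \emph{backward-looking}, encoding the letter at $x$ and the $k$ letters preceding $x$, and nothing about what follows $x$. Hence your key claim --- that ``the label of $x$ already determines the letter and partition class of position $x+1$'' --- is only half true: the partition class of $x+1$ is determined (since $\eqsu{k}$ is a right congruence), but the \emph{letter} at $x+1$ is not. Consequently a subformula such as $\exists y\,(x+1=y\wedge b(y)\wedge\theta(y))$ does not collapse to a tag condition on $x$; and it cannot be expressed by binding $y$ either, because ``$y$ is the immediate successor of $x$'' is precisely what $\fodw$ cannot say. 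This is not a presentational issue: a purely local syntactic substitution cannot work here, which is why the paper proves this inclusion semantically. Its Proposition~\ref{prop:fo2themainprop} shows via an \efgame argument that $\tau_{2k}(w)\kfodeq\tau_{2k}(w')$ implies $w\kfodeqp w'$, with Duplicator running a ``shadow'' \fodw-game whose pebbles are deliberately placed about $k-1$ positions \emph{to the right} of the real pebbles (the $h$-safe invariant), so that the backward-looking tag of the shadow position covers a window around the real position on both sides. Some device of this kind --- shifting where the tag is read, together with a game or compositional argument rather than formula rewriting --- is needed to close your proof.
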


As we explained in Remark~\ref{rem:whatneedstobedone}, using Theorem~\ref{thm:wfwords} to obtain Corollary~\ref{cor:fod} requires clearing two preliminary steps.  First we need to that \fodp is the \su-enrichment of some lattice closed under right quotient and inverse image (namely \fodw in this case). Then, we need to show that covering and separation are decidable for \fodw.

Fortunately, the second step has already been achieved: it was shown in~\cite{pvzmfcs13} and in~\cite{pzcovering,pzcovering2} that \fodw-separation and \fodw-covering are decidable. Thus, we just have to show that \fodp is the \su-enrichment of \fodw. We state this in the following proposition.

\begin{proposition} \label{prop:fo2enrich}
  Over words, \fodp is the \su-enrichment of \fodw.
\end{proposition}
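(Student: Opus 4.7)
The plan is to establish the two inclusions $\fodw \circ \su \subseteq \fodp$ and $\fodp \subseteq \fodw \circ \su$ by syntactic translations in each direction. In both cases, the key observation is that in a $\su_k$-tagging $\tau_\Pb(w)$, the tag carried by each position $y$ records the \keqsu-class of the prefix of $w$ ending strictly before $y$, which by Lemma~\ref{lem:eqclasses} is determined by (and determines) the letters at the $k$ positions immediately preceding $y$. This is precisely the local left-context information that successor makes available in \fodp but which cannot be recovered from $<$ alone in \fodw.

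For $\fodw \circ \su \subseteq \fodp$, I take a language $L = \bigcup_{P \in \Pb} (P \cap \tau_\Pb\inv(L_P))$ with $\Pb$ a $\su_k$-partition and each $L_P$ in $\fodw(\Pb \times A)$. By Lemma~\ref{lem:eqclasses}, every $P \in \Pb$ is a Boolean combination of suffix conditions of length $k$ and of explicit short words, and is therefore \fodp-definable (the nested successor applications fit within two variables by reuse). For each $P$, I translate the \fodw formula $\psi_P$ defining $L_P$ into an \fodp formula over $A$ by replacing every label atom $(Q, a)(x)$ by $a(x) \wedge \phi_Q(x)$, where $\phi_Q(x)$ is an \fodp formula in one free variable expressing that the prefix of $w$ strictly before position $x$ belongs to $Q$. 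Taking disjunctions over $P \in \Pb$ then yields an \fodp sentence defining $L$.

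For the converse inclusion $\fodp \subseteq \fodw \circ \su$, I start from an \fodp sentence $\phi$ of quantifier rank $n$, pick $k$ linear in $n$, and take $\Pb$ to be the $\su_k$-partition of $A^*$. The goal is to construct, for each $P \in \Pb$, an \fodw formula $\psi_P$ over $\Pb \times A$ such that $L \cap P = P \cap \tau_\Pb\inv(L'_P)$, where $L'_P$ denotes the language defined by $\psi_P$. The construction proceeds by induction on $\phi$: label and order atoms, as well as the min, max and $\varepsilon$ predicates, translate directly; the delicate step is the binary atom $x+1=y$. My approach is an absorption principle: a subformula of the form $\exists x\,(x+1=y \wedge \chi(x))$ is rewritten as a unary condition on the tag at $y$, since the tag already encodes everything $\chi$ can examine about a bounded left-neighbourhood of $y-1$; the symmetric case $\exists y\,(x+1=y \wedge \chi(y))$ is treated analogously by shifting attention to the target position.

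The main obstacle is making this absorption precise under the strict two-variable discipline of \fodw, where successor cannot be expressed via $<$ alone, so every $+1$-atom must be eliminated while the translated subformula keeps at most two free variables. To carry this out cleanly, I will phrase the argument as an \efgame game: for every $n$, I will exhibit a tagging parameter $k$ (linear in $n$) and a winning back-and-forth strategy showing that words $u, v$ whose $\su_k$-taggings are $n$-round equivalent in two-variable \fodw are themselves $n$-round equivalent in two-variable \fodp. Each $\psi_P$ can then be taken to define the union of those \fodw-equivalence classes on $(\Pb \times A)^*$ whose preimages under $\tau_\Pb$, intersected with $P$, lie in $L$, which gives the required decomposition of $L$ in $\fodw \circ \su$.
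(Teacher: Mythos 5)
Your overall route coincides with the paper's. The easy inclusion $\fodw \circ \su \subseteq \fodp$ is handled by essentially the same syntactic translation (defining each block $P$ of the partition in \fodp via iterated successor, and substituting for each label atom $(Q,a)(x)$ an \fodp formula computing the tag of $x$). For the hard inclusion you end up, exactly as the paper does, reducing everything to a single \efgame transfer statement: if the $\su_{2k}$-taggings of $w$ and $w'$ are $k$-round equivalent for \fodw, then $w$ and $w'$ are $k$-round equivalent for \fodp. The way you then extract the decomposition of $L$ (taking, for a sentence of rank $k$, the union of those \fodw-equivalence classes of taggings met by $L$) is also the paper's argument.

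The gap is that this transfer statement is precisely where all the work lies, and you only announce it. The paper's proof of it (Proposition~\ref{prop:fo2themainprop}) requires a genuinely nontrivial Duplicator strategy: she plays a shadow \fodw-game on the taggings, but she cannot simply mirror Spoiler's moves there, because a move in the \fodp-game may or may not force a move in the shadow game, and the shadow pebble must be kept far enough from the real one that successor constraints can still be met. The paper introduces the notion of a position being $h$-safe for the shadow pebble and maintains a three-part invariant after each round (equal offsets between real and shadow pebbles, safety with margin $k-j$ after round $j$, and a surviving winning strategy in the shadow game); the case analysis for restoring this invariant when Spoiler jumps outside the safe zone is the heart of the argument and is entirely absent from your sketch. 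Separately, the ``absorption principle'' you float first is unsound as stated: in $\exists x\,(x+1=y \wedge \chi(x))$ the subformula $\chi$ is not local --- it can quantify over arbitrary positions via $<$ --- so its truth at position $y-1$ is not determined by the tag at $y$. You rightly retreat from this to the game argument, but it should not be presented as a workable alternative.
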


\begin{remark}
  It is important to point out that while the formulation is new, the underlying ideas behind Proposition~\ref{prop:fo2enrich} were already known. This connection between \fodw and \fodp was originally presented by~Thérien and Wilke~\cite{twfodeux}. However, the full proof of this result is scattered in the literature and relies on different terminology. Thus, it makes sense to detail it here.
\end{remark}

There are two inclusions to prove for showing Proposition~\ref{prop:fo2enrich}. We devote a subsection to each of them.

\subsection{From enrichment to successor} We show here that any language in the \su-enrichment of \fodw may be defined by an \fodp sentence. For this, let us fix an alphabet $A$ and consider a language $L \in \fodw \circ \su$ over $A$. By definition, there exists an \su-partition \Pb of $A^*$ such that,
\[
  L = \bigcup_{P \in \Pb} (P \cap \tau_\Pb\inv(L_P))
\]
where all languages $L_P \subseteq (\Pb \times A)^*$ are definable in \fodw. We show that $L$ can be defined by an \fodp sentence. Since we may freely use Boolean connectives in \fodp sentences, it suffices to show that for all $P\in \Pb$, both $P$ and $\tau_\Pb\inv(L_P)$ are  defined by an \fodp sentence. We start with the following preliminary lemma.

\begin{lemma} \label{lem:prelimfo2}
  For any $u \in A^*$, one may construct an \fodp formula $\varphi_u(x)$ (with one free variable $x$) such that for any $w \in A^*$ and any position $x$ in $u$, we have $w \models \varphi_u(x)$ if and only if $w[0,x-1] \in A^*u$.
\end{lemma}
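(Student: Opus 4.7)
The plan is to proceed by induction on $|u|$, building $\varphi_u$ so that it walks backward from position $x$ using the successor predicate, checking one letter of $u$ at a time. The key challenge is to stay within two variables: I will reuse the names $x$ and $y$ in alternation, exploiting the fact that an inner quantifier $\exists x$ (respectively $\exists y$) shadows any outer occurrence of the same variable.

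More precisely, I would prove a slightly stronger statement by induction on $|u|$: for each $u \in A^*$ and each choice of free variable $z \in \{x,y\}$, one constructs an \fodp formula $\varphi_u(z)$ using only the variable names $x$ and $y$ such that, for any $w \in A^*$ and any position $p$ of $w$, $w \models \varphi_u(z)$ with $z$ interpreted as $p$ if and only if $w[0,p-1] \in A^* u$. For the base case $u = \varepsilon$, take $\varphi_\varepsilon(z) = \top$, which matches the fact that $w[0,p-1] \in A^*$ always holds. For the inductive step, write $u = u' a$ with $a \in A$ and $u' \in A^*$, and set
\[
\varphi_u(x) = \exists y \bigl( y+1 = x \wedge a(y) \wedge \varphi_{u'}(y) \bigr),
\]
\[
\varphi_u(y) = \exists x \bigl( x+1 = y \wedge a(x) \wedge \varphi_{u'}(x) \bigr),
\]
using the two formulas provided by the inductive hypothesis. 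Both formulas use only the variable names $x$ and $y$, so the construction stays inside \fodp.

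Correctness follows directly from the semantics. Indeed, $w \models \varphi_u(x)$ with $x$ interpreted as $p$ holds if and only if the position $p-1$ exists in $w$, carries label $a$, and satisfies $\varphi_{u'}(y)$ when $y$ is interpreted as $p-1$. By the inductive hypothesis, the latter is equivalent to $w[0,p-2] \in A^* u'$, and combined with the letter $a$ at position $p-1$ this says precisely $w[0,p-1] \in A^* u' a = A^* u$. The only delicate point, and the one worth stating explicitly, is that the nested alternation of quantifiers $\exists y, \exists x, \exists y, \dots$ allows the variable names to be recycled without increasing the count beyond two; this is exactly the feature that separates \fodw (and \fodp) from unrestricted first-order logic, and it is what makes the construction possible at all.
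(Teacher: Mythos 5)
Your proof is correct and follows essentially the same route as the paper: induction on $u$, peeling off the last letter $a$ of $u = u'a$ and using a successor step $\exists y\,(y+1=x \wedge a(y) \wedge \varphi_{u'}(y))$. Your explicit bookkeeping of the two versions $\varphi_u(x)$ and $\varphi_u(y)$ to justify that the alternating reuse of variable names keeps the formula inside \fodp is a welcome precision that the paper's one-line proof leaves implicit.
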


\begin{proof}
  We use induction on $u$ to define $\varphi_u(x)$. If $u = \varepsilon$, it suffices to define $\varphi_u(x) = \top$. Otherwise, $u =va$ for some $v \in A^*$ and $a \in A$ and we define,
  \[
    \varphi_u(x) = \exists y\ (y+1 = x \wedge a(y) \wedge \varphi_v(y)).
  \]
  This concludes the proof of Lemma~\ref{lem:prelimfo2}.
\end{proof}

We now start the main argument. Let $P\in \Pb$, we first show that $P$ and $\tau_\Pb\inv(L_P)$ may both be defined by an \fodp sentence.

\medskip
\noindent
{\bf Case~1}: each $P\in\Pb$ may be defined by an \fodp sentence. By definition of \Pb, we know that $P \in \su(A)$, whence $P$ is a finite Boolean combination of languages of the form $A^*w$, with $w \in A^*$. Since \fodp is a Boolean algebra, it suffices to show that $A^*w$ can be defined in \fodp. If $w = \varepsilon$, then $A^*$ is defined by the sentence $\top$. Otherwise, $w = ua$ with $u \in A^*$ and $a \in A$, $A^*w$ is defined by $\exists x\ max(x) \wedge a(x) \wedge \varphi_u(x)$.

\medskip
\noindent
{\bf Case~2}: the language $\tau_\Pb\inv(L_P)$ may be defined by an \fodp sentence. Recall that $L_P \subseteq (\Pb \times A)^*$ is defined by some \fodw sentence $\xi$. We use the following fact which is an immediate consequence of Lemma~\ref{lem:prelimfo2} since all languages in \Pb belong to \su.

\begin{fct} \label{fct:computelabel}
  Given any $(P,a) \in \Pb \times A$, there exists a \fodp formula $\zeta_{(P,a)}(x)$ (over $A$) with one free variable such that for any $w \in A^*$ and any position $x$ in $w$, we have, $w \models \zeta_{(P,a)}(x)$ if and only if $x$ has label $(P,a) \in \tau_\Pb(w)$.
\end{fct}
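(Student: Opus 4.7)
My plan is to unwind the definition of the $\Pb$-tagging and reduce the claim to Lemma~\ref{lem:prelimfo2}. A position $x$ has label $(P,a)$ in $\tau_\Pb(w)$ precisely when (i)~$x$ carries the letter $a$ in $w$, and (ii)~the prefix strictly preceding $x$, namely $w[0,x-1]$, belongs to $P$. Condition (i) is expressed by the atomic \fodp formula $a(x)$, so the whole task reduces to producing an \fodp formula $\chi_P(x)$ capturing condition (ii).

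To construct $\chi_P(x)$, I would use the hypothesis that $\Pb$ is an \su-partition of $A^*$: every $P \in \Pb$ is, by definition of \su, a finite Boolean combination of languages of the form $A^*u$ with $u \in A^*$. Lemma~\ref{lem:prelimfo2} already provides, for each $u \in A^*$, an \fodp formula $\varphi_u(x)$ with one free variable such that $w \models \varphi_u(x)$ iff $w[0,x-1] \in A^*u$. Taking the corresponding Boolean combination of the $\varphi_u$'s yields the desired $\chi_P(x)$, and I would then simply set
\[
\zeta_{(P,a)}(x) \;:=\; a(x) \;\wedge\; \chi_P(x).
\]
The boundary case $x=0$ causes no trouble: then $w[0,x-1]=\varepsilon$, and the recursive definition of $\varphi_u$ in Lemma~\ref{lem:prelimfo2} correctly evaluates (it returns $\top$ for $u=\varepsilon$ and false for $u\neq\varepsilon$, since every nonempty $u$ requires at least one predecessor).

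The only point worth checking carefully is that $\zeta_{(P,a)}$ remains in \fodp, i.e., uses at most two variables. This is inherited from the construction of $\varphi_u$ in Lemma~\ref{lem:prelimfo2}: the recursion walks one successor step backwards at each stage and can be written by alternating the two variable names $x,y$ across successive levels of nesting, exactly as is standard in \fodp. Adding an atomic conjunct $a(x)$ and taking Boolean combinations preserves the two-variable restriction, so $\zeta_{(P,a)}(x) \in \fodp$. I do not foresee a real obstacle here; the statement is essentially a bookkeeping consequence of Lemma~\ref{lem:prelimfo2} together with the description of \su as the Boolean closure of the languages $A^*u$.
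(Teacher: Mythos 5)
Your proof is correct and follows exactly the route the paper intends: the paper states Fact~\ref{fct:computelabel} as ``an immediate consequence of Lemma~\ref{lem:prelimfo2} since all languages in \Pb belong to \su'', and your argument (split the label condition into the letter test $a(x)$ and the prefix test, express the latter as a Boolean combination of the formulas $\varphi_u(x)$ from Lemma~\ref{lem:prelimfo2}) is precisely the unwinding of that remark, with the boundary case $x=0$ and the two-variable bookkeeping correctly checked. Nothing to add.
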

It is now simple to construct an \fodp sentence defining $\tau_\Pb\inv(L_P)$ from the \fodw sentence $\xi$ defining $L_P \subseteq (\Pb \times A)$: we replace atomic subformulas of the form $(P,a)(x)$, for some $(P,a) \in \Pb \times A$, by the formula $\zeta_{(P,a)}(x)$. This concludes the proof for this~direction.

\subsection{\efgame games} To prove the converse direction in Proposition~\ref{prop:fo2enrich}, we need the \efgame games associated to \fodw and \fodp. We first define these games.

\begin{remark}
  For the sake of simplifying the \fodp-game, we shall assume that the predicates $min,max$ and $\varepsilon$ are not allowed in \fodp. This is not restrictive since $min(x)$ is defined by $\neg(\exists y\ y < x)$, $max(x)$ by $\neg(\exists y\ y>x)$ and $\varepsilon$ by $\forall x \bot$.
\end{remark}

The \emph{(quantifier) rank} of a first-order formula $\varphi$, denoted $\rk\varphi$, is defined as the largest number of quantifiers along a branch in the parse tree of~$\varphi$. Formally, $\rk\varphi=0$ if $\varphi$ is an atomic formula, $\rk{\neg\varphi}=\rk\varphi$, $\rk{\varphi_1\lor\varphi_2}=\max(\rk{\varphi_1},\rk{\varphi_2})$ and $\rk{\exists x\,\varphi}=\rk\varphi+1$. For any alphabet $A$, any natural number $k \in \nat$ and any two words $w,w' \in A^*$, we write:
\begin{itemize}
\item $w \kfodeq w'$ when $w$ and $w'$ satisfy the same \fodw sentences of rank $k$.
\item $w \kfodeqp w'$ when $w$ and $w'$ satisfy the same \fodp sentences of rank $k$.
\end{itemize}

It is immediate that both $\kfodeq$ and $\kfodeqp$ are equivalence relations over the set $A^*$. Moreover, one may verify the following standard lemma, which characterizes languages definable in \fodw and \fodp using these relations:

\begin{lemma}[Folklore] \label{lem:fo2equiv}
  Given any alphabet $A$, any language $L \subseteq A^*$ and any natural number $k \in \nat$, the following properties hold:
  \begin{itemize}
  \item $L$ may be defined by a \fodw sentence of rank $k$ if and only if $L$ is a union of $\kfodeq$-classes.
  \item $L$ may be defined by a \fodp sentence of rank $k$ if and only if $L$ is a union of $\kfodeqp$-classes.
  \end{itemize}
\end{lemma}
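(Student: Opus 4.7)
The statement is a standard Ehrenfeucht–Fraïssé style result, so my plan is to follow the classical approach via characteristic (Hintikka) formulas, handling the two items simultaneously since they differ only by which atomic predicates are allowed. The ``only if'' direction of each item is immediate from the definition of the equivalences: if $L$ is defined by a rank-$k$ sentence $\varphi$ of the relevant fragment, then by definition any two words $w\kfodeq w'$ (resp.\ $w\kfodeqp w'$) satisfy exactly the same rank-$k$ sentences of that fragment, and in particular agree on $\varphi$, so $L$ is a union of classes.

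For the converse, I would prove by induction on $k$ that (a) the equivalence has finitely many classes and (b) each class is defined by a single two-variable sentence of rank $k$ in the relevant fragment. Once (a) and (b) are in place, any union of classes is a \emph{finite} disjunction of such sentences and therefore again definable by a rank-$k$ sentence of the fragment. The base case $k=0$ is finite because only atomic information about at most two pebbled positions is available; for \fodp the atomic type of a pair of pebbles also records the successor relation between them, while for \fodw it records only labels and order.

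For the inductive step, I would attach to each pair $(w,\nu)$ (where $w$ is a word and $\nu$ is a partial assignment of $x$ and $y$ to positions of $w$) a characteristic rank-$(k+1)$ formula obtained as the conjunction of the atomic type of $\nu$ together with, for each variable $z\in\{x,y\}$ and each rank-$k$ characteristic formula $\psi$ given by the induction hypothesis, the statement ``$\exists z\,\psi$'' when some position realises $\psi$ under the overwriting of $z$, and ``$\forall z\bigvee \psi$'' ranging over realised classes. Since by induction there are only finitely many rank-$k$ characteristic formulas, this conjunction is finite, has rank $k+1$, uses only the two variables $x,y$, and two words share the same rank-$(k+1)$ characteristic sentence iff they are $(k+1)$-equivalent.

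The only genuine obstacle is the reuse of variables inherent to two-variable logic: when one forms $\exists x\,\psi$, the position previously pebbled by $x$ is overwritten, so the induction hypothesis must be phrased for characteristic formulas parametrised by the \emph{current} assignment $\nu$ rather than by a full ``type'' of positions. With that parametrisation the argument runs uniformly for both fragments, the only difference being the atomic predicates entering the rank-$0$ types, which yields both items of the lemma in a single induction.
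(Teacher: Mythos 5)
Your proof is correct and is the standard Hintikka/characteristic-formula argument; the paper itself gives no proof of this lemma (it is stated as folklore), and your argument — easy direction from the definition of \kfodeq and \kfodeqp, converse by induction on rank with characteristic formulas parametrised by the current two-variable assignment so that variable reuse is handled — is exactly the argument the paper implicitly invokes. No gaps.
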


We now define the \efgame games associated to \fodw and \fodp, which give alternate definitions for the relations $\kfodeq$ and $\kfodeqp$.

\medskip
\noindent
{\bf \fodw game.} The board of the $\fodw$-game consists of two words $w$ and $w'$. It lasts a predefined number $k$ of rounds. There are two players called \emph{Spoiler and Duplicator}. Moreover, there are two pebbles and at any time during the game after the first round, one of them is placed on a position of $w$ and the other on a position $w'$, and these two positions have the same label (when the game starts, no pebble is on the board).

In the first round, Spoiler chooses a word (either $w$ or $w'$) and places a pebble on a position of this word. Duplicator must answer by placing the other pebble on a position of the other word having the same label. The remaining rounds are played as follows. Spoiler chooses a word (either $w$ or $w'$) and moves the pebble inside this word from its original position $x$ to a new position $y$. Duplicator must answer by moving the other pebble in the other word from its original position $x'$ to a new position $y'$ having the same label as $y$ and such that $x' < y'$ if and only if $x < y$.  

Duplicator wins if she manages to play for all $k$ rounds. Spoiler wins as soon as Duplicator is unable to play.

\medskip
\noindent
{\bf \fodp game.} The \fodp-game is defined similarly with an additional constraint for Duplicator when answering Spoiler's moves. When Spoiler makes a move, Duplicator must choose her answer $y'$ so that $x' + 1 = y'$ if and only if $x+1 = y$ and $y' + 1 = x'$ if and only if $y+1 = x$ (in addition to the constraints already presented for the \fodw game).

\medskip

We may now state the \efgame theorem for \fodw and \fodp: \kfodeq and \kfodeqp are characterized by the \fodw game and the \fodp game, respectively.

\begin{theorem}[Folklore] \label{thm:fo2ef}
  Let $A$ be an alphabet. Given any $k \in \nat$ and any words $w,w' \in A^*$, the two following properties hold:
  \begin{itemize}
  \item $w \kfodeq w'$ if and only if Duplicator has a winning strategy for playing $k$-rounds in the \fodw-game over $w$ and $w'$.
  \item $w \kfodeqp w'$ if and only if Duplicator has a winning strategy for playing $k$-rounds in the \fodp-game over $w$ and $w'$.
  \end{itemize}
\end{theorem}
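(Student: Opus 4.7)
The plan is to prove both items by induction on the rank $k$, via a standard back-and-forth argument. Because the games involved are pebble games with only two pebbles that persist across rounds, the induction should be formulated at the level of \emph{parameterized configurations}: pairs consisting of a word $w$ together with a (possibly partial) assignment of pebble positions, one per word. I would define $k$-equivalence of two such configurations $(w,\rho)$ and $(w',\rho')$ to mean that they satisfy the same \fodw formulas of rank at most $k$ whose free variables lie in the common domain of $\rho$ and $\rho'$, interpreted with the free variables set to the pebble positions. The theorem for plain words $w,w'$ is then the special case where the initial configuration has no pebbles placed, with the first round of the game accounting for the outermost quantifier of a sentence.

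For the base case $k=0$, two parameterized configurations are $0$-equivalent in \fodw iff they satisfy the same quantifier-free formulas over the currently bound variables, i.e., iff the pebble placements agree on every atomic formula (labels, order, equality). This is exactly the partial-isomorphism condition that constitutes Duplicator's trivial winning condition at the leaves of the game. For the inductive step, a rank-$(k+1)$ formula with free variables in $\{x,y\}$ is, up to Boolean combinations, of the form $\exists z\,\psi$ with $z\in\{x,y\}$ and $\mathrm{rk}(\psi)\le k$; the quantifier $\exists z$ re-binds the variable $z$ to a new position while leaving the other free variable (if any) untouched. This re-binding corresponds exactly to Spoiler moving the pebble associated with $z$. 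Hence $(k+1)$-equivalence holds iff for each of Spoiler's possible moves (choice of pebble and target position), Duplicator has a response producing $k$-equivalent configurations, which, by the induction hypothesis, is precisely Duplicator's winning strategy for $k+1$ rounds. Specializing to $\rho=\rho'=\emptyset$ and then invoking Lemma~\ref{lem:fo2equiv} yields the first item.

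For the \fodp variant, the only change is in the base case: atomic formulas now include $x+1=y$ and $y+1=x$, so $0$-equivalence additionally requires the partial isomorphism to preserve the successor relation between pebbles—matching exactly Duplicator's extra constraint in the \fodp-game. The inductive step is identical, since the enlarged vocabulary affects only atomic formulas, and the second item then follows in the same way.

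The principal technical subtlety lies in matching the game's convention of ``moving a pebble'' to the logical operation of ``re-binding a variable.'' Because there are only two variables in \fodw and \fodp, a new quantifier does not introduce a fresh pebble but overwrites an existing one, and the old position is forgotten after the round ends. Setting up the parameterization of configurations carefully, and verifying that Spoiler's choice of which side and which pebble to move corresponds precisely to the outermost existential quantifier in a formula of rank $k+1$, is the main bookkeeping step; once this is in place, the induction is routine.
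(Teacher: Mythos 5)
The paper states Theorem~\ref{thm:fo2ef} as folklore and supplies no proof, so there is nothing of the authors' to compare against; what you give is the standard back-and-forth induction on quantifier rank, and it is correct in outline. Two details carry real weight and should be made explicit. First, your inductive step silently relies on the \fod normal form: a rank-$k$ formula $\psi(x,y)$ with \emph{both} variables free decomposes as a Boolean combination of atomic formulas in $x,y$ and of rank-$\le k$ subformulas beginning with a quantifier, each of which has at most \emph{one} free variable. This is exactly what licenses forgetting the old pebble position once the atomic constraints relating the old and the new position have been checked, and it is why the game in the paper keeps a single pebble per word while your parameterized configurations nominally allow two; without this decomposition, the identification of ``re-binding a variable'' with ``moving the unique pebble and then discarding its previous location'' does not go through. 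Second, the direction from $w \kfodeq w'$ to a winning strategy for Duplicator needs the fact that, up to logical equivalence, there are only finitely many rank-$k$ formulas in one free variable, so that ``for every position $y$ chosen by Spoiler there is a $k$-equivalent answer $y'$'' can be internalized as a single rank-$(k{+}1)$ formula (the usual Hintikka-formula argument); your sketch asserts the equivalence of $(k{+}1)$-equivalence with the existence of matching responses without noting where this finiteness is used. Both points are routine to fill in, and your observation that the \fodp case differs only at the atomic level (successor constraints between the old and new pebble positions) is correct.
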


\subsection{From successor to enrichment} We are now ready to show the remaining direction in Proposition~\ref{prop:fo2enrich}: any language that can be defined by an \fodp sentence belongs to the \su-enrichment of \fodw.

We start with a preliminary definition. Recall that given any $k \in \nat$, \keqsu denotes the canonical equivalence defined on $A^*$ associated to $\su_k$. We write $\Pb_k$ for the finite partition of $A^*$ into classes of \keqsu. Note that $\Pb_k$ is a \su-partition of $A^*$ by Lemma~\ref{lem:canoeq}. In the proof, we use the \su-partitions $\Pb_k$ to build languages in $\fodw \circ \su$. For the sake of simplifying the notation, given $k \in \nat$, we shall write $\tau_k$ for the map $\tau_{\Pb_k}: A^* \to (\Pb_k \times A^*)$.

\smallskip
Our argument to prove $\fodp \subseteq \fodw \circ \su$ is based on the following result.

\begin{proposition} \label{prop:fo2themainprop}
  Let $k \in \nat$, and $w,w' \in A^*$. If $\tau_{2k}(w) \kfodeq \tau_{2k}(w')$, then $w \kfodeqp w'$.
\end{proposition}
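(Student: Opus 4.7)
The proof I have in mind uses Theorem~\ref{thm:fo2ef} to reformulate the statement game-theoretically: assuming Duplicator wins the $k$-round two-pebble \fodw game on $(\tau_{2k}(w), \tau_{2k}(w'))$, I will construct a winning Duplicator strategy for the $k$-round two-pebble \fodp game on $(w, w')$. I would prove this by induction on $k$, strengthening the statement so that the induction hypothesis also applies to arbitrary starting pebble-configurations $(x, y)$ with matching $\tau_{2k}$-tags, not just to the empty initial configuration. The base case $k = 0$ is vacuous since no round is played.

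For the inductive step, the two recurring facts I would repeatedly exploit are: first, that the congruence $\eqsu{2k}$ is preserved under appending a common single letter on the right, so that whenever the tags at $(x,y)$ match in $\tau_{2k}$ and additionally $w[x+1]=w'[y+1]$, the tags at $(x+1,y+1)$ also match in $\tau_{2k}$; second, that $\eqsu{2k}$ refines $\eqsu{2(k-1)}$, so a configuration with matching $\tau_{2k}$-tags automatically has matching $\tau_{2(k-1)}$-tags, which will allow me to hand configurations to the induction hypothesis at the smaller depth $2(k-1)$.

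With these in hand, I describe Duplicator's reply to a Spoiler move $x \to x^*$ in the \fodp game by cases. When $x^* \neq x \pm 1$, Duplicator plays $\sigma$'s response $y^*$ from the tagged \fodw game (adjusting if $\sigma$ happens to return $y \pm 1$, which would violate the \fodp non-successor constraint; I would argue such adjustments are available because matching tags leave several legal choices). When $x^* = x + 1$ (symmetrically $x-1$), Duplicator plays $y^* = y + 1$. For this case I must verify three things: (i) the label consistency $w[x+1] = w'[y+1]$, which combines $w[x]=w'[y]$ from the starting tag match with $\sigma$'s winning in the tagged \fodw game (if $w'[y+1] \neq w[x+1]$ then Spoiler could play $x+1$ against $\sigma$ and no tag-compatible response would exist at $y+1$, yet $\sigma$ would still have to produce a response); (ii) the existence of $y + 1$ inside $w'$ whenever $x + 1$ exists in $w$, which also follows by confronting $\sigma$ with the Spoiler move $x+1$; and (iii) the recoverability of the induction hypothesis at $(x+1, y+1)$ for $k-1$ more rounds.

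The \textbf{main obstacle} is verifying (iii) in the successor case: the strategy $\sigma$ may have chosen some $y^* \neq y+1$, so I cannot directly inherit $\sigma$'s continuation from $(x+1, y+1)$. My plan to resolve this is to argue that any two Duplicator replies with matching $\tau_{2k}$-tag are interchangeable for the residual $(k-1)$-round game at the coarser tagging depth $\tau_{2(k-1)}$: the slack of two letters between depths $2k$ and $2(k-1)$ is precisely the budget needed to absorb one successor shift and one inductive decrement of $k$, so the new configuration still admits a $(k-1)$-round \fodw winning strategy at the depth $2(k-1)$ demanded by the induction hypothesis. This bookkeeping of the ``tagging-depth budget'' — where exactly the factor two in $2k$ is spent — is the delicate point of the argument; once it is established, the strengthened induction hypothesis closes the proof.
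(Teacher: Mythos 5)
Your high-level plan (EF games, a shadow \fodw game on the tagged words, a round-counting ``budget'' for the factor $2$) matches the paper's, but the concrete strategy has a genuine gap at its central step. You anchor the shadow pebble at the \emph{same} position as the real one and rely on matching $\tau_{2k}$-tags, yet the label of a position $x$ in $\tau_{2k}(w)$ records only the $\eqsu{2k}$-class of the prefix $w[0,x-1]$ together with the letter $w[x]$ --- it describes the window of $2k$ letters to the \emph{left} of $x$ and nothing to its right. A matching-tag configuration $(x,y)$ therefore carries no information relating $w[x+1]$ to $w'[y+1]$, and your justifications of (i) and (ii) do not go through: when you confront $\sigma$ with the Spoiler move $x+1$ in the tagged \fodw game, $\sigma$ is only obliged to answer with \emph{some} position $y^*>y$ carrying the same tag; nothing forces $y^*=y+1$, so no contradiction arises if $w'[y+1]\neq w[x+1]$ or if $y+1$ does not exist. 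The same left-only nature of the tags undermines your resolution of (iii): two positions of $w'$ with equal $\tau_{2k}$-tags share a left window of length $2k$ but may have entirely different right contexts, so one may admit a winning Duplicator continuation while the other does not; the two-letter slack between depths $2k$ and $2(k-1)$ does not address this, and the re-tagging at coarser depth is not where the factor $2$ is actually spent.

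The idea you are missing is to \emph{offset} the shadow pebble. In the paper, Duplicator places it at $z=\min(|w|-1,\,x+k-1)$, to the right of the real pebble, so that the positions $z-2k,\dots,z$ described by the label of $z$ cover the whole radius-$(k-1)$ neighbourhood of $x$ on both sides. The invariant maintained after round $j$ is that the two offsets agree ($x-z=x'-z'$), that $x$ and $x'$ are $(k-j)$-safe for $z$ and $z'$ (their radius-$(k-j)$ neighbourhoods lie inside the described windows), and that the shadow strategy still has $k-j$ rounds in reserve. Successor moves --- and more generally any move staying inside the safe window --- are then answered purely locally, by reading the answer off the matching labels at the same offset, without consuming a shadow round; only when Spoiler escapes the window does Duplicator re-anchor the shadow pebble, and the shrinking safety radius is exactly what guarantees that the new answer lands far enough from the previous pebble to respect the $+1$ constraints. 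The tagging depth stays at $2k$ throughout: the factor $2$ pays for covering both sides of the real pebble, not for decrementing the tagging depth along the induction.
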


Before showing Proposition~\ref{prop:fo2themainprop}, we use it to conclude our argument for Proposition~\ref{prop:fo2enrich}. Let $L \subseteq A^*$ be defined by some \fodp sentence $\varphi$. We show that $L \in \fodw \circ \su$. By Lemma~\ref{lem:fo2equiv}, $L$ is a union of \kfodeqp-classes, where $k$ is the rank of $\varphi$. We define a language $H \subseteq (\Pb_{2k} \times A)^*$ as follows:
\[
  H = \{u \in {(\Pb_{2k} \times A)}^* \mid \text{there exists $w \in L$ such that } \tau_{2k}(w) \kfodeq u\}.
\]
By definition, $H$ is a union of \kfodeq-classes and can therefore by defined by some \fodw sentence of rank $k$ (see Lemma~\ref{lem:fo2equiv}). We show that,
\begin{equation} \label{eq:fo2enrichproof}
  L = \tau_{2k}\inv(H) = \bigcup_{P \in \Pb_{2k}} (P \cap \tau_{2k}\inv(H)),
\end{equation}
an expression showing that $L \in \fodw \circ \su$ (since $\Pb_{2k}$ is an \su-partition of $A^*$). To prove~\eqref{eq:fo2enrichproof}, we start with the left to right inclusion. Assume that $v \in L$. It is then immediate from the definition that $\tau_{2k}(v) \in H$ since $\tau_{2k}(v) \kfodeq \tau_{2k}(v)$, hence we get $v \in \tau_{2k}\inv(H)$, which establishes the first inclusion. For the converse inclusion, assume that $v \in \tau_{2k}\inv(H)$. We show that $v \in L$. Since $v \in \tau_{2k}\inv(H)$, we get by definition of $H$ that there exists $w \in L$ such that $\tau_{2k}(w) \kfodeq \tau_{2k}(v)$. Thus, we obtain from Proposition~\ref{prop:fo2themainprop} that $w \kfodeqp v$. Finally, since $w \in L$ and $L$ is a union of \kfodeqp-classes, we obtain that $v \in L$.

\medskip

It remains to prove Proposition~\ref{prop:fo2themainprop}. Let $k \in \nat$ and let $w,w' \in A^*$ be words such that $\tau_{2k}(w) \kfodeq \tau_{2k}(w')$. Our objective is to prove that $w \kfodeqp w'$. By Theorem~\ref{thm:fo2ef}, this amounts to describing a winning strategy for Duplicator in the $k$-round \fodp-game over $w$ and $w'$. We call this game \Gs. Duplicator's strategy involves playing another ``shadow'' \fodw-game over $\tau_{2k}(w)$ and $\tau_{2k}(w')$. Recall that by hypothesis and Theorem~\ref{thm:fo2ef}, she has a winning strategy for $k$ rounds in this shadow game. Depending on the moves that Spoiler makes in \Gs, Duplicator may have to simulate a move by Spoiler in the shadow game. Her strategy in the shadow game gives her an answer to this simulated move, which she is then able to use for computing a suitable answer in \Gs.

Recall that for any word $v \in A^*$ (including $w$ and $w'$), $\tau_{2k}(u)$ is a relabeling of $u$ over $(\Pb \times A)$. In particular, this means that any position of $w$ (resp. $w'$) corresponds to a position of $\tau_{2k}(w)$ (resp. $\tau_{2k}(w')$), and may be viewed as such, and \emph{vice versa}. This will be convenient to relate the moves performed in \Gs to the ones played in the shadow game.

Given two positions $x$ and $z$ of $w$ and a natural number $h \leq k-1$, we say that \emph{$x$ is $h$-safe for $z$} when all positions $y$ in $u$ such that $|x-y| \leq h$ satisfy $z - 2k \leq y \leq z$. Note that these positions in $w$ are all fully described by (the two components of) the label of $z$ in $\tau_{2k}(w)$. We extend the definition to positions $x',z'$ of $w'$ in the same way.

We may now describe Duplicator's strategy in \Gs. It involves enforcing an invariant $\Is(j)$ which has to hold after each round $j$. Let $1 \leq j \leq k$. Assume that $j$ rounds have been played in \Gs so far and let $x,x'$ be the positions of $w,w'$ on which the pebbles are currently placed in \Gs. Furthermore, let $z,z'$ be the positions of $\tau_{2k}(w),\tau_{2k}(w')$ on which the pebbles are currently placed in the shadow game. We say that $\Is(j)$ holds when the following conditions are met.
\begin{enumerate}
\item $x - z = x' - z'$.
\item $x$ is $(k-j)$-safe for $z$ and $x'$ is $(k-j)$-safe for $z'$.
\item Duplicator has a wining strategy for playing $k-j$ more rounds in the shadow game.
\end{enumerate}

We now describe a strategy allowing Duplicator to play and enforce $\Is(j)$ after each round $j$, $1\leq j\leq k$. Let us first explain how Duplicator may enforce $\Is(1)$ after round~1.

Assume that Spoiler puts a pebble on a position $x$ of $w$ (the case when Spoiler puts a pebble in $w'$ is symmetrical). Then, Duplicator simulates a moves by Spoiler in the shadow game by putting a pebble on position $z = min(|w|-1,x+k-1)$ in $\tau_{2k}(w)$. She then obtains an answer $z'$ in $\tau_{2k}(w')$ having the same label as $z$ from her strategy. By definition, $x$ is $(k-1)$-safe for $z$. Since $z$ and $z'$ have the same label in $\tau_{2k}(w)$ and $\tau_{2k}(w')$, by definition of the labels in $\tau_{2k}(w)$ and $\tau_{2k}(w')$, there must exist a position $x'$ in $w'$ such that $z - x = z' - x'$ and with the same label as $x$. This position $x'$ is Duplicator's answer, which is clearly correct. Moreover, $\Is(1)$ is satisfied.

We now assume that $j \geq 1$ rounds have already been played and that $\Is(j)$ holds. Let $x,x'$ be the positions of $w,w'$ on which the pebbles are currently placed in \Gs and let $z,z'$ be those in $\tau_{2k}(w),\tau_{2k}(w')$ on which the pebbles are currently placed in the shadow game. Assume that Spoiler moves the pebble from position $x$ in $w$ to a new position $y$ (as before, the other case is symmetrical). We describe a correct answer for Duplicator which satisfies $\Is(j+1)$. There are two cases depending on whether $y$ is $(k-j-1)$-safe for $z$ or not.

Assume first that $y$ is $(k-j-1)$-safe for $z$. In that case, Duplicator does not use the shadow game (the pebbles remain on $z$ and $z'$ for this round). Since $z$ and $z'$ have the same label in $\tau_{2k}(w)$ and $\tau_{2k}(w')$, there must exist a position $y'$ in $w'$ such that $z - y = z' - y'$ and with the same label as $y$. This position $y'$ is Duplicator's answer. One may verify that it is correct and that $\Is(j+1)$ is satisfied.

We now assume that $y$ is not $(k-j-1)$-safe for $z$. There are two sub-cases depending on whether $y < x$ or $x < y$. By symmetry, we only consider the case $x < y$. Note that by hypothesis, we have the following properties:
\begin{enumerate}[label=$\alph*)$]
\item\label{item:y--x+1} $y > x+1$ (since $x$ is $(k-j)$-safe, $x+1$ is $(k-j-1)$-safe).
\item\label{item:y--z} $y > z - k +j +1$ (as $y$ is strictly right of the rightmost $(k-j-1)$-safe position for~$z$).
\item\label{item:z-neq-w} $z \neq |w|-1$ (otherwise, $y > x$ would be $(k-j-1)$-safe for $z$)
\end{enumerate}

Duplicator first simulates a move by Spoiler in the shadow game: she moves the pebble in $\tau_{2k}(w)$ from $z$ to $z_1 = min(|w|-1,y + k - j - 1)$, \emph{i.e.}, to the leftmost position for which $y$ is $(k-j-1)$-safe. Note that $z_1 > z$. Indeed, either $z_1 = |w|-1$ and $z < z_1$ by Item~\ref{item:z-neq-w} above. Otherwise, $z_1 =y + k - j - 1>z$ by Item~\ref{item:y--z}. Hence, Duplicator's strategy in the shadow game yields an answer $z'_1 > z'$ in $\tau_{2k}(w')$ with the same label as~$z_1$.

Since $z_1,z'_1$ have the same label in $\tau_{2k}(w),\tau_{2k}(w')$, we now obtain a position $y'$ in $w'$ such that $z_1 - y = z'_1 - y' = k - j -1$ and with the same label as $y$. This position $y'$ is Duplicator's answer. Proving that it is correct and that $\Is(j+1)$ now holds amounts to showing that $y' > x'+1$.  Since $x'$ was $(k-j)$-safe for $z'$, we know that $x' \leq z'- k +j$. Since $z' < z'_1$, this yields $x' < z'_1- k +j$. Finally, we have $z'_1 - y' = k - j -1$ by definition. Altogether, this means that $x'+1 < y'$, as desired.

\smallskip
This concludes the proof of Proposition~\ref{prop:fo2themainprop}, and therefore of Proposition~\ref{prop:fo2enrich} as well.

\section{Application to quantifier alternation}
\label{sec:hiera}
In this section, we illustrate Theorem~\ref{thm:wfwords} with a second example: the quantifier alternation of first-order logic over words. We prove the following result.

\begin{corollary}[of Theorem~\ref{thm:wfwords}]\label{cor:alt}
  Over words, covering and separation are decidable for the levels \sipe{1}, \bspe{1}, \sipe{2}, \bspe{2} and \sipe{3} in the alternation hierarchy of first-order logic.
\end{corollary}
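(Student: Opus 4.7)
The plan is to reuse the two-step recipe from Corollary~\ref{cor:fod}: I would apply Theorem~\ref{thm:wfwords} to each of the five strong levels once an enrichment characterization of the strong level is available. Specifically, I would prove that $\sipe{n} = \sio{n} \circ \su$ and $\bspe{n} = \bso{n} \circ \su$ for every $n \geq 1$ (plus the analogue for $\pio{n}$, which is convenient for the \bso{n} step). As recalled in Section~\ref{sec:logic}, every $\sio{n}$ is a nontrivial lattice and every $\bso{n}$ is a Boolean algebra, and all of them are closed under right quotient and inverse image, so Theorem~\ref{thm:wfwords} is indeed applicable. The decidability results for the weak side (\sio{1}, \bso{1}, \sio{2}, \bso{2}, \sio{3}) are already available in the literature and were cited in the introduction.

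For the easy inclusion $\sio{n} \circ \su \subseteq \sipe{n}$, I would literally transpose Case~1 and Case~2 of the proof of Proposition~\ref{prop:fo2enrich}: Lemma~\ref{lem:prelimfo2} already shows that any suffix predicate ``$A^*u$'' is expressible by a rank-bounded formula of $\Sigma_1(<,+1)$, so plugging such a formula for each atomic $(P,a)(x)$ occurrence inside an \sio{n}-sentence $\xi$ does not raise the alternation profile above $\Sigma_n$. The \bspe{n} case then follows because Boolean combinations distribute with \su-enrichment for a Boolean algebra~\Cs.

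The real work is the converse inclusion. I would prove a compatibility statement analogous to Proposition~\ref{prop:fo2themainprop}: for every $k$ there exists $m = m(n,k)$ such that $\tau_m(w) \sieq{k} \tau_m(w')$ implies $w \sieqp{k} w'$, where \sieq{k} and \sieqp{k} denote the Ehrenfeucht-Fra\"iss\'e preorders associated with $\sio{n}$ and $\sipe{n}$ (exactly the macros already reserved in the preamble). Granted this statement, the remainder of Proposition~\ref{prop:fo2enrich}'s argument transposes verbatim: given $L \in \sipe{n}$ defined by a rank-$k$ sentence, set
\[
  H = \{u \in (\Pb_m \times A)^* \mid \exists w \in L,\ \tau_m(w) \sieq{k} u\},
\]
observe that $H$ is a union of \sieq{k}-classes and is therefore \sio{n}-definable, and then conclude $L = \tau_m\inv(H) = \bigcup_{P \in \Pb_m}(P \cap \tau_m\inv(H))$, exhibiting $L$ as an element of $\sio{n} \circ \su$. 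The \bspe{n} case follows by combining this with the dual for \pipe{n}, using that \bso{n} is closed under Boolean combinations.

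The main obstacle will be the compatibility statement itself, \emph{i.e.} the shadow-strategy argument in the $\sio{n}$-game. Unlike the \fodw-game of Proposition~\ref{prop:fo2themainprop}, the $\Sigma_n$-game features $k$-tuples of pebbles per round and $n$ alternated rounds, Duplicator must respect only the $\Sigma$-ordered-label type, and successor appears in the real game while the shadow game on tagged words has no successor predicate. The invariant must generalize the ``$(k{-}j)$-safe'' property of Proposition~\ref{prop:fo2themainprop} to families of pebbles across alternating rounds, and the tag radius $m$ must be chosen large enough for two competing purposes: moves staying near already-pebbled positions should be answerable locally from the tag alone, while longer moves should admit a faithful lift to a shadow move that preserves distances and successor relations. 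I expect $m(n,k)$ to grow roughly as a tower in $n$ and $k$, but its precise value is immaterial for decidability, since Theorem~\ref{thm:wfwords} only uses the existence of the enrichment characterization.
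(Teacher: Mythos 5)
Your overall architecture is the paper's: Theorem~\ref{thm:wfwords}, the enrichment characterizations $\sipe{n}=\sio{n}\circ\su$ and $\bspe{n}=\bso{n}\circ\su$ (Proposition~\ref{prop:henrich}), and the cited decidability results for the weak levels. The gap is in the hard inclusion: the compatibility statement you propose is \emph{false} as stated. The paper's Proposition~\ref{prop:themainprop} needs \textbf{two} hypotheses, $\tau_{\ell}(w) \sieq{k+\ell} \tau_{\ell}(w')$ \emph{and} $w \eqsu{\ell} w'$, and accordingly the decomposition is $L=\bigcup_{P\in\Pb_\ell}\bigl(P\cap\tau_{\ell}\inv(H_P)\bigr)$ with $H_P$ depending on $P$ (built from $P\cap L$), not $L=\tau_m\inv(H)$ with a single $H$. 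The suffix hypothesis cannot be dropped here, unlike in the \fodw case you are transposing, because the signature of \sipe{n} contains $max$ and $\varepsilon$ while \sio{n} over the tagged words cannot control the right border. Concretely, take $n=1$, $w=ba$, $w'=bab$: since taggings depend only on prefixes, $\tau_m(w)$ is a prefix of $\tau_m(w')$, so every scattered subword of $\tau_m(w)$ embeds into $\tau_m(w')$ and hence $\tau_m(w)\sieq{k}\tau_m(w')$ for all $m,k$; yet the rank-$1$ \sipe{1} sentence $\exists x\,(max(x)\wedge a(x))$ holds in $w$ and not in $w'$, so $w\sieqp{1}w'$ fails. Your $H$ would then contain $\tau_m(w')$ whenever $w\in L$, wrongly forcing $w'\in L$. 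The fix is exactly the intersection with $P$: it feeds the missing hypothesis $w\eqsu{\ell}w'$ into the game argument, and it is precisely the case of a first move close to the right end of $w$ (Lemma~\ref{lem:innerefgame}, first case) that consumes it.

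Two secondary points. First, in the easy inclusion, substituting the $\Sigma_1(<,+1)$ formulas of Lemma~\ref{lem:prelimfo2} for the atoms $(P,a)(x)$ does not automatically stay within \sipe{n}: atoms under a negation flip to $\Pi_1$, and when $n$ is even the innermost block is universal, so a $\Sigma_1$ substitution there a priori lands in $\Sigma_{n+1}$. The paper first rewrites all label atoms positively and then uses $\Sigma_1(<,+1)$ definitions when $n$ is odd but $\Pi_1(<,+1)$ definitions (Fact~\ref{fct:forallcase}) when $n$ is even. Second, the paper's proof of the compatibility statement is not the safety-invariant shadow-pebble argument you anticipate, but a double induction on $n$ and $k$ via a factorization lemma: a first Spoiler move splits $w=uav$, one produces $w'=u'av'$ with $u\sieqp{k-1}u'$ and $v\sieqp{k-1}v'$, and strategies compose; the parameters come out as $\ell=2^k$ and shadow rank $k+\ell$, far below a tower.
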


Note that Corollary~\ref{cor:alt} subsumes many difficult results from the literature. In particular, it shows that \emph{membership} is decidable for \bspe1 and for \sipe2. Direct proofs for both of these results are difficult~\cite{knast83,gssig2,glasserdd}.

As usual, obtaining Corollary~\ref{cor:alt} from Theorem~\ref{thm:wfwords} requires clearing two preliminary steps.
\begin{itemize}
\item First we prove that the fragments mentioned in the theorem are the \su-enrichment of lattices closed under right quotient and inverse image. As expected, we use \sio{1}, \bso{1}, \sio{2}, \bso{2} and \sio{3} in this case.
\item Then, covering and separation are shown to be decidable for these simpler classes.
\end{itemize}

Again, the second step has already been achieved: it is known that covering and separation are decidable for \sio{1}~\cite{cmmptsep,pzbpol}, \bso{1}~\cite{pvzmfcs13,cmmptsep,pzbpol}, \sio{2}~\cite{pzqalt,pzqaltj,pzbpol}, \bso{2}~\cite{pzbpol} and \sio{3}~\cite{pseps3,pseps3j}. Thus, we concentrate on proving the connections with \su-enrichment. We state them in the following proposition.

\begin{proposition}\label{prop:henrich}
  Given any $n \geq 1$, the following two properties hold over words:
  \begin{itemize}
  \item \sipe{n} is the \su-enrichment of \sio{n}.
  \item \bspe{n} is the \su-enrichment of \bso{n}.
  \end{itemize}
\end{proposition}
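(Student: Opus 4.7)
The plan is to parallel the proof of Proposition~\ref{prop:fo2enrich}, treating the \sipe{n} equality first and deducing the \bspe{n} equality from it. Indeed, $\bspe{n}$ is by definition the Boolean closure of \sipe{n}, and a short argument (using that \sio{n} is a lattice and that any two \su-partitions admit a common \su-refinement obtained by finite intersection) shows that $\bso{n} \circ \su$ is the Boolean closure of $\sio{n} \circ \su$; so the two equalities of the proposition are equivalent, and the proof reduces to establishing $\sio{n} \circ \su = \sipe{n}$ by two inclusions.

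For the forward inclusion, I follow the strategy of the first half of Section~\ref{sec:fo2}. Given $L = \bigcup_{P \in \Pb}(P \cap \tau_\Pb\inv(L_P)) \in \sio{n} \circ \su$ with each $L_P \in \sio{n}$, each $P$ is a finite union of \keqsu-classes by Lemma~\ref{lem:eqclasses}, each class being of the form $\{w\}$ or $A^*v$ and therefore definable in \sipe{1} using $\min$, $\max$, and successor, so $P \in \sipe{1} \subseteq \sipe{n}$. For $\tau_\Pb\inv(L_P)$, I start from an \sio{n}-sentence defining $L_P$ and substitute each atom $(P',a)(x)$ by the formula $\zeta_{(P',a)}(x)$ of Fact~\ref{fct:computelabel}. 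The key technical point is that $\zeta_{(P',a)}$ lies in the $\Delta_1$-fragment of \fows: since the truth of ``the prefix $w[0,x{-}1]$ lies in $P'$'' depends only on a bounded window of positions immediately to the left of $x$ (with boundary cases handled by $\min$), $\zeta_{(P',a)}$ admits both a \sipe{1}-definition (existentially asserting the window content) and a \pipe{1}-definition (obtained by negating the \sipe{1}-definitions of all incompatible windows). For each atomic occurrence I then select the $\Sigma_1$-form when the innermost block of the sentence is existential and the $\Pi_1$-form when it is universal, with dual choices for negated atoms; all newly introduced quantifiers thus match the innermost block and can be merged into it, yielding a \sipe{n}-sentence defining $\tau_\Pb\inv(L_P)$.

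For the backward inclusion $\sipe{n} \subseteq \sio{n} \circ \su$, I adapt the \efgame argument of the second half of Section~\ref{sec:fo2}. Definability in \sio{n} and \sipe{n} is captured by the standard asymmetric alternation games, encoded here by the preorders $\sieq{k}$ and $\sieqp{k}$. The main step is the direct analogue of Proposition~\ref{prop:fo2themainprop}: for every $k$, whenever $\tau_{2k}(w) \sieq{k} \tau_{2k}(w')$ one has $w \sieqp{k} w'$. Once this is in hand, the surrounding argument of Section~\ref{sec:fo2} transfers verbatim: given a \sipe{n}-definable language $L$ of rank $k$, the set $H = \{u \mid \exists w \in L,\ \tau_{2k}(w) \sieq{k} u\}$ is a union of $\sieq{k}$-classes and hence \sio{n}-definable, while $L = \tau_{2k}\inv(H) = \bigcup_{P \in \Pb_{2k}}(P \cap \tau_{2k}\inv(H))$ exhibits $L$ as a member of $\sio{n} \circ \su$.

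The main obstacle will be this simulation lemma. Unlike the symmetric two-variable \efgame game, the \sio{n}-game proceeds in $n$ asymmetric blocks in which several pebbles are placed at once and each player controls the choice of word according to the block. Duplicator's strategy in the \sipe{n}-game on $(w, w')$ must therefore drive a shadow \sio{n}-game on $(\tau_{2k}(w), \tau_{2k}(w'))$ respecting this richer block structure, and the ``$(k{-}j)$-safe'' invariant of Proposition~\ref{prop:fo2themainprop} must be upgraded to track pebble budgets across all remaining blocks rather than a single round counter. The core idea --- that the tag $\tau_{2k}$ encodes enough local information around each position to allow Duplicator to answer any successor-move of Spoiler --- remains the same, but the block bookkeeping is where the technical work should reside.
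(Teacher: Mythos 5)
Your treatment of the forward inclusion and your reduction of the \bspe{n} statement to the \sipe{n} one are both sound and match the paper's approach (the paper proves the $\Sigma$-case and leaves the $\mathcal{B}\Sigma$-case as ``similar''; your common-refinement argument is a reasonable way to make that precise). The gap is in the backward inclusion, and it is not merely the unfinished ``block bookkeeping'': the simulation lemma you state is false. You claim that $\tau_{2k}(w) \sieq{k} \tau_{2k}(w')$ alone implies $w \sieqp{k} w'$. Take $n=1$, $w=b$ and $w'=ba$: since $\tau_{2k}(w)$ is a prefix (hence a subword) of $\tau_{2k}(w')$, every existential $\sio{1}$ sentence true of $\tau_{2k}(w)$ is true of $\tau_{2k}(w')$, so $\tau_{2k}(w)\sieq{m}\tau_{2k}(w')$ for every $m$; yet the rank-$1$ sentence $\exists x\,(max(x)\wedge b(x))$ separates $w$ from $w'$, so $w\not\sieqp{1}w'$. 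The root cause is that the tagging $\tau_\Pb$ records only \emph{prefix} information at each position, while the enriched signature (via $max$ and successor) can inspect the suffix of bounded length at the right border, and $\sieq{}$ for $\sio{n}$ --- unlike $\kfodeq$ for \fodw --- cannot force the rightmost letters of the tagged words to agree. This is why the paper's Proposition~\ref{prop:themainprop} carries the additional hypothesis $w \eqsu{\ell} w'$, and why the final expression for $L$ must keep the intersection with $P$ and use one language $H_P$ per class $P\in\Pb_\ell$ (membership of $v,w$ in the same $P$ is exactly what supplies $w\eqsu{\ell}v$). Your formula $L=\tau_{2k}\inv(H)$ discards that intersection and cannot recover the missing hypothesis.

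Beyond this, the paper's proof of the simulation lemma is not the ``safe-position shadow game'' you sketch but an induction on $n$ and $k$: Spoiler's move decomposes $w=uav$, a decomposition lemma (Lemma~\ref{lem:innerefgame}) produces $w'=u'av'$ with $u\sieqp{k-1}u'$ and $v\sieqp{k-1}v'$, and the two strategies are composed. Making the induction close requires strictly larger parameters than in the \fodw case --- a tagging window $\ell=2^k$ (halved at each round) and rank $k+\ell$ on the tagged words, consumed via Lemma~\ref{lem:remsuff} when stripping bounded suffixes --- so your choice of window $2k$ and rank $k$ is also too small. Your shadow-game invariant might be adaptable, but as written the key lemma it is meant to support does not hold.
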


\begin{remark}
  As for two-variable first-order logic in the previous section, these properties are essentially already known. The underlying ideas behind the connection with \su-enrichment are due to Straubing~\cite{StrauVD}.
\end{remark}

In the rest of this section, we prove Proposition~\ref{prop:henrich}. We focus on the first item: \sipe{n} is the \su-enrichment of \sio{n}.   The proof for the second item is similar and left to the reader. There are two inclusions to prove, we devote one subsection to each of~them.

\subsection{From enrichment to successor}

Let $n \in \nat$ and consider some alphabet $A$. Let $L \subseteq A^*$ be a language belonging to $\sio{n} \circ \su$. We want to show that $L$ is definable in \sipe{n}. By definition, there exists an \su-partition \Pb of $A^*$ such that
\[
  L = \bigcup_{P \in \Pb} (P \cap \tau_\Pb\inv(L_P)),
\]
where all languages $L_P \subseteq (\Pb \times A)^*$ are definable in \sio{n}. We show that $L$ can be defined by a \sipe{n} sentence. Since we may freely use disjunction and conjunction in  \sipe{n} sentences, it suffices to show that for all $P\in \Pb$, both $P$ and $\tau_\Pb\inv(L_P)$ are defined by a \sipe{n} sentence.

\medskip
\noindent
{\bf Case~1.} We start with the language $P \in \Pb$. By definition $P \in \Pb(A)$. It follows from Lemma~\ref{lem:eqclasses} that~$P$ is a finite union of languages $\{w\}$ or $A^*w$ where $w \in A^*$. Since $\sipe{n}$ is closed under union, it suffices to show that these two kinds of languages may be defined in \sipe{n}, which is easy: if $w = \varepsilon$, then $\{\varepsilon\}$ and $A^*$ are defined by the sentences ``$\varepsilon$'' and ``$\top$'' respectively. Otherwise,  $w = a_1\cdots a_\ell$ for  $a_1,\dots,a_\ell \in A$. In that case, $\{w\}$ is defined by the following \sipe{1} sentence:
\[
  \exists x_1 \cdots \exists x_\ell\quad  min(x_1) \wedge max(x_\ell) \wedge\left(\bigwedge_{1 \leq i \leq \ell-1} x_{i}+1 = x_{i+1}\right) \wedge \left(\bigwedge_{1 \leq i \leq \ell} a_i(x_i)\right).
\]
Similarly, $A^*w$ is defined  by the following \sipe{1} sentence:
\[
  \exists x_1 \cdots \exists x_\ell\quad  max(x_\ell) \wedge \left(\bigwedge_{1 \leq i \leq \ell-1} x_{i}+1 = x_{i+1}\right) \wedge \left(\bigwedge_{1 \leq i \leq \ell} a_i(x_i)\right).
\]

\medskip
\noindent
{\bf Case~2.} We now consider languages of the form $\tau_\Pb\inv(L_P)$. By hypothesis $L_P \subseteq (\Pb \times A)^*$ is defined by some \sio{n} sentence $\Psi$. We exhibit a \sipe{n} sentence defining $\tau_\Pb\inv(L_P)$. For this, we first make sure that all atomic formulas of the form $(Q,a)(x)$ occurring in $\Psi$ are under no negation. This can be assumed since if $(Q,a)(x)$ is such an atomic formula, we have $(Q,a)(x)=\bigvee_{(Q',a')\not=(Q,a)}(Q',a')(x)$.

There are now two sub-cases, depending on whether $n$ is odd or even. If $n$ is odd, then the innermost block of quantifiers is an existential one. Therefore, replacing an atomic sub-formula $(Q,a)(x)$ which is not negated within a \sipe{n} sentence (and in particular within a \sio{n} sentence such as $\Psi$) by some \sipe{1} formula yields a \sipe{n} sentence again. We now use the following simple result.

\begin{fct}\label{fct:existcase}
  Given any $(Q,a) \in \Pb \times A$, there exists a \sipe{1} formula $\zeta_{(Q,a)}(x)$ over $A$ with one free variable such that for any $w \in A^*$ and any position $x$ in $w$, we have $w \models \zeta_{(Q,a)}(x)$ if and only if $x$ has label $(Q,a)$ in $\tau_\Pb(w)$.
\end{fct}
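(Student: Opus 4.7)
The plan is to build $\zeta_{(Q,a)}(x)$ as a disjunction of simple existential formulas, each of which hard-codes a ``basic'' suffix language that could describe the prefix of $w$ ending just before position~$x$. Recall from the $\Pb$-tagging definition that position $x$ of $w$ carries label $(Q,a)$ in $\tau_\Pb(w)$ if and only if $a(x)$ holds in $w$ and the prefix $w[0,x-1]$ lies in $Q$. Thus $\zeta_{(Q,a)}$ must combine a simple letter test at~$x$ with an existential description of what happens strictly to the left of~$x$.

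First, since $\Pb$ is a \emph{finite} partition of $A^*$ into languages in $\su = \bigcup_{k\in\nat} \su_k$, there exists some $k \in \nat$ such that every $Q \in \Pb$ belongs to $\su_k$. By Lemma~\ref{lem:eqclasses}, each such $Q$ can then be written as a finite union of languages of the form $A^* u$ with $|u| = k$, or $\{u\}$ with $|u| \leq k-1$. Since \sipe{1} is closed under disjunction, it suffices to exhibit, for each such basic language $B$ and each letter $a$, a \sipe{1} formula $\xi_{B,a}(x)$ expressing ``$a(x)$ and $w[0,x-1] \in B$''.

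For $B = A^* u$ with $u = b_1 \cdots b_\ell$, I would define $\xi_{B,a}(x)$ as
\[
  a(x) \wedge \exists y_1 \cdots \exists y_\ell\ \Bigl(\bigwedge_{1 \leq i \leq \ell-1} y_i + 1 = y_{i+1}\Bigr) \wedge (y_\ell + 1 = x) \wedge \Bigl(\bigwedge_{1 \leq i \leq \ell} b_i(y_i)\Bigr),
\]
which guesses the $\ell$ consecutive positions just before $x$ carrying labels $b_1,\dots,b_\ell$; for $\ell = 0$ (i.e.\ $u = \varepsilon$, so $B = A^*$) the conjunct reduces to $a(x)$. For $B = \{u\}$ with $u = b_1 \cdots b_\ell$, I would add the constraint $min(y_1)$ to pin down that the chain starts at the first position; in the edge case $\ell = 0$ (so $u = \varepsilon$ and $B = \{\varepsilon\}$), I take $\xi_{B,a}(x) = a(x) \wedge min(x)$. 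In every case, this is a formula whose only quantifiers form a single existential block, and whose matrix is a Boolean combination of atomic formulas using $<, +1, min$ and the letter predicates; hence it is a \sipe{1} formula.

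Writing $Q = B_1 \cup \cdots \cup B_m$ as a union of such basic pieces and setting $\zeta_{(Q,a)}(x) = \bigvee_{i=1}^{m} \xi_{B_i, a}(x)$ yields a \sipe{1} formula equivalent to ``$x$ has label $(Q,a)$ in $\tau_\Pb(w)$'', by direct inspection of the $\Pb$-tagging. The only mildly delicate point—and the reason to appeal to Lemma~\ref{lem:eqclasses} rather than work with arbitrary suffix languages—is to make sure the decomposition of $Q$ uses only these two simple shapes, so that membership in each piece really is expressible by a single existential block and no quantifier alternation sneaks in.
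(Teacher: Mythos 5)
Your proposal is correct and follows exactly the route the paper intends: the paper leaves this proof to the reader as being ``similar to that of Case~1'', and your construction is precisely that adaptation, decomposing $Q$ via Lemma~\ref{lem:eqclasses} and replacing the $max(x_\ell)$ anchor of the Case~1 sentences with the anchor $y_\ell+1=x$ at the free variable (plus $min$ for the singleton pieces).
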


The proof of Fact~\ref{fct:existcase} is left to the reader (it is similar to that of Case~1 above). Recall that we have a \sio{n} sentence $\Psi$ defining $L_P$. Consider the \sipe{n} sentence $\varphi$ obtained from $\Psi$ by replacing any atomic formula of the form $(Q,a)(x)$ (for $(Q,a) \in \Pb \times A$) by the formula $\zeta_{(Q,a)}(x)$ given by Fact~\ref{fct:existcase}. Then, $\varphi$ is \sipe{n} defining $\tau_\Pb\inv(L_P)$. This concludes the proof for this sub-case.

\medskip

We now assume that $n$ is even. In that case, replacing an atomic sub-formula under no negation within a \sipe{n} sentence (and in particular within a \sio{n} sentence such as $\Psi$) by some \pipe{1} formula yields a \sipe{n} sentence. We shall need the following simple result.

\begin{fct}\label{fct:forallcase}
  Given any $(Q,a) \in \Pb \times A$, there exists a \pipe{1} formula $\xi_{(Q,a)}(x)$ over $A$ with one free variable such that for any $w \in A^*$ and any position $x$ in $w$, we have $w \models \xi_{(Q,a)}(x)$ if and only if $x$ has label $(Q,a)$ in $\tau_\Pb(w)$.
\end{fct}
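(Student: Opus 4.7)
The natural approach is to derive $\xi_{(Q,a)}$ from the formulas $\zeta_{(Q',a')}$ supplied by Fact~\ref{fct:existcase} by taking complements. The key observation is that, by definition of $\tau_\Pb$, every position $x$ of any word $w \in A^*$ carries \emph{exactly one} label in the finite set $\Pb \times A$ (its letter in $A$ is uniquely determined, and its prefix in $A^*$ lies in exactly one class of the partition $\Pb$). Consequently, $x$ has label $(Q,a)$ in $\tau_\Pb(w)$ if and only if $x$ does not have label $(Q',a')$ for any $(Q',a') \in (\Pb \times A) \setminus \{(Q,a)\}$.

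This immediately suggests the formula
\[
  \xi_{(Q,a)}(x) \;:=\; \bigwedge_{\substack{(Q',a') \in \Pb \times A \\ (Q',a') \neq (Q,a)}} \neg\, \zeta_{(Q',a')}(x),
\]
which defines the correct set of positions by the observation above, and is a \emph{finite} conjunction since $\Pb \times A$ is a finite alphabet. It then remains to verify that $\xi_{(Q,a)}$ belongs to \pipe{1}. Each $\zeta_{(Q',a')}(x)$ has the shape $\exists y_1 \cdots \exists y_{m}\, \varphi_{(Q',a')}$ with $\varphi_{(Q',a')}$ quantifier-free in the signature $\{<,+1,min,max\}$; hence its negation is logically equivalent to $\forall y_1 \cdots \forall y_{m}\, \neg\varphi_{(Q',a')}$, which is a \pipe{1} formula. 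Renaming bound variables apart, the conjunction of finitely many \pipe{1} formulas can be rewritten with a single universal block in front of a quantifier-free formula, and is therefore \pipe{1} itself.

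No real obstacle is expected: the argument is entirely driven by closure of \pipe{1} under conjunction together with the fact that \pipe{1} is the negation dual of \sipe{1}. A direct construction mirroring the proof of Case~1 in Fact~\ref{fct:existcase} is also possible—reading each atom of $Q \in \su_k(A)$ backward from $x$ via $+1$—but it is noticeably heavier and the negation trick via Fact~\ref{fct:existcase} avoids any new combinatorics.
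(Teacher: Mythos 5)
Your proof is correct, but it takes a genuinely different route from the paper's. The paper proves Fact~\ref{fct:forallcase} by a direct construction: it invokes Lemma~\ref{lem:eqclasses} to write $Q$ as a finite union of languages of the form $\{u\}$ or $A^*u$, and then exhibits an explicit \pipe{1} formula that reads the word backward from $x$ via universally quantified successor constraints (together with an auxiliary \pipe{1} formula $\chi_m(x)$ expressing $x \geq m+1$). You instead exploit the fact that the labels of $\tau_\Pb(w)$ form a partition of the positions --- each position carries exactly one label in the finite set $\Pb \times A$ --- so that $\xi_{(Q,a)}$ can be taken as the finite conjunction of the negations of the \sipe{1} formulas $\zeta_{(Q',a')}$ from Fact~\ref{fct:existcase} over all other labels; closure of \pipe{1} under finite conjunction and the $\Sigma$/$\Pi$ negation duality then finish the argument. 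Both proofs are sound. Your version is shorter and avoids redoing the combinatorics of Lemma~\ref{lem:eqclasses}, at the price of leaning on Fact~\ref{fct:existcase} (whose proof the paper itself leaves to the reader); the paper's version is self-contained and produces explicit formulas. Note that the paper is aware of this duality: Remark~\ref{rem:forall-from-exists} records essentially the same trick, applied at the level of the sentence $\Psi$ (placing each label test under exactly one negation) rather than at the level of the individual formula $\xi_{(Q,a)}$ as you do.
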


\begin{proof}
  By Lemma~\ref{lem:eqclasses}, $Q \in \Pb$ is a finite union of languages $\{u\}$ or $A^*u$ where $u \in A^*$. Since \pipe{1} is closed under union, it suffices to consider the cases when $Q$ is one of these two kinds of language. If $u = \varepsilon$, then we let $\xi_{(\{\varepsilon\},a)}(x) = min(x) \wedge a(x)$ and $\xi_{(A^*,a)}(x) = a(x)$.

  Otherwise, there exist $a_1,\dots,a_\ell \in A$ such that $w = a_1\cdots a_\ell$. Observe that for any $m \in \nat$ we have a \pipe{1} sentence $\chi_m(x)$ which holds when $x \geq m+1$. Indeed, we may define this formula by induction on $m$. When $m = 0$, then $\chi_0(x) = \top$. Otherwise, $\chi_m(x) = \neg min(x) \wedge \forall y \ (y+1 = x \Rightarrow \chi_{m-1}(y))$. We may now define, $\xi_{(Q,a)}(x)$. If $Q = \{w\}$, we define $\xi_{(Q,a)}(x)$ as the following formula:
  \[
    a(x) \wedge \chi_\ell(x) \wedge \forall x_1 \cdots \forall x_\ell\ \left(\bigwedge_{i \leq \ell-1} x_{i}+1 = x_{i+1} \wedge x_\ell+1=x \right) \Rightarrow \left(\bigwedge_{i \leq \ell} a_i(x_i) \wedge min(x_1) \right).
  \]
  Finally, if $Q = A^*w$, we define $\xi_{(Q,a)}(x)$ as the following formula,
  \[
    a(x) \wedge \chi_\ell(x) \wedge \forall x_1 \cdots \forall x_\ell\ \left(\bigwedge_{i \leq \ell-1} x_{i}+1 = x_{i+1} \wedge x_\ell+1=x \right) \Rightarrow \left(\bigwedge_{i \leq \ell} a_i(x_i)\right).
  \]
  This concludes the proof of Fact~\ref{fct:forallcase}.
\end{proof}

Recall now that we have a \sio{n} sentence $\Psi$ defining $L_P$. Consider the \sipe{n} sentence $\varphi$ obtained from $\Psi$ by replacing any atomic formula of the form $(Q,a)(x)$ with $(Q,a) \in \Pb \times A$ (\emph{i.e.}, any label test)  by the formula $\xi_{(Q,a)}(x)$ given by Fact~\ref{fct:existcase}. One may verify that $\varphi$ is \sipe{n} and defines $\tau_\Pb\inv(L_P)$ which concludes the proof for this sub-case.

\begin{remark}\label{rem:forall-from-exists}
  When $n$ is even, an alternative proof is to first ensure that all atomic sub-formulas of the form $(Q,a)(x)$ (\emph{i.e.}, label tests) are under exactly one negation in $\Psi$, and to apply Fact~\ref{fct:existcase} again.
\end{remark}

\subsection{\efgame games} Before turning to the converse direction in Proposition~\ref{prop:henrich}, let us recall the definition of the \efgame games associated to the levels \sic{n}. It is parameterized by an arbitrary signature~$\sigma$ (which we shall instantiate later with the signatures of \sio{n} and \sipe{n}).

\medskip
\noindent
{\bf Quantifier rank and canonical preorders.} As for two variable first-order logic, the link with \efgame games is based on the notion of quantifier rank. Recall that the rank of a first-order sentence is the longest sequence of nested quantifiers in $\varphi$.

Using the quantifier rank, we associate a preorder relation to any level $\sic{n}(\sigma)$ in the quantifier alternation hierarchy. Given two words $w,w' \in A^*$ and $k \in \nat$, we write $w \ksieq^\sigma w'$ when
\[
  \text{For any $\sic{n}(\sigma)$ sentence of rank at most $k$:} \quad w \models \varphi \Rightarrow w' \models \varphi.
\]
The next lemma is folklore and simple to verify. It characterizes with the preorder $\ksieq^\sigma$ the languages that can be defined by a $\sic{n}(\sigma)$ sentence of rank $k$. An \emph{upper set for $\ksieq^\sigma$} is a language $L \subseteq A^*$ which is upward closed under $\ksieq^\sigma$: given any $w,w' \in A^*$, if $w \in L$ and $w \ksieq^\sigma w'$, then $w' \in L$.

\begin{lemma}[Folklore]\label{lem:preocarac}
  Consider two natural numbers $n \geq 1$ and $k\geq0$. For any language $L \subseteq A^*$, the following two properties are equivalent:
  \begin{itemize}
  \item $L$ can be defined by a $\sic{n}(\sigma)$ sentence of rank $k$.
  \item $L$ is an upper set for $\ksieq^\sigma$.
  \end{itemize}
\end{lemma}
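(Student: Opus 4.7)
The plan is to establish both directions via standard facts about the quantifier alternation hierarchy. The direction $(1) \Rightarrow (2)$ is immediate: if $L$ is defined by a sentence $\varphi \in \sic{n}(\sigma)$ of rank at most $k$, and $w \in L$ with $w \ksieq^\sigma w'$, then $w \models \varphi$ entails $w' \models \varphi$ by the very definition of $\ksieq^\sigma$, so $w' \in L$. Hence $L$ is upward closed for $\ksieq^\sigma$.

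For the converse direction $(2) \Rightarrow (1)$, the main technical ingredient is the existence, for every $w \in A^*$, of a \emph{characteristic sentence} $\chi_w \in \sic{n}(\sigma)$ of rank at most $k$ such that
\[
  w' \models \chi_w \iff w \ksieq^\sigma w' \qquad \text{for all } w' \in A^*.
\]
Such a sentence is built by Hintikka-style induction on $n$: at level $n$, one takes an outer existential quantification over a tuple $\bar x$ and inserts, in the body, the atomic diagram of $\bar x$ together with a $\pic{n-1}(\sigma)$ characteristic sentence of reduced rank describing the admissible sub-behaviors once $\bar x$ has been interpreted. The quantifier budget is split across the $n$ blocks so that the total rank remains bounded by $k$. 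Equivalently, one can derive the existence of $\chi_w$ from the standard Ehrenfeucht-Fra\"iss\'e game associated to $\sic{n}(\sigma)$ of rank $k$, by encoding a winning strategy for Duplicator starting from $w$ as a single $\sic{n}(\sigma)$ sentence of rank at most $k$.

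Once characteristic sentences are available, the conclusion follows easily. Up to logical equivalence, there are only finitely many sentences in $\sic{n}(\sigma)$ of rank at most $k$, so the preorder $\ksieq^\sigma$ has only finitely many equivalence classes under $\ksieq^\sigma \cap (\ksieq^\sigma)^{-1}$. If $L$ is upward closed, pick representatives $w_1, \dots, w_m$ of those classes contained in $L$; upward closure yields
\[
  L \;=\; \bigcup_{i=1}^{m} \{w' \in A^* : w_i \ksieq^\sigma w'\} \;=\; \bigcup_{i=1}^{m} \{w' \in A^* : w' \models \chi_{w_i}\},
\]
so $L$ is defined by $\bigvee_{i=1}^{m} \chi_{w_i}$, which can be collapsed into a single $\sic{n}(\sigma)$ sentence of rank at most $k$ by merging the outer existential blocks and sharing variables across the disjuncts. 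The main obstacle is the inductive construction of $\chi_w$ within the prescribed rank budget: at each level one must carefully distribute the available quantifiers between the outer existential block and the $\pic{n-1}(\sigma)$ subformulas it dominates. This quantifier accounting is the heart of the standard (folklore) argument, while the semantic part of the proof is otherwise straightforward.
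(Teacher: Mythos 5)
The paper does not actually prove this lemma---it is labeled ``Folklore'' and asserted to be simple to verify---so there is nothing to compare against line by line; I am judging your argument on its own. Your direction $(1)\Rightarrow(2)$ is correct and immediate. For $(2)\Rightarrow(1)$ your overall structure (finitely many classes, characteristic sentences $\chi_w$, finite disjunction) is the right one and does yield a proof, but you route the key step through a Hintikka-style induction on $n$ with a quantifier budget split across blocks, and you explicitly leave that construction---which you yourself call ``the heart of the argument''---unexecuted. That construction is unnecessary: since the signature and alphabet are finite, there are only finitely many $\sic{n}(\sigma)$ sentences of rank at most $k$ up to logical equivalence, so one may simply take $\chi_w$ to be the conjunction of all of them that $w$ satisfies. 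Then $w'\models\chi_w$ iff every rank-$\le k$ $\sic{n}(\sigma)$ sentence true of $w$ is true of $w'$, which is exactly $w\ksieq^\sigma w'$ by definition of the preorder. This one-line observation replaces the entire inductive construction and removes the only gap in your write-up.

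One further caution on your closing remark about ``merging the outer existential blocks and sharing variables across the disjuncts'': sharing variables works for the outermost existential block, but universal quantification does not commute with disjunction, so this trick does not propagate to the inner blocks. It is also not needed. A positive Boolean combination of $\sic{n}(\sigma)$ sentences of rank $\le k$ is already a $\sic{n}(\sigma)$ sentence of rank $\le k$ under the paper's conventions: its prenex normal form still has $n$ blocks beginning with $\exists$ (pad with fresh variables where necessary), and the paper's rank is computed on the parse tree of the formula as written, with $\rk{\varphi_1\lor\varphi_2}=\max(\rk{\varphi_1},\rk{\varphi_2})$, so the disjunction $\bigvee_i\chi_{w_i}$ has rank $\le k$ as it stands.
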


We now define the \efgame game for $\sic{n}(\sigma)$ (called the $\sic{n}(\sigma)$ game). It yields an alternate (and easier to manipulate) definition of the preorders $\ksieq^\sigma$. The board of the game consists of two words $w$ and $w'$ in $A^*$ and there are two players called \emph{Spoiler} and \emph{Duplicator}. We speak of the $\sic{n}(\sigma)$-game over the pair $(w,w')$, or over $w$ and $w'$. Note that unlike in the \fod game, the ordering between the two words is relevant: $w$ is the first word and $w'$ is the second. Spoiler's goal is to prove that the words $w$ and $w'$ are different (wrt.\ \sio{n} or \sipe{n}) while Duplicator must prevent him from doing so. The game is set to last a predefined number $k$ of rounds and when it starts, each player owns $k$ pebbles. Moreover, we have the two following additional parameters that may change as the play progresses:
\begin{enumerate}
\item There is a distinguished word among $w,w'$, called the \emph{active word}. Initially, the active word is the first word, that is, $w$.
\item There is a counter $c$ called the \emph{alternation counter}. Initially, $c$ is set to $0$. It can only increase, and its maximal allowed value is $n-1$. It counts the number of times the active word was changed.
\end{enumerate}

A single round is played as follows. Spoiler has to place a pebble on the board (\emph{i.e.}, on a position of either $w$ or $w'$). However, there are constraints on the word that he may choose. Spoiler can always choose the active word, in which case both $c$ and the active word remain unchanged. On the other hand, Spoiler may choose the word that is not active only when $c < n - 1$. In that case, the active word is switched and $c$ is incremented by $1$.

Duplicator must answer by placing one of her own pebbles on some position of the other word. This answer must yield a \emph{correct configuration}. By \emph{configuration after round $\ell$}, we mean the set
\[
  C = \{(x_1,x'_1),\dots,(x_\ell,x'_\ell)\},
\]
where the elements $(x_i,x'_i)$ are the pairs of positions ($x_i$  in $w$ and $x'_i$ in $w'$) holding corresponding pebbles at rounds $1,\ldots,\ell$ (\emph{i.e.}, Spoiler placed a pebble on $x_i$ in a previous round and Duplicator answered by putting a pebble on $x'_i$, or \emph{vice versa}). Such a configuration is declared \emph{correct} if and only if for any predicate $P \in \sigma$ of arity $m$, given any $i_1,\dots,i_m \leq \ell$,
\[
  \text{$P(x_{i_1},\dots,x_{i_m})$ holds} \quad \text{if and only if} \quad \text{$P(x'_{i_1},\dots,x'_{i_m})$ holds}.
\]
Intuitively, a configuration is correct when it is impossible to point out a difference between the sequences of positions $x_1,\dots,x_\ell$ in $w$ and $x'_1,\dots,x'_\ell$ in $w'$ by using the predicates available in $\sigma$.

\medskip

When the game starts, the configuration is empty. \emph{Duplicator wins} if this initial configuration is correct (while empty, the initial configuration may not be correct when $\sigma$ contains constants such as ``$\varepsilon$'') and if she is able to answer all moves by Spoiler with a correct configuration until all $k$ rounds have been played. On the other hand \emph{Spoiler wins} if the initial configuration is not correct or as soon as Duplicator is unable to play. We now state the \efgame theorem  for the $\sic{n}(\sigma)$-game. It characterizes the preorder $\ksieq^\sigma$: two words are comparable, \emph{i.e}, $w\ksieq^\sigma w'$, when Duplicator has a winning strategy for $k$ rounds over $(w,w')$.

\begin{theorem}[Folklore]\label{thm:efgame}
  Let $n \geq 1$, $k \in \nat$ and $w,w' \in A^*$. Then $w \ksieq^\sigma w'$ if and only if Duplicator has a winning strategy for playing $k$ rounds in the $\sic{n}(\sigma)$ game over $(w,w')$.
\end{theorem}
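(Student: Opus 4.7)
The plan is to prove both directions simultaneously via induction on $k$ (with $n$ as a parameter), after strengthening the statement to accommodate arbitrary configurations reached mid-play. Specifically, for each state of the game --- alternation counter $c \in \{0, \ldots, n-1\}$, active word $a$, number of remaining rounds $r$, and correct configuration $\{(x_1, x'_1), \ldots, (x_\ell, x'_\ell)\}$ --- the generalized claim asserts that Duplicator has a winning strategy for the $r$ remaining rounds if and only if every $\sic{n-c}(\sigma)$ formula of rank $\leq r$ in $\ell$ free variables satisfied by the tuple $(x_1, \ldots, x_\ell)$ in $a$ is also satisfied by the corresponding tuple in the other word. The original statement corresponds to the case $\ell = 0$, $c = 0$, $a = w$, $r = k$. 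The base case $r = 0$ reduces to the observation that $\sic{n-c}(\sigma)$ formulas of rank $0$ are quantifier-free (hence closed under negation), so preservation in one direction is exactly the correctness condition on the configuration.

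For the forward direction in the inductive step, I use that over the finite signature $\sigma$ and at any bounded rank, there are only finitely many $\sic{n-c}(\sigma)$ formulas in a given number of free variables, up to logical equivalence. When Spoiler places a pebble $y$ on the active word $a$ (no switch), Duplicator considers the conjunction $\Phi(z)$ of all $\sic{n-c}(\sigma)$ formulas of rank $\leq r-1$ true on $(a, x_1, \ldots, x_\ell, y)$; then $\exists z\, \Phi(z)$ is a $\sic{n-c}(\sigma)$ formula of rank $\leq r$ true on $(a, x_1, \ldots, x_\ell)$, so by the strengthened hypothesis it is true on the tuple in the other word, providing the witness $y'$ that maintains the invariant. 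When Spoiler instead plays $y'$ on the non-active word (triggering a switch, $c$ incremented), the analogous conjunction $\Phi'(z) \in \sic{n-c-1}(\sigma)$ of rank $\leq r-1$ gives $\exists z\,\Phi'(z) \in \sic{n-c-1}(\sigma)$ of rank $\leq r$, which embeds into $\pic{n-c}(\sigma)$ of rank $\leq r$ by prepending a vacuous $\forall$ block; this formula holds on the non-active word and, by the contrapositive of the pre-switch invariant, transfers to the active word, yielding the $y$ Duplicator plays.

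For the reverse direction, I argue contrapositively: if some $\sic{n-c}(\sigma)$ formula $\varphi$ of rank $\leq r$ witnesses the failure of the preservation property --- true on the tuple in $a$, false on the tuple in the other word --- then Spoiler wins by mirroring the prenex form of $\varphi$. Each quantifier in the prefix consumes one round: existential quantifiers are witnessed on the currently active word (using that $\varphi$ is true there), universal quantifiers are played adversarially by Spoiler on that same word, and each $\exists/\forall$ alternation in the prefix triggers a switch of active word (which the alternation budget permits since $\varphi \in \sic{n-c}(\sigma)$). After the prefix is exhausted, the configuration realizes the quantifier-free matrix of $\varphi$ with opposite truth values on the two sides; since the matrix uses only predicates in $\sigma$, this violates the correctness condition, contradicting Duplicator's assumed winning strategy.

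The main obstacle is the correct formulation of the strengthened inductive statement, in particular the bookkeeping that correctly synchronizes active word, alternation counter, remaining rounds, and the target fragment $\sic{n-c}(\sigma)$ across both directions. The subtle point lies in the switching case of the forward direction: the transfer of the witness formula $\exists z\,\Phi$ relies on the (rank-preserving) embedding $\sic{n-c-1}(\sigma) \subseteq \pic{n-c}(\sigma)$ together with the contrapositive form of the pre-switch preservation invariant, which is what makes the $\sic{}/\pic{}$ asymmetry of the alternation hierarchy compatible with the asymmetric pebble game. Once this interplay is pinned down, the remaining steps are routine applications of the finiteness of formulas up to equivalence.
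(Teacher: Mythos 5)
The paper gives no proof of Theorem~\ref{thm:efgame}---it is stated as folklore---so there is nothing to compare your argument against; judged on its own terms, your proposal is correct and is the standard argument: strengthen the statement to arbitrary mid-game states (alternation counter, active word, remaining rounds, correct configuration), obtain Duplicator's answer in the forward direction as a witness for the existential closure of the conjunction of all relevant formulas of one lower rank (using finiteness up to logical equivalence), and obtain the converse by letting Spoiler play out the prenex prefix of a distinguishing formula. Two points should be made explicit. First, the ``rank-preserving'' inclusion $\Sigma_{n-c-1}(\sigma)\subseteq\Pi_{n-c}(\sigma)$ used in the switching case is rank-preserving only under the convention that quantifier blocks may be empty; with a genuine dummy universal quantifier the rank increases by one and overshoots the budget of the pre-switch invariant. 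The paper implicitly adopts the empty-block convention (it needs $\Pi_n\subseteq\Sigma_{n+1}$ for the hierarchy to be increasing), but your proof should say so. Second, your phrasing that universal quantifiers are ``played adversarially by Spoiler on that same word'' is loose: the counterexample to a universally quantified subformula lives on the word where that subformula is \emph{false}, namely the currently non-active word, and playing there is exactly the move that triggers the switch and increments the counter; your later clause about alternations triggering switches shows you have the right mechanism in mind, but as written the sentence could be read as Spoiler remaining on the active word, which is not a legal realization of the adversarial move. Neither point is a genuine gap in the underlying argument.
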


This concludes the definition of \efgame games. Note that while the above presentation is generic to all signatures, we are only interested in two specific ones. Given $n \geq 1$ and $k\in \nat$, we write  \ksieq for the preorder associated to the \sio{n} sentences of rank $k$ and \ksieqp for the preorder associated to the \sipe{n} sentence of rank $k$. Finally, we shall need the following simple result about the relations  \ksieq.

\begin{lemma}\label{lem:remsuff}
  Let $n \geq 1$ and let $h,k \in \nat$ be natural numbers. Consider three \fwords $w,w',u \in A^*$ such that $|u| \leq h$ and $wu \sieq{k+h} w'u$. Then, we have $w \sieq{k} w'$.
\end{lemma}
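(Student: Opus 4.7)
The plan is to reduce the general statement to the single-letter case $|u|=1$, which I would then establish via \efgame games (Theorem~\ref{thm:efgame}).

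\emph{Reduction to the single-letter case.} Writing $u = a_1\cdots a_m$ with $m = |u| \leq h$, I would apply the single-letter case $m$ times, peeling off the rightmost letter at each step, to obtain the chain
\[
wu \sieq{k+h} w'u \;\Longrightarrow\; wa_1\cdots a_{m-1} \sieq{k+h-1} w'a_1\cdots a_{m-1} \;\Longrightarrow\; \cdots \;\Longrightarrow\; w \sieq{k+h-m} w'.
\]
Since $m \leq h$, the last relation implies $w \sieq{k} w'$ because $k+h-m \geq k$.

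\emph{Single-letter case.} Let $u = a \in A$ and assume $wa \sieq{k+1} w'a$; by Theorem~\ref{thm:efgame}, fix a winning $(k+1)$-round Duplicator strategy $\sigma^*$ on $(wa, w'a)$. I would design a winning $k$-round strategy for Duplicator on $(w, w')$ by having her maintain a shadow play on $(wa, w'a)$. In a preprocessing round consuming one of $\sigma^*$'s $k+1$ rounds, she simulates Spoiler playing the unique $a$-suffix position $|w|$ of $wa$, and records $\sigma^*$'s answer $y$ in $w'a$ (necessarily labeled $a$). At each subsequent real round, she forwards Spoiler's move at some position $x$ in $w$ (resp.\ $w'$) as the same position of $wa$ (resp.\ $w'a$) in the shadow game, obtains $\sigma^*$'s answer $z$, and plays $z$ in the real game.

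\emph{Crucial verification.} One must check that $z$ lies in $w$ (resp.\ $w'$) and not in the $a$-suffix. The order constraint enforced by the pair $(|w|, y)$ gives $z < y \leq |w'|$ whenever Spoiler plays $x < |w|$ in $w$, so $z$ lies in $w'$. For Spoiler moves in $w'$, which require $n \geq 2$ and an available alternation increment, I would further argue that $\sigma^*$'s preprocessing answer must satisfy $y = |w'|$: otherwise Spoiler could switch to $w'a$ (the alternation counter goes from $0$ to $1 \leq n-1$) and play some $x' \in \{y+1, \ldots, |w'|\}$, for which any legal $\sigma^*$-answer in $wa$ would need $z > |w|$, which is impossible since $|wa|=|w|+1$; this would contradict that $\sigma^*$ is winning. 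With $y=|w'|$, the symmetric constraint then forces $z < |w|$ for Spoiler moves in $w'$, so $z$ lies in $w$.

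\emph{Main obstacle.} The delicate step is the analysis of the preprocessing answer: showing that a ``bad'' answer $y < |w'|$ admits a legal Spoiler refutation under the $\sio{n}$ alternation rule, including the careful tracking of the alternation counter through the shadow simulation (the preprocessing does not increment it, since the simulated move is in the active word $wa$). Once this is in place, the remainder of the argument follows from the correctness of $\sigma^*$ applied to the shadow play for the $k$ real rounds.
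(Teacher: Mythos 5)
Your proof is correct, but it follows a different route from the one in the paper. The paper's argument is a one-shot strategy transfer: Duplicator answers the $k$ rounds on $(w,w')$ by feeding them directly to her $(k{+}h)$-round strategy on $(wu,w'u)$, and the only point to verify is that this strategy never answers a move in a prefix $w$ or $w'$ with a position lying inside a copy of $u$; this holds because such an answer would leave at most $|u|-1$ positions to its right versus at least $|u|$ to the right of Spoiler's pebble, so Spoiler could punish it within the $h$ rounds that Duplicator keeps in reserve (the order predicate is what makes the punishment possible). You instead peel off one letter of $u$ at a time and, in the single-letter case, spend one shadow round to plant a pebble on the boundary position; the order constraints relative to that marker then force all subsequent answers into the prefixes. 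Both mechanisms convert the surplus of $h$ rounds into the same invariant, but yours replaces the paper's counting threat by an explicit marker, at the cost of the extra analysis you rightly single out: proving that for $n\geq 2$ the marker in $w'a$ must land on the last position (your refutation, where Spoiler switches to $w'a$ and plays to the right of a misplaced marker, is sound and uses exactly one spare round; for $n=1$ the claim is unnecessary since Spoiler never plays in $w'$, and for $k=0$ the conclusion is vacuous anyway). Your bookkeeping is also consistent: each peeled letter costs exactly one round, so the $m\leq h$ applications end with $\sieq{k+h-m}$, which is contained in $\sieq{k}$.
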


\begin{proof}
  By hypothesis, we know than Duplicator has a winning strategy for playing $k+h$ rounds in the \sio{n}-game over $wu$ and $w'u$. Since $|u| \leq h$, it is simple to verify that as long as there are more than $h$ rounds remaining after the current one, if Spoiler places a pebble in one of the prefixes $w$ or $w'$, then Duplicator's strategy gives an answer in $w$ or $w'$. Otherwise, Spoiler would be able to win within the $h$ following rounds (it is important here that the signature of \sio{n} includes the linear order ``$<$''). Therefore, it is immediate that Duplicator gets a winning strategy for playing $k$ rounds in the \sio{n}-game over $w$ and $w'$. This means that $w \sieq{k} w'$, as desired.
\end{proof}

\subsection{From successor to enrichment} We are now ready to prove the remaining direction in Proposition~\ref{prop:henrich}. For any $n \geq 1$, we show that $\sipe{n} \subseteq \sio{n} \circ \su$.

Let us start with some preliminary definitions. Recall that for any $k \in \nat$, we denote by \keqsu the canonical equivalence on $A^*$ associated to $\su_k$.  We write $\Pb_k$ for the finite partition of $A^*$ into \keqsu-classes. Recall that $\Pb_k$ is an \su-partition of $A^*$ by Lemma~\ref{lem:canoeq}. We shall only use the \su-partitions $\Pb_k$ to build languages in $\sio{n} \circ \su$. For the sake of simplifying the notation, given $k \in \nat$, we write
\begin{itemize}
\item $\tau_k$ for the map $\tau_{\Pb_k}: A^* \to (\Pb_k \times A^*)$.
\item $\delta_k$ for the map $\delta_{\Pb_k}: A^* \times A^* \to (\Pb_k \times A^*)$.
\end{itemize}

\smallskip
We now prove that $\sipe{n} \subseteq \sio{n} \circ \su$. Our argument is based on the following proposition.

\begin{proposition}\label{prop:themainprop}
  Let $k\geq0$, $n \geq 1$ and $\ell = 2^k$ be three integers. Assume that we have $w,w' \in A^*$ such that $\tau_{\ell}(w) \sieq{k+\ell} \tau_{\ell}(w')$ and $w \eqsu{\ell} w'$. Then, we have $w \sieqp{k} w'$.
\end{proposition}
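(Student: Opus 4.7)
By the \efgame theorem (Theorem~\ref{thm:efgame}), it suffices to construct a winning strategy for Duplicator in the $k$-round \sipe{n}-game over $(w,w')$. The first hypothesis provides a winning strategy $\sigma$ for Duplicator in the $(k+\ell)$-round \sio{n}-game on the tagged pair $(\tau_\ell(w),\tau_\ell(w'))$; the second hypothesis guarantees that $w$ and $w'$ carry the same length-$\ell$ suffix (or coincide if both are shorter than $\ell$). As in the proof of Proposition~\ref{prop:fo2themainprop}, Duplicator will play a shadow \sio{n}-game in parallel with the real game, using $\sigma$ as a subroutine. The extra data needed to simulate successor-sensitive queries is read off the labels of the shadow pebbles: by definition of $\tau_\ell$, the label at a shadow position $z$ encodes the letters at the $\ell$ positions of $w$ immediately preceding $z$, together with the letter at $z$ itself.

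Call a real position $x$ of $w$ \emph{$h$-safe} for a shadow position $z$ of $\tau_\ell(w)$ when $[x-h,\,x+h]\subseteq [z-\ell,\,z]$, and symmetrically on the primed side. The label of $z$ then determines the letters at all positions of $[x-h,\,x+h]$ as well as the $<$ and $+1$ relations among them; hence, if $x,x'$ are $h$-safe for $z,z'$ with matching labels and $x-z=x'-z'$, the pair $(x,x')$ is indistinguishable by any \sipe{n} atomic formula speaking of positions within radius $h$ of them. Duplicator maintains after round $j$ of the real game the invariant $\Is(j)$: every real pebble pair $(x_i,x'_i)$ is linked to a shadow pebble pair $(z_i,z'_i)$ of identical label, with $x_i-z_i=x'_i-z'_i$ and both $x_i,x'_i$ being $2^{k-j}$-safe for their shadow counterparts; the alternation counters of the real and shadow games coincide; and $\sigma$ still wins the remaining shadow rounds. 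Several real pebbles may share a single shadow pebble (forming a \emph{cluster} inside its safe window), in which case the internal offsets inside the cluster are determined by the shadow label and automatically match on the primed side.

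For the round transition $j \to j+1$, Spoiler plays a new position $y$, say in $w$ (the case of $w'$ is symmetric, with identical updates to the alternation counter in both games). If $y$ lies within distance $2^{k-j-1}$ of some existing $x_i$, Duplicator sets $y':=x'_i+(y-x_i)$ and places $y$ into the cluster of $(z_i,z'_i)$: a short computation shows that $y$ is automatically $2^{k-j-1}$-safe for $z_i$, and all local relations between $y$ and the other pebbles of the cluster match on the primed side. Otherwise, $y$ is far from every existing cluster and Duplicator simulates one shadow round, asking $\sigma$ to answer the move to $z_{\text{new}}:=\min(|w|-1,\,y+2^{k-j-1})$; the returned $z'_{\text{new}}$ then yields $y':=z'_{\text{new}}-(z_{\text{new}}-y)$, which is $2^{k-j-1}$-safe for $z'_{\text{new}}$ by construction, and any relation between $y$ and a pebble of a distant cluster reduces to a $<$-relation that $\sigma$ already preserves. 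Boundary cases in which $y$ sits so close to the right end of $w$ that the rightward shift escapes $\tau_\ell(w)$ are handled using $w\eqsu{\ell}w'$: the length-$\ell$ suffixes of $w$ and $w'$ coincide, so the symmetric position in $w'$ has the required local structure.

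The hard part is the rank accounting on the shadow side. Each shadow move at step $j+1$ must absorb both the unit round decrease and the halving of the safety radius from $2^{k-j}$ down to $2^{k-j-1}$. Lemma~\ref{lem:remsuff} provides the needed slack: the ``extra'' rank $2^{k-j-1}$ is paid by trimming a common length-$2^{k-j-1}$ suffix of the shadow words, which is sound because all subsequent safe windows sit strictly to the left of that suffix. Summing the geometric series $\sum_{j=0}^{k-1}(1+2^{k-j-1}) = k+2^k-1$ shows that the total cost stays within the shadow-rank budget $k+\ell$ supplied by the first hypothesis. Keeping the two alternation counters in sync is immediate, because every shadow simulation copies the side chosen by Spoiler in the real game, so alternations occur at the same rounds on both sides.
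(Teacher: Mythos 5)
Your overall plan --- lifting the shadow-game strategy of Proposition~\ref{prop:fo2themainprop} to the \sic{n}-game --- runs into a genuine obstruction that the two-variable case hides: in the \sic{n}-game \emph{all} $k$ pebbles stay on the board, and the configuration must remain correct for \emph{every} pair of them. Your strategy recovers each real pebble from its shadow pebble by subtracting a cluster-dependent offset (anywhere between $0$ and $\ell$), but the shadow \sio{n}-game only guarantees that $\sigma$ preserves the \emph{order} of the shadow pebbles, not their distances. Consequently the order and successor relations between real pebbles lying in \emph{different} clusters are not controlled. Concretely: after round~1 your invariant forces $z_1=x_1+2^{k-1}$; in round~2 let Spoiler play $y=x_1-2^{k-2}-1$, which is just outside the cluster radius, so you create $z_2=y+2^{k-2}=x_1-1<z_1$. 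The strategy $\sigma$ must answer with some $z'_2<z'_1$ carrying the right label, but nothing prevents $z'_1-z'_2=1$ (take $w,w'$ with a long constant block, so that many positions share a label); then $y'=z'_2-2^{k-2}$ and $x'_1=z'_1-2^{k-1}$ give $y'-x'_1=2^{k-2}-1\geq 0$ although $y<x_1$, and for $k=3$ even $y'=x'_1+1$, creating a successor relation absent on the left. Your sentence ``any relation between $y$ and a pebble of a distant cluster reduces to a $<$-relation that $\sigma$ already preserves'' is exactly where this breaks.

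The paper's proof avoids the problem by not maintaining a global shadow correspondence at all: it argues by induction on $k$ (and on $n$ for the alternation), splits $w=uav$ at the position played in the first round, uses a single shadow round (Fact~\ref{fct:useshadow}) only to locate a matching decomposition $w'=u'av'$ (Lemma~\ref{lem:innerefgame}), and then recurses \emph{independently} on the pairs $(u,u')$ and $(v,v')$ with halved parameters $h=2^{k-1}$ and rank $k-1+h$; the first-round pebble separates all later pebbles into the two halves, so cross-cluster comparisons never arise. This is also where Lemma~\ref{lem:remsuff} genuinely lives: it is a statement about the static preorder \sieq{}{}, applied \emph{between} induction levels to delete the common infix of length at most $h$ at a rank cost of $h$; the budget $k+\ell$ pays for $k$ shadow rounds plus trimmings of total length $2^{k-1}+2^{k-2}+\cdots+1=\ell-1$. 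Inside a single running shadow game there is no operation corresponding to ``trimming a suffix'', so your rank-accounting paragraph, while it reproduces the correct arithmetic, does not attach to any actual step of your strategy. Repairing your argument would essentially require re-introducing the recursive decomposition.
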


Before we show Proposition~\ref{prop:themainprop}, let us use it to conclude this direction of the proof. Let $L \subseteq A^*$ be a language defined by some \sipe{n} sentence $\varphi$. We show that $L \in \sio{n} \circ \su$. By Lemma~\ref{lem:preocarac}, $L$ is an upper set for the preorder \sieqp{k}, where $k$ is the rank of $\varphi$.

Let $\ell = 2^k$. For any $P \in \Pb_{\ell}$, we let $H_P \subseteq (\Pb_{\ell} \times A)^*$ be the following upper set for $\sieq{k+\ell}$:
\[
  H_P = \{u \in (\Pb_{\ell} \times A)^* \mid \text{there exists $w \in P \cap L$ such that } \tau_{\ell}(w) \sieq{k+\ell} u\}.
\]
Since $H_P$ is an upper set for \sieq{k+\ell}, Lemma~\ref{lem:preocarac} entails that it can be defined by a \sio{n} sentence (of rank $k+\ell$). To conclude, we will show that
\begin{equation}\label{eq:enrichproof}
  L = \bigcup_{P \in \Pb_{\ell}} (P \cap \tau_{\ell}\inv(H_P)).
\end{equation}
It will then be immediate that $L \in \sio{n} \circ \su$, since $\Pb_{\ell}$ is an \su-partition of $A^*$.

It remains to prove~\eqref{eq:enrichproof}. We start with the left to right inclusion. Assume that $v \in L$. Since $\Pb_{\ell}$ is a partition of $A^*$, there exists some unique $P \in \Pb_{\ell}$ such that $v \in P$. It is then immediate from the definition that $\tau_{\ell}(v) \in H_P$ since $\tau_{\ell}(v) \sieq{k+\ell} \tau_{\ell}(v)$. Thus, we get $v \in P \cap \tau_{\ell}\inv(H_P)$ which concludes the proof of this inclusion.

We turn to the right to left inclusion. Assume that $v \in P \cap \tau_{\ell}\inv(H_P)$ for some $P \in \Pb_{\ell}$. We want to show that $v \in L$. Since $v \in \tau_{\ell}\inv(H_P)$, we obtain by definition of $H_P$ some word $w \in P \cap L$ such that $\tau_{\ell}(w)\sieq{k+\ell}\tau_{\ell}(v)$. Moreover, since $v$ and $w$ both belong to $P$, we have $w \eqsu{\ell} v$. Thus, since $\ell = 2^k$ by definition, we obtain the relation $w \sieqp{k} v$ from Proposition~\ref{prop:themainprop}. Finally, since $w \in L$ and since $L$ is an upper set for \sieqp{k}, we get $v \in L$, as desired.

\medskip

It remains to prove Proposition~\ref{prop:themainprop}, to which we devote the end of the section. Let $k\geq0$, $n \geq 1$ and $\ell = 2^k$. Consider two words $w,w' \in A^*$ such that $\tau_{\ell}(w) \sieq{k+\ell} \tau_{\ell}(w')$ and $w \eqsu{\ell} w'$. We prove that $w \sieqp{k} w'$. As expected, we use an \efgame argument and describe a winning strategy for Duplicator in the \sipe{n}-game over $w$ and $w'$. Recall that there are $k$ rounds to play, that the alternation counter $c$ starts at $0$ and has to remains bounded by $n-1$. We use an induction on $n$ and $k$ (in any order) to describe Duplicator's winning strategy.

\smallskip

Assume first that $k = 0$, which means that $\ell = 1$. In that case, there are no rounds to play and it suffices to show that Duplicator wins automatically (\emph{i.e.}, that $w$ and $w'$ satisfy the same constants in the signature of \sipe{n}). There is only one constant in the signature of \sipe{n}: ``$\varepsilon$''. It is immediate that $w \models \varepsilon$ if and only if $w' \models \varepsilon$ since we know that $w \eqsu{1} w'$ by hypothesis. This concludes the case $k = 0$.

\smallskip

We now assume that $k \geq 1$. We need to describe a strategy for Duplicator in order to play $k$ rounds in the \sipe{n}-game over $w$ and $w'$. Consider a move by Spoiler in the first round. We show that Duplicator is able to answer this move and then to win the remaining $k-1$ rounds. The argument depends on whether Spoiler plays his first move in $w$ or in $w'$. If Spoiler plays in $w'$, we use induction on $n$. In that case, the alternation counter is incremented (in particular, this may only happen when $n \geq 2$). One may verify from the definition that the game now corresponds to a \sipe{n-1}-game over $(w',w)$. Hence, it suffices to show that Duplicator has a winning strategy for playing $k$ rounds in this simpler game. This is immediate from induction on $n$. Indeed, we know that $\tau_{\ell}(w) \sieq{k+\ell} \tau_{\ell}(w')$ and $w \eqsu{\ell} w'$ by hypothesis. One may verify that this implies $\tau_{\ell}(w') \simieq{k+\ell} \tau_{\ell}(w)$ and $w' \eqsu{\ell} w$. Hence, we obtain from induction on $n$ that $w' \ksimieqp w$ which yields the desired strategy for Duplicator.

\smallskip

It remains to handle the case when Spoiler plays his first move on some position $x$ of the word~$w$. This requires more work. We may decompose $w$ according to the position $x$: $w =uav$ where the highlighted letter $a$ is at position $x$. We use the following lemma to describe an answer for Duplicator.

\begin{lemma}\label{lem:innerefgame}
  The word $w'$ has a decomposition $w' = u'av'$ such that $u \sieqp{k-1} u'$ and $\mbox{v \sieqp{k-1} v'}$.
\end{lemma}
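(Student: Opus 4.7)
The plan is to choose $x'$ so that the resulting decompositions $w=uav$ and $w'=u'av'$ satisfy, on each half, the hypotheses of Proposition~\ref{prop:themainprop} with the smaller parameters $k-1$ and $\ell/2=2^{k-1}$; inductively, this then delivers $u\sieqp{k-1}u'$ and $v\sieqp{k-1}v'$. Concretely, I need to secure the four conditions
\[
  u \eqsu{\ell/2} u',\quad v \eqsu{\ell/2} v',\quad \tau_{\ell/2}(u) \sieq{(k-1)+\ell/2} \tau_{\ell/2}(u'),\quad \tau_{\ell/2}(v) \sieq{(k-1)+\ell/2} \tau_{\ell/2}(v').
\]

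First I would dispose of the case $|w|<\ell$: Lemma~\ref{lem:eqclasses} and $w\eqsu{\ell}w'$ then force $w=w'$, and $x':=x$ works trivially. Henceforth assume $|w|,|w'|\geq\ell$, so that $w$ and $w'$ share their last $\ell$ letters. To obtain $x'$ in general, I would simulate a single Spoiler move at position $x$ in the \sio{n}-game on $\tau_\ell(w)$ vs $\tau_\ell(w')$, for which the hypothesis $\tau_\ell(w)\sieq{k+\ell}\tau_\ell(w')$ and Theorem~\ref{thm:efgame} supply a $(k+\ell)$-round winning strategy for Duplicator. Her response yields a position $x'$ in $w'$ whose label in $\tau_\ell(w')$ equals that of $x$ in $\tau_\ell(w)$, which at once gives $w'[x']=a$ and $u\eqsu{\ell}u'$ (hence $u\eqsu{\ell/2}u'$). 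Her remaining $(k+\ell-1)$-round strategy then splits, by order preservation alone, into independent winning strategies for the prefix and the suffix of the tagged words.

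For the prefix, the alphabetic morphism $\pi\colon(\Pb_\ell\times A)^*\to(\Pb_{\ell/2}\times A)^*$ that coarsens \eqsu{\ell}-classes into \eqsu{\ell/2}-classes sends $\tau_\ell(u)$ to $\tau_{\ell/2}(u)$ and $\tau_\ell(u')$ to $\tau_{\ell/2}(u')$; since \sio{n} is closed under alphabetic inverse images, this yields $\tau_{\ell/2}(u)\sieq{k+\ell-1}\tau_{\ell/2}(u')$, a fortiori the required $\sieq{(k-1)+\ell/2}$. For the suffix, the tag at position $x+1+i$ of $\tau_\ell(w)$ is the \eqsu{\ell}-class of the \emph{full} prefix $uav[0,i-1]$, which coincides with the corresponding class used in $\tau_\ell(v)$ once $i\geq\ell$, and for smaller $i$ depends only on the last $\ell$ letters of $ua$ together with $a\cdot v[0,i-1]$. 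Because $u\eqsu{\ell}u'$ forces the last $\ell$ letters of $ua$ and $u'a$ to coincide, the same position-dependent translation relates the restricted tagging to $\tau_{\ell/2}(v)$ on both sides, and simulating the suffix subgame gives $\tau_{\ell/2}(v)\sieq{(k-1)+\ell/2}\tau_{\ell/2}(v')$.

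It remains to secure $v\eqsu{\ell/2}v'$. When $|v|,|v'|\geq\ell/2$, the shared last $\ell$ letters of $w,w'$ given by (H2) directly force agreement of the last $\ell/2$ letters of $v$ and $v'$. The main obstacle—and what forces a more careful case analysis—is the situation $|v|<\ell/2$: there I would bypass Duplicator's arbitrary response and set $x':=|w'|-1-|v|$, so that the shared last $\ell$ letters of $w,w'$ impose $w'[x']=a$ and $v'=v$ literally, while the prefix equivalence $\tau_{\ell/2}(u)\sieq{(k-1)+\ell/2}\tau_{\ell/2}(u')$ is recovered from the hypothesis by applying Lemma~\ref{lem:remsuff} to $\tau_\ell(w)$ and $\tau_\ell(w')$ with the common suffix $\delta_\ell(u,av)=\delta_\ell(u',av)$ of length $|v|+1\leq\ell/2$. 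A short rank-bounded counting argument rules out the symmetric bad case where Duplicator's strategy returns $x'$ with $|v'|<\ell/2$ but $|v|\geq\ell/2$, as such a discrepancy would be detected by a \sio{1} sentence of rank less than $k+\ell$. The principal difficulty throughout is the suffix bookkeeping—reconciling the prefix-dependent $\tau_\ell$-tag restricted after position $x$ with the freshly-started tagging $\tau_{\ell/2}(v)$—which exploits both the common-tail property imposed by $u\eqsu{\ell}u'$ and the doubling $\ell=2\cdot(\ell/2)$ baked into the choice $\ell=2^k$.
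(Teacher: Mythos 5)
Your overall plan is the paper's: simulate one Spoiler move in a shadow \sio{n}-game on $\tau_\ell(w)$ versus $\tau_\ell(w')$, split Duplicator's residual strategy into a prefix part and a suffix part, and feed each half back into Proposition~\ref{prop:themainprop} with the parameters $k-1$ and $\ell/2=2^{k-1}$. The prefix half of your argument is sound, as are the treatment of the short case $|v|<\ell/2$ and the rank-counting argument excluding a short/long mismatch. The gap is in the suffix half of the main case, and it comes from where you place the simulated pebble. Playing at $x$ itself, Duplicator's answer $x'$ only guarantees $u\eqsu{\ell}u'$ and that $x'$ carries the letter $a$; since the shadow game has only $<$ and no successor, nothing forces $v$ and $v'$ to agree on even a single letter. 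The residual suffix relation you obtain is $\delta_\ell(ua,v)\sieq{k+\ell-1}\delta_\ell(u'a,v')$, and your ``position-dependent translation'' of these words into $\tau_{\ell/2}(v)$ and $\tau_{\ell/2}(v')$ cannot be carried through that relation: for a position $i<\ell/2$ of $v$, the $\eqsu{\ell}$-class of $uav[0,i-1]$ does not determine the $\eqsu{\ell/2}$-class of $v[0,i-1]$, because it does not reveal where the boundary between $ua$ and $v$ sits (if $ua$ and $v$ both consist of copies of a letter $c$, the $\ell$-tag is the class of $A^*c^\ell$ whether $i=1$ or $i=\ell/2$, while the target classes $\{c\}$ and $A^*c^{\ell/2}$ differ). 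The preorder $\sieq{m}$ is only transported by a \emph{uniform} relabeling of letters; a relabeling that depends on the position index is not a morphism, and encoding the index would require exactly the predicates that \sio{n} lacks. So the step ``simulating the suffix subgame gives $\tau_{\ell/2}(v)\sieq{(k-1)+\ell/2}\tau_{\ell/2}(v')$'' does not go through as written.

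The paper's proof repairs precisely this point: in the case $|v|\geq 2^{k-1}$ it simulates the Spoiler move not at $x$ but at $y=x+2^{k-1}$. The label of $y$ in $\tau_\ell(w)$ records the $\eqsu{\ell}$-class of $w[0,y-1]=uav_1$, hence the entire window $av_1$ of length $2^{k-1}\leq\ell$; matching labels then forces the decomposition $w'=u'av_1bv'_2$ of Fact~\ref{fct:useshadow}, so that $v=v_1bv_2$ and $v'=v_1bv'_2$ share their first $\ell/2$ letters. With this shared prefix, the first $\ell/2$ tags of $\tau_{\ell/2}(v)$ and $\tau_{\ell/2}(v')$ coincide literally, and every later $\ell/2$-tag looks back only $\ell/2$ letters and therefore stays inside $v$ (respectively $v'$); this is exactly what makes $\delta_{\ell/2}(uav_1b,v_2)=\delta_{\ell/2}(v_1b,v_2)$ and lets the suffix subgame be converted by an honest uniform relabeling, with Lemma~\ref{lem:remsuff} absorbing the bounded common factors. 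If you move your single simulated pebble from $x$ to $x+2^{k-1}$ and rework the suffix bookkeeping around the resulting common window, your argument becomes the paper's.
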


Lemma~\ref{lem:innerefgame} provides Duplicator's answer to Spoiler's first move: consider the decomposition $w'=u'av'$ given by the lemma and let $x'$ be the position of $w'$ corresponding to the highlighted letter $a$ in this decomposition. We choose $x'$ as Duplicator's answer. One may verify that this answer is correct (\emph{i.e.}, $x$ and $x'$ satisfy the same predicates in the signature of \sipe{n}).

\begin{remark}
  For showing that $min(x)$ holds if and only if $min(x')$, one needs to use the fact that $u \sieqp{k-1} u'$ (which means that $u = \varepsilon$ if and only if $u' = \varepsilon$). Symmetrically, the fact that $max(x)$ holds if and only if $max(x')$ is based on $v \sieqp{k-1} v'$.
\end{remark}

It remains to show that Duplicator has a winning strategy for playing $k-1$ more rounds in the \sipe{n}-game over $w = uav$ and $w' = u'av'$ from the configuration $C = \{(x,x')\}$. This can be verified using our hypothesis that $u \sieqp{k-1} u'$ and $v \sieqp{k-1} v'$. Indeed, by induction, we get strategies for playing $k-1$ rounds over $u$ and $u'$, and $v$ and $v'$ respectively. These strategies are easily combined into a single one for playing $k-1$ rounds over $w = uav$ and $w' = u'av'$.

\smallskip

We finish with the proof of Lemma~\ref{lem:innerefgame}. Recall that we have $w = uav$ where the highlighted letter ``$a$'' is at position $x$. We consider two cases depending on the length of $v$ (\emph{i.e.}, on whether ``$x$'' is close to the ``right border'' of $w$). We let $h = 2^{k-1}$. Note that by definition, we have $\ell = 2h$.

\smallskip
\noindent
{\bf First case.} Assume first $|v| < h$. In that case, $av$ is a suffix of length at most $h$ of $w$. Since we know that $w \eqsu{\ell} w'$ and $\ell = 2h$ by hypothesis, it follows that $av$ is a suffix of $w'$ as well. In other words, we obtain that $w'$ admits a decomposition $w' = u'av$ for some $u' \in A^*$. It is immediate that $v \sieqp{k-1} v$. It remains to show that $u \sieqp{k-1} u'$.

We prove that $\tau_{h}(u) \sieq{k-1+h} \tau_{h}(u')$ and $u \eqsu{h} u'$. It will then be immediate by induction on $k$ in Proposition~\ref{prop:themainprop} that we have $u \sieqp{k-1} u'$, as desired. We start with the equivalence $u \eqsu{h} u'$. Recall that by hypothesis we have,
\[
  uav = w \eqsu{\ell} w' = u'av.
\]
Thus, $uav$ and $u'av$ have the same suffixes of length at most $\ell = 2h$.  Moreover, since $|av| \leq h$, it is immediate that $u$ and $u'$ have the same suffixes of length at most $h$, which means that $u \eqsu{h} u'$.

It remains to show that $\tau_{h}(u) \sieq{k-1+h} \tau_{h}(u')$. By hypothesis, we have, $\tau_{\ell}(w) \sieq{k+\ell} \tau_{\ell}(w')$. Since $\ell \geq h$, this implies $\tau_{h}(w) \sieq{k+\ell} \tau_{h}(w')$. Moreover, we have $w = uav$ and $w' = u'av$. Therefore, by Lemma~\ref{lem:deltadef}:
\[
  \tau_{h}(u) \cdot \delta_{h}(u,av)  \sieq{k+\ell} \tau_{h}(u') \cdot \delta_{h}(u',av)
\]
We just proved that $u \eqsu{h} u'$, which, together with the definition of $\delta_h$, implies $\delta_{h}(u,av) = \delta_{h}(u',av)$. Altogether, this means that there exists $z \in (\Pb_{h} \times A)^*$ such that $|z| \leq h$ and,
\[
  \tau_{h}(u) \cdot z  \sieq{k+\ell} \tau_{h}(u') \cdot z
\]
It now follows from Lemma~\ref{lem:remsuff} that $\tau_{h}(u) \sieq{k+\ell-h} \tau_{h}(u')$, and since $\ell=2h$, we get in particular $\tau_{h}(u) \sieq{k-1+h} \tau_{h}(u')$, as desired.

\medskip
\noindent
{\bf Second case.} We now assume that $|v| \geq h$. Since the highlighted ``$a$'' in $w = uav$ is at position $x$, it follows by hypothesis on $v$ that $y = x + h$ is also a position of $w$. Hence, we may further decompose $w$ according to $y$:  $w = uav_1bv_2$, where the highlighted $b$ is at position $y$. In other words $v = v_1b v_2$. Observe that by definition $v_1 = w[x+1,y-1]$ which means that $|av_1| = h$. We use the following fact, whose proof relies on our hypothesis that $\tau_{\ell}(w) \sieq{k+\ell} \tau_{\ell}(w')$.

\begin{fct}\label{fct:useshadow}
  There exists $u',v'_2 \in A^*$ such that $w' = u'av_1bv'_2$ and the following properties are satisfied:
  \begin{itemize}
  \item $uav_1 \eqsu{\ell} u'av_1$.
  \item $\tau_{\ell}(uav_1) \sieq{k+\ell-1} \tau_{\ell}(u'av_1)$.
  \item $\delta_{\ell}(uav_1b,v_2) \sieq{k+\ell-1} \delta_{\ell}(u'av_1b,v'_2)$.
  \end{itemize}
\end{fct}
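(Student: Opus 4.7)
The plan is to apply the \efgame characterization of Theorem~\ref{thm:efgame} to the hypothesis $\tau_\ell(w) \sieq{k+\ell} \tau_\ell(w')$: Duplicator has a winning strategy for $k+\ell$ rounds of the $\sio{n}$-game over $\tau_\ell(w)$ and $\tau_\ell(w')$. I would spend one of those rounds to extract the decomposition of $w'$, then use the $k+\ell-1$ remaining rounds, restricted to prefixes and suffixes, to obtain items~(3) and~(4).

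Concretely, simulate a first move by Spoiler placing a pebble on position $y$ in $\tau_\ell(w)$ (the position carrying the highlighted $b$). Duplicator's strategy yields an answer $y'$ in $\tau_\ell(w')$ with the same label. By definition of $\tau_\ell$, this label equals $(\ppart{uav_1}_{\ell},b)$ where $\ppart{\cdot}_\ell$ denotes the $\keqsu{\ell}$-class. Hence the letter at position $y'$ of $w'$ is $b$, and $w'[0,y'-1] \eqsu{\ell} uav_1$. Since $|av_1| = h \leq \ell$ and $uav_1$ ends with $av_1$, the $\su_\ell$-membership test on the language $A^*av_1$ forces $w'[0,y'-1]$ to also end with $av_1$. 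Writing $w'[0,y'-1] = u'av_1$ and $v'_2 = w'[y'+1,|w'|-1]$, this gives $w' = u'av_1bv'_2$ as required, and item~(2) is immediate since $u'av_1 = w'[0,y'-1] \eqsu{\ell} uav_1$.

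For items~(3) and~(4), I would observe that the prefix of $\tau_\ell(w)$ of length $y$ is exactly $\tau_\ell(uav_1)$ and, by Lemma~\ref{lem:deltadef}, the suffix of $\tau_\ell(w)$ starting at position $y+1$ is exactly $\delta_\ell(uav_1b,v_2)$; the analogous statements hold for $w'$ at position $y'$. After the first round, Duplicator still has a winning strategy for $k+\ell-1$ rounds from configuration $\{(y,y')\}$. Because the signature of \sio{n} only contains the linear order and label predicates, correctness of a configuration is preserved both by removing pebbles and by restricting attention to positions lying all on one side of $(y,y')$: once $(y,y')$ is fixed, any move by Spoiler on a position $< y$ in $\tau_\ell(w)$ must be answered by Duplicator at a position $< y'$ in $\tau_\ell(w')$, and symmetrically for positions $> y$ and $> y'$. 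Restricting Duplicator's remaining strategy to Spoiler moves on the left of $(y,y')$ therefore produces a winning strategy for $k+\ell-1$ rounds in the $\sio{n}$-game on $\tau_\ell(uav_1)$ and $\tau_\ell(u'av_1)$, yielding~(3); restricting to the right of $(y,y')$ yields~(4).

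The main subtlety, and the point I would verify most carefully, is this restriction argument: one needs to argue that restricting the $\sio{n}$-game to one side of the distinguished pair $(y,y')$ is legitimate, that is, that Spoiler cannot exploit pebbles on the other side to escape from the sub-game. This relies on the absence of the successor relation in the $\sio{n}$-signature (so that $<$ alone separates the two sides) and on the fact that a correct configuration stays correct when any subset of pebbles is removed. Also worth verifying is that the relation \sieq{} is directed correctly throughout: Spoiler plays on the first argument, so in both restrictions Spoiler plays on the prefix or suffix of $\tau_\ell(w)$, giving $\sieq{k+\ell-1}$ in the right direction. The remaining ingredients---the shape of $\tau_\ell$ restricted to a prefix, and the identification of the right-side suffix with $\delta_\ell(uav_1b,v_2)$---are direct consequences of Lemma~\ref{lem:deltadef}.
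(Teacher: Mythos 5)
Your proposal is correct and follows essentially the same route as the paper's proof: simulate a Spoiler move on the position of the highlighted $b$ in $\tau_{\ell}(w)$, read the decomposition $w'=u'av_1bv'_2$ and item~(2) off the equality of labels at $y$ and $y'$ via Lemma~\ref{lem:deltadef}, and obtain items~(3) and~(4) from Duplicator's residual $(k+\ell-1)$-round strategy restricted to each side of the pair $(y,y')$. The only difference is that you spell out the side-restriction argument (which the paper leaves implicit), and your justification of it is sound.
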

\begin{proof}
  By definition, $w$ and $\tau_{\ell}(w)$ share the same set of positions. Thus, we may view $x$ and $y$ as positions in $\tau_{\ell}(w)$. In particular, we get from Lemma~\ref{lem:deltadef} that:
  \[
    \tau_{\ell}(w) = \tau_{\ell}(uav_1) \cdot (\wpart{uav_1}{\Pb_{\ell}},b) \cdot \delta_{\ell}(uav_1b,v_2).
  \]
  Since $\tau_{\ell}(w) \sieq{k+\ell} \tau_{\ell}(w')$, Duplicator has a winning strategy for playing $(k+\ell)$ rounds in the \sio{n}-game over $\tau_{\ell}(w)$ and $\tau_{\ell}(w')$. She may simulate a move by Spoiler in this game by placing a pebble on the position $y$ in $\tau_{\ell}(w)$. Her strategy then yields an answer $y'$ in $\tau_{\ell}(w)$. Recall that we may view $y'$ as a position of $w'$. We decompose $w'$ as $w' =  z'cv'_2$ where the highlighted letter $c \in A$ is at position $y'$. It then follows from Lemma~\ref{lem:deltadef} that,
  \[
    \tau_{\ell}(w') = \tau_{\ell}(z') \cdot (\wpart{z'}{\Pb_{\ell}},c) \cdot \delta_{\ell}(z'c,v'_2).
  \]
  By definition of $y'$, we know that $y$ and $y'$ have the same label in $\tau_{\ell}(w)$ and $\tau_{\ell}(w')$. Thus, it is immediate that $b = c$ and $\wpart{uav_1}{\Pb_{\ell}} = \wpart{z'}{\Pb_{\ell}}$. Note that by definition, the latter property means that $uav_1 \eqsu{\ell} z'$. In particular, since $|av_1| = h \leq \ell$, it follows that we have $z'= u'av_1$ for some $u' \in A^*$. Altogether, we have found a decomposition $w' = u'av_1bv'_2$ with $uav_1 \eqsu{\ell} u'av_1$.

  Moreover, we know that Duplicator has a strategy for playing $k+\ell-1$ more rounds in the \sio{n}-game over $\tau_{\ell}(w)$ and $\tau_{\ell}(w')$ from the configuration $\{(y,y')\}$. It follows that $\tau_{\ell}(uav_1) \sieq{k+\ell-1} \tau_{\ell}(u'av_1)$ and $\delta_{\ell}(uav_1b,v_2) \sieq{k+\ell-1} \delta_{\ell}(u'av_1b,v'_2)$ which concludes the proof.
\end{proof}

We may now come back to the proof of Case~2 and describe our decomposition of $w'$. We let $w' = u'av_1bv'_2$ be the decomposition given by Fact~\ref{fct:useshadow}. Finally, we define $v' = v_1bv'_2$. We now have our decomposition $w' = u'av'$. It remains to show that $u \sieqp{k-1} u'$ and $v \sieqp{k-1} v'$.

\smallskip

Let us start with $u \sieqp{k-1} u'$. We prove that $\tau_{h}(u) \sieq{k-1+h} \tau_{h}(u')$ and $u \eqsu{h} u'$. It will then be immediate from induction on $k$ in Proposition~\ref{prop:themainprop} that $u \sieqp{k-1} u'$ as desired. For the equivalence $u \eqsu{h} u'$, we know from the first item in Fact~\ref{fct:useshadow} that,
\[
  uav_1 \eqsu{\ell} u'av_1
\]
Thus, since $|av_1| = h$ and $\ell \geq 2h$, it is immediate that $u \eqsu{h} u'$. We turn to $\tau_{h}(u) \sieq{k-1+h} \tau_{h}(u')$. By the second item in Fact~\ref{fct:useshadow} we have:
\[
  \tau_{\ell}(uav_1) \sieq{k+\ell-1} \tau_{\ell}(u'av_1).
\]
Since $\ell \geq h$, one may verify that this implies $\tau_{h}(uav_1) \sieq{k+\ell-1} \tau_{h}(uav'_1)$.  Using Lemma~\ref{lem:deltadef}, we obtain
\[
  \tau_{h}(u) \cdot \delta_{h}(u,av_1)  \sieq{k+\ell-1} \tau_{h}(u') \cdot \delta_{h}(u',av_1).
\]
Moreover, since $u \eqsu{h} u'$, the definition of $\delta_h$ entails that $\delta_{h}(u,av_1) = \delta_{h}(u',av_1)$. Let $z = \delta_{h}(u,av_1)$. By definition, we have $|z| \leq h$ and,
\[
  \tau_{h}(u) \cdot z  \sieq{k-1+\ell} \tau_{h}(u') \cdot z.
\]
Since $\ell \geq 2h$, we have $k-1+\ell \geq k-1+h+h$ and it now follows from Lemma~\ref{lem:remsuff} that $\tau_{h}(u) \sieq{k-1+h} \tau_{h}(u')$, as desired.

\smallskip

We finish with the inequality $v \sieqp{k-1} v'$. We reuse the same approach, by showing that $\tau_{h}(v) \sieq{k-1+h} \tau_{h}(v')$ and $v \eqsu{h} v'$. The result will then follow by the induction on $k$ in the proof of Proposition~\ref{prop:themainprop}. Recall that $v = v_1 bv_2$ and $v' = v_1bv'_2$.

For proving that $v \eqsu{h} v'$, recall that by hypothesis we have $w \eqsu{\ell} w'$ with $\ell = 2h$. Thus, we have $w \eqsu{h} w'$. Moreover, $|v_1b| = h$ by hypothesis. Thus, $v = v_1bv_2$ and $v' = v_1bv'_2$ are suffixes of $w$ and $w'$ of length larger than $h$. Altogether, it follows that $v \eqsu{h} v'$.

It remains to show that, $\tau_{h}(v) \sieq{k-1+h} \tau_{h}(v')$. Since $v = v_1 bv_2$ and $v' = v_1bv'_2$, we get from Lemma~\ref{lem:deltadef} that:
\[
  \tau_{h}(v_1bv_2) = \tau_{h}(v_1) \cdot \delta_{h}(v_1b,v_2)\quad \text{and} \quad \tau_{h}(v_1bv'_2) =\tau_{h}(v_1) \cdot \delta_{h}(v_1b,v'_2).
\]
It is straightforward to verify that $\sieq{k-1+h}$ is compatible with concatenation. Since clearly, $\tau_{h}(v_1) \sieq{k-1+h} \tau_{h}(v_1)$, it suffices to show that we have $\delta_{h}(v_1b,v_2) \sieq{k-1+h} \delta_{h}(v_1b,v'_2)$. By the third item in Fact~\ref{fct:useshadow}, we have
\[
  \delta_{\ell}(uav_1b,v_2) \sieq{k-1+\ell} \delta_{\ell}(u'av_1b,v'_2).
\]
Since $\ell \geq h$, one may verify that this implies $\delta_{h}(uav_1b,v_2) \sieq{k+\ell-1} \delta_{h}(u'av_1b,v'_2)$. Moreover, since $|v_1b| = h$, the definition of $\delta_h$ gives us that $\delta_{h}(uav_1b,v_2) = \delta_{h}(v_1b,v_2)$ and $\delta_{h}(u'av_1b,v'_2)= \delta_{h}(v_1b,v'_2)$. Therefore, we obtain:
\[
  \delta_{h}(v_1b,v_2) \sieq{k-1+\ell} \delta_{h}(v_1b,v'_2).
\]
In particular, this implies $\delta_{h}(v_1b,v_2) \sieq{k-1+h} \delta_{h}(v_1b,v'_2)$,
which concludes the proof.

\section{\texorpdfstring{The reduction for \iwords}{The reduction
    for infinite words}}

\label{sec:theiw}
In this section, we generalize our reduction to the setting of \ilangs. We follow the same outline as the one we used for languages of finite words in Section~\ref{sec:wfwords}. First, we adapt the definition of enrichment to classes of \ilangs. In this setting, enrichment combines objects of different nature: given a class of \ilangs \Cs and a   class of languages \Ds (such as \su), we define the \Ds-enrichment of \Cs (still denoted by $\Cs \circ \Ds$). Then, we generalize the reduction theorem (Theorem~\ref{thm:wfwords}): given any lattice of \ilangs \Cs closed under inverse image, $(\Cs \circ \su)$-covering reduces to \Cs-covering.

\begin{remark}
  Our statements and proofs in this section are very similar to the ones we presented for words in Section~\ref{sec:wfwords}. In fact, aside from one specific technical result, our main theorem for \iwords and its proof are both straightforward generalizations of Theorem~\ref{thm:wfwords}. For this reason, we shall often leave the proofs of technical sub-results to the reader and refer to the corresponding statement in Section~\ref{sec:wfwords}.
\end{remark}

\subsection{\texorpdfstring{Enrichment for classes of \ilangs}{Enrichment for classes of omega languages}} We generalize enrichment to classes of \ilangs. Let us first adapt \Pb-taggings.

\medskip
\noindent
{\bf \Pb-taggings.} Let $A$ be an alphabet and \Pb a finite partition of $A^*$. We define a canonical map $\tau_\Pb: A^\omega \to (\Pb \times A)^\omega$.  Let $w \in A^\omega$ be an \iword: $w = a_1a_2a_3 \cdots$ with $a_i \in A$ for all $i$. We let $\tau_\Pb(w)$ be the \iword $\tau_\Pb(w) = b_1b_2b_3 \cdots$ where,
\[
  b_1 = (\ppart{\varepsilon},a_1) \quad \text{~~~and~~~} \quad b_i = (\ppart{a_1 \cdots a_{i-1}},a_i) \quad \text{for $i \geq 2$}
\]

\medskip
\noindent
{\bf Enrichment.} Consider a class of \ilangs \Cs and a class of \flangs \Ds (do note that \Ds is a class of languages and not of \ilangs). The \Ds-enrichment of \Cs, denoted by $\Cs \circ \Ds$ is now defined as the following class of \ilangs . For any alphabet $A$, $(\Cs \circ \Ds)(A)$ contains all \ilangs of the following form:
\[
  \tau_\Pb\inv(L) \quad \text{where \Pb is a \Ds-partition of $A^*$ and $L \in \Cs(\Pb \times A)$}.
\]
\begin{remark}
  The definition is actually simpler in this setting. Indeed, since we are dealing with \ilangs, it makes no sense to consider intersections with elements of\/ \Pb, which are \fword languages.
\end{remark}

As before, we are mainly interested in \su-enrichment since our theorem applies to this special case. As for finite words,   \su-enrichment for classes of \ilangs captures the intuitive connection between strong and weak logical fragments. One may show that over \iwords as well, \fodp is the \su-enrichment of \fodw and for any $n \geq 1$ \sipe{n} and \bspe{n} are respectively the \su-enrichments of \sio{n} and \bso{n}. Since the proofs are essentially identical\footnote{In fact, the proofs are even simpler in this setting. Since \iwords have no ``right border'', there are less cases to treat.} to those we presented in Sections~\ref{sec:fo2} and~\ref{sec:hiera} for finite words, they are left to the reader.

\medskip

We now turn to the variant for \iwords of our main theorem: given any lattice  \Cs of \ilangs which is closed under right quotient and inverse image, $(\Cs \circ \su)$-covering reduces to \Cs-covering. Both the reduction and its proofs are adapted from what we did for finite words in Section~\ref{sec:wfwords}. We start by generalizing well-formed words.

\subsection{\texorpdfstring{Languages of well-formed \iwords}{Languages of well-formed omega words}} Similarly, to what happened for finite words in Section~\ref{sec:wfwords}, using our reduction for \iwords requires working with the algebraic definition of regular \ilangs, which is based on \isemis. We first briefly recall the definition of \isemis and refer the reader to the book of Perrin and Pin~\cite{PerrinPin:Infinite-Words:2004:a} for more details.

\medskip
\noindent
{\bf \isemis.} An \emph{\isemi} is a pair $(S_+,S_{\omega})$ where $S_+$ is a semigroup and $S_{\omega}$ is a set. Moreover, $(S_+,S_{\omega})$ is equipped with two additional products: a \emph{mixed product} $S_+ \times S_{\omega} \rightarrow S_{\omega}$ that maps $s \in S_+$ and $t \in S_{\omega}$ to an element denoted $st \in S_\omega$, and an \emph{infinite product} $(S_{+})^{\omega} \rightarrow S_{\omega}$ that maps an infinite sequence $s_1,s_2,\dots \in (S_{+})^{\omega}$ to an element of $S_{\omega}$ denoted by $s_1s_2\cdots$. We require these products as well as the semigroup product of $S_+$ to satisfy all possible forms of associativity (see~\cite{PerrinPin:Infinite-Words:2004:a} for details). Finally, we denote by $s^{\omega}$ the element $sss\cdots$. Clearly, $(A^+,A^{\omega})$ is an \isemi for any alphabet~$A$. The notion of morphism is adapted to \isemis in the natural way.

An \isemi is said to be \emph{finite} if both $S_+$ and $S_{\omega}$ are finite. Note that even if an \isemi is finite, it is not obvious that a finite representation of the infinite product exists. However, it was proven by Wilke~\cite{womega} that the infinite product is fully determined by the mapping $s \mapsto s^\omega$, yielding a finite representation for finite \isemis.

An \ilang~$L \subseteq A^\omega$ is said to be \emph{recognized} by an \isemi $(S_+,S_{\omega})$ if there exist $F \subseteq S_{\omega}$ and a morphism $\alpha : (A^+,A^\omega) \rightarrow (S_+,S_\omega)$ such that $L = \alpha^{-1}(F)$. It is well known that an \ilang is regular if and only if it is recognized by a \emph{finite} \isemi.

\medskip
\noindent
{\bf Well-formed \iwords.} We may now adapt the notion of well-formed words to \iwords. To any morphism $\alpha: (A^+,A^\omega) \to (S_+,S_\omega)$ into a finite \isemi, we associate a new alphabet \awfa of \emph{well-formed \iwords}. Then, given any \ilang $L\subseteq A^\omega$ recognized by $\alpha$, we associate a new \ilang $\wfwa{L}\subseteq \awfa^\omega$.

We denote by $S$ the semigroup $S = \alpha(A^+) \subseteq S_+$. Moreover, we write $E(S)$ for the set of idempotent elements in $S$. Let ``$\square$'' be some symbol which does not belong to $S$. The \emph{alphabet of well-formed \iwords associated to $\alpha$}, denoted by \awfa, is defined as follows:
\[
  \awfa = (E(S) \cup \{\square\}) \times S \times E(S)
\]

\begin{remark}
  This definition is simpler than the one for \fwords. We do not need letters of the form $(e,s,\square)$ as \iwords do not have a ``right border''.
\end{remark}

The definition of well-formed \iwords is the natural one. We say that an \iword $w \in \awfa^\omega$ is \emph{well-formed} when it can be written as follows:
\[
  w = (\square,s_0,f_0) \cdot (e_1,s_1,f_1) \cdot (e_2,s_2,f_2) \cdots
\]
with $f_i=e_{i+1} \in E(S)$ for all $i \in \nat$. It is immediate by definition that the language of all well-formed \iwords in $\awfa^\omega$ is regular.

\begin{fct} \label{fct:wfwords:wfiregular}
  The language of all well-formed \iwords in $\awfa^\omega$ is regular.
\end{fct}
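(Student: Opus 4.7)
The plan is to build an explicit nondeterministic Büchi automaton for the set of well-formed $\omega$-words, since the well-formedness condition decomposes into purely local constraints. Concretely, being well-formed requires only: (i) the first component of the first letter is $\square$, and (ii) for every pair of consecutive letters $(e_i,s_i,f_i)(e_{i+1},s_{i+1},f_{i+1})$ one has $e_{i+1} \in E(S)$ and $f_i = e_{i+1}$. There is no global acceptance condition to impose, since once the local constraints hold at every position, the $\omega$-word is well-formed.

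I would take as state set $Q = \{q_0\} \cup (E(S) \cup \{\square\})$, with $q_0$ the unique initial state. The role of a state $q \in E(S) \cup \{\square\}$ is to remember the third component of the letter just read (equivalently, the value that the first component of the next letter must take). The transitions are as follows: from $q_0$, on a letter $(\square,s,f) \in \awfa$, we move to state $f$; from $q_0$ no other transition is allowed. From a state $e \in E(S)$, on a letter $(e',s,f) \in \awfa$, we move to state $f$ provided $e' = e$; and from $\square$ no transition is defined, since $q_0$ is only entered initially. Declaring every state accepting turns this into a Büchi automaton accepting exactly the well-formed $\omega$-words: any accepting run witnesses that every consecutive pair of letters of the input satisfies the local constraint, and that the initial letter has $\square$ as its first component; conversely, a well-formed $\omega$-word admits a (unique) run through this automaton.

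Since the automaton has finitely many states and the resulting language is accepted by an NBA, the language of well-formed $\omega$-words is regular. There is no real obstacle here: the statement is essentially the observation that well-formedness is a safety property expressed by a finite-state local check, and the proof is just recording the obvious deterministic automaton that performs this check.
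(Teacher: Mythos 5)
Your proof is correct and matches the paper's intended argument: the paper simply declares the fact immediate by definition (as it does for the finite-word analogue, where it notes that one builds the obvious automaton), and your explicit Büchi automaton with all states accepting is precisely the local safety check the paper leaves implicit. The only cosmetic remark is that the state $\square$ in your automaton is unreachable, since the third component of every letter of $\awfa$ already lies in $E(S)$ in the $\omega$-word setting.
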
 

We now associate a new \ilang over \awfa to each \ilang $L$ recognized by $\alpha$: the \emph{language of well-formed \iwords} associated to $L$. As the name suggests, it is made exclusively of well-formed \iwords.

We define a canonical morphism $\eval: (\awfa^+,\awfa^\omega) \to (S_+,S_\omega)$ by giving the image of the two kinds of letters in \awfa. Let $s \in S$ and $e,f \in E(S)$, we define,
\[
  \begin{array}{lll}
    \eval((e,s,f)) = esf      & \qquad & \eval((\square,s,f)) = sf.
  \end{array}
\]
Consider an \ilang $L$ recognized by $\alpha$. We define $\wfwa{L}\subseteq \awfa^\omega$ as follows:
\[
  \wfwa{L} = \bigl \{w \in \awfa^\omega \mid w \text{ is well-formed and } \eval(w) \in \alpha(L)\bigr\}.
\]
Clearly, $\wfwa{L}$ is the intersection of the language of all well-formed \iwords with an \ilang recognized by \eval. Thus, it is regular.

\begin{fct} \label{fct:wfwords:wfiregular2}
  For any \ilang $L$ recognized by $\alpha$, \wfwa{L} is regular.
\end{fct}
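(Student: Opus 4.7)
The plan is to observe directly that $\wfwa{L}$ is a finite intersection of two regular \ilangs, namely
\[
  \wfwa{L} = W \cap \eval^{-1}(\alpha(L)),
\]
where $W = \{w \in \awfa^\omega \mid w \text{ is well-formed}\}$. This decomposition is immediate from the definition of $\wfwa{L}$.

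First, I would invoke Fact~\ref{fct:wfwords:wfiregular} to see that $W$ is regular. Second, I would argue that $\eval^{-1}(\alpha(L))$ is regular. Indeed, $\alpha$ is by hypothesis a morphism into a \emph{finite} \isemi $(S_+,S_\omega)$ recognizing $L$, so $\alpha(L)$ is a subset of the finite set $S_\omega$. The map $\eval: (\awfa^+,\awfa^\omega) \to (S_+,S_\omega)$ is itself a morphism into this same finite \isemi, so by the algebraic characterization of regular \ilangs recalled above (an \ilang is regular if and only if it is recognized by a finite \isemi), the \ilang $\eval^{-1}(\alpha(L))$ is regular, being recognized by $(S_+,S_\omega)$ via~$\eval$ with accepting set $\alpha(L)$.

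Finally, I would conclude by using the fact that the class of regular \ilangs is closed under finite intersection, which is standard (for instance, one may use the product construction on nondeterministic Büchi automata, or equivalently the product of the two recognizing \isemis). Hence the intersection $W \cap \eval^{-1}(\alpha(L)) = \wfwa{L}$ is regular, as desired.

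There is essentially no obstacle here: every step is either a direct unfolding of the definition of $\wfwa{L}$ or a well-known closure/recognition property of the algebraic framework for \ilangs recalled earlier in the section (and detailed in~\cite{PerrinPin:Infinite-Words:2004:a}). The only minor point requiring care is to check that $\eval$ is indeed a well-defined morphism of \isemis on all of $(\awfa^+,\awfa^\omega)$ (not merely on well-formed \iwords), but this is clear from its definition on letters, which gives elements of $S_+$, so that the induced semigroup and infinite products land in $S_+$ and $S_\omega$ respectively.
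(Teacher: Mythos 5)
Your proof is correct and follows essentially the same route as the paper: the paper likewise observes that \wfwa{L} is the intersection of the (regular) language of all well-formed \iwords with an \ilang recognized by the morphism \eval into the finite \isemi $(S_+,S_\omega)$, and concludes by closure under intersection. Your additional check that \eval is a well-defined morphism on all of $(\awfa^+,\awfa^\omega)$ is a reasonable extra precaution but does not change the argument.
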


Finally, we lift the definition to multisets \Lb made of \ilangs recognized by $\alpha$ and write $\wfwa{\Lb}$ for the multiset $\wfwa{\Lb} = \{\wfwa{L} \mid L \in \Lb\}$.

\subsection{The reduction theorem for \iwords} We may now adapt Theorem~\ref{thm:wfwords} to \iwords. We state an effective reduction from $(\Cs \circ \su)$-covering to \Cs-covering which holds for any lattice of \ilangs \Cs closed under inverse image  (note that unlike in the setting of finite \fwords, we do not require \Cs to be nontrivial here).

\begin{theorem} \label{thm:wfiwords}
  Let $\alpha: (A^+,A^\omega) \to (S_+,S_\omega)$ be an \isemi morphism and let \Cs be a lattice of \ilangs closed under inverse image. Moreover, let $L$ be an \ilang and \Lb be a multiset of \ilangs, all recognized by $\alpha$. Then, the following properties are equivalent:
  \begin{enumerate}
  \item $(L,\Lb)$ is $(\Cs \circ \su)$-coverable.
  \item $(L,\Lb)$ is $(\Cs \circ \su_{n})$-coverable, where $n = 2^{3|S_+|+1}$.
  \item $(\wfwa{L},\wfwa{\Lb})$ is \Cs-coverable.
  \end{enumerate}
\end{theorem}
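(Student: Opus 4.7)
The plan is to prove the implications $(1) \Rightarrow (3) \Rightarrow (2) \Rightarrow (1)$, following closely the template of Theorem~\ref{thm:wfwords}. The implication $(2) \Rightarrow (1)$ is immediate since $\su_n \subseteq \su$ forces $\Cs \circ \su_n \subseteq \Cs \circ \su$. The other two implications are obtained by adapting Propositions~\ref{prop:wform:firstdir} and~\ref{prop:wform:secdir} to the \iword setting. A welcome simplification is that \su-enrichment for \ilangs has no union over partition cells (an \iword cannot belong to a single \flang $P \in \Pb$), so the case analysis that required Case~1 of Proposition~\ref{prop:wform:firstdir} collapses; this explains why nontriviality of \Cs is no longer required in the statement.

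For $(1) \Rightarrow (3)$, I would build, for each $k \geq 1$, an \isemi morphism $\gamma : (\awfa^+, \awfa^\omega) \to (A^+, A^\omega)$ playing the role of the morphism $\gamma$ in Proposition~\ref{prop:wform:firstdir}. Using nonempty representatives $\wcroch{s}$ for each $s \in S$, I set $\gamma((e,s,f)) = \wcroch{e}^k \wcroch{s} \wcroch{f}^k$ and $\gamma((\square,s,f)) = \wcroch{s} \wcroch{f}^k$ and extend to \iwords. A direct computation using $e_i^2 = e_i$ shows that $\eval(w) = \alpha(\gamma(w))$ for every well-formed $w$, which gives $w \in \wfwa{L} \Leftrightarrow \gamma(w) \in L$ for any \ilang $L$ recognized by $\alpha$. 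For the second item, given $K = \tau_\Pb^{-1}(L_P) \in (\Cs \circ \su_k)(A)$, I would adapt Lemma~\ref{lem:wfwords:wwordsfinal} to build an \isemi morphism $\beta$ with $\beta(w) = \tau_\Pb(\gamma(w))$ for every well-formed $w$, and take $H_K = \beta^{-1}(L_P)$, which belongs to $\Cs(\awfa)$ by closure of \Cs under inverse image. Then the family $\{H_K \mid K \in \Kb_A\}$ converts any separating $(\Cs \circ \su)$-cover of $(L, \Lb)$ into a separating \Cs-cover of $(\wfwa{L}, \wfwa{\Lb})$, exactly as in the \fword case.

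For $(3) \Rightarrow (2)$, I would build the \iword analogue of the map $\eta$ from Proposition~\ref{prop:wform:secdir}. Choose $k$ bounded by a polynomial in $|S_+|$ and reuse the definition of distinguished positions verbatim. For an \iword $w$, the same pigeonhole argument as in Fact~\ref{fct:wform:herecomesdis} yields infinitely many distinguished positions $x_0 < x_1 < x_2 < \cdots$, which factor $w = w_0 w_1 w_2 \cdots$ into nonempty finite blocks. Set $\eta(w) = (\square, \alpha(w_0), e_0)(e_0, \alpha(w_1), e_1)(e_1, \alpha(w_2), e_2) \cdots$ where $e_i$ is the smallest idempotent satisfying $\alpha(u_i) \cdot e_i = \alpha(u_i)$ and $u_i$ is the $k$-type of $x_i$; this is a well-formed \iword by construction. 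To establish $\eta^{-1}(K) \in (\Cs \circ \su_n)(A)$ for every $K \in \Cs(\awfa)$, I would take the bound $n = 2^{3|S_+|+1}$ from the statement, chosen large enough that whether a position is distinguished and, if so, the resulting letter of $\eta(w)$ depend only on the $\tau_{\Pb_n}$-label of that position, where $\Pb_n$ is the partition of $A^*$ into $\eqsu{n}$-classes. An \isemi morphism $\beta$ with $\beta(\tau_{\Pb_n}(w)) = \eta(w)$ can then be defined letterwise, giving $\eta^{-1}(K) = \tau_{\Pb_n}^{-1}(\beta^{-1}(K)) \in (\Cs \circ \su_n)(A)$.

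The main obstacle I anticipate is verifying the identity $\eval(\eta(w)) = \alpha(w)$ in $S_\omega$. In the \fword case this was a short induction exploiting $P_i \cdot e_i = P_i$, with $P_i = \alpha(w_0 \cdots w_i)$, but for \iwords the insertion of the idempotents $e_i$ must be justified at the level of \isemi products. My intended route is to use the grouping axiom of \isemis together with Wilke's representation of the $\omega$-product via $s \mapsto s^\omega$ to reduce both $\eval(\eta(w))$ and $\alpha(w)$ to a common linked pair via a Ramsey extraction: by pigeonhole, some pair $(P_\infty, e_\infty) \in S_+ \times E(S)$ appears as $(P_{i_j}, e_{i_j})$ for infinitely many indices $j$, and along this subsequence both products collapse to $P_\infty \cdot f^\omega$ for the same idempotent $f$. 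This Ramsey-style argument is also the plausible source of the exponential bound $n = 2^{3|S_+|+1}$, which must ensure that $\eqsu{n}$-equivalent prefixes agree on the combinatorial data controlling several consecutive distinguished positions, which accounts for the factor $3$ in the exponent.
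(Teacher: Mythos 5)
Your direction $(1) \Rightarrow (3)$ matches the paper's argument (Proposition~\ref{prop:ifirstdir}), and your overall architecture $(1) \Rightarrow (3) \Rightarrow (2) \Rightarrow (1)$ is the right one. The gap is in $(3) \Rightarrow (2)$, precisely at the step you flag as the "main obstacle". You propose to reuse the finite-word definition of distinguished positions \emph{verbatim} (the \ktype $u$ of $x$ admits some $e \in E(S)$ with $\alpha(u)\cdot e = \alpha(u)$) together with the pigeonhole frequency argument of Fact~\ref{fct:wform:herecomesdis}, and then to prove $\eval(\eta(w)) = \alpha(w)$ by a Ramsey extraction under which "both products collapse to $P_\infty \cdot f^\omega$ for the same idempotent $f$". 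That last claim is exactly what fails. The Ramsey extraction gives you $\alpha(w) = sf^\omega$ and $\eval(\eta(w)) = tg^\omega$ with \emph{a priori different} linked pairs $(s,f)$ and $(t,g)$; to conclude $sf^\omega = tg^\omega$ you need the identities $s=t$, $fg=f$ and $gf=g$ (the paper's Lemma~\ref{lem:eqinf}). The equality $gf = g$ requires exhibiting an element $r$ with $e_j = r\cdot\alpha(\cdots)$, \emph{i.e.}, it requires each $e_j$ to be the image under $\alpha$ of an actual nonempty suffix of $w_0\cdots w_j$, not merely a right stabilizer of $\alpha(w_0\cdots w_j)$. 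With your weaker definition of distinguished, no such $r$ exists in general and the argument breaks; the finite-word induction that inserts the $e_i$ one at a time does not pass to the limit in $S_\omega$.

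This is why the paper strengthens the definition of distinguished position for \iwords (the \ktype must have a nonempty suffix $v$ with $\alpha(v) \in E(S)$), and it is also why your account of the constant is off: the pigeonhole principle with $k = |S_+|$ no longer guarantees that distinguished positions in this stronger sense are frequent (pigeonhole produces an idempotent of the form $\alpha(a_{j_1+1}\cdots a_{j_2})^p$, which is not the image of an infix of $w$ when $p>1$). The paper instead invokes Simon's factorization forest theorem: every word of length greater than $2^{3|S_+|}$ contains an infix whose image is idempotent, which is the actual source of the bound $k = 2^{3|S_+|}$ and hence $n = 2k = 2^{3|S_+|+1}$, rather than the bookkeeping of consecutive distinguished positions you suggest. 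To repair your proof you need to (i) adopt the strengthened notion of distinguished position, (ii) replace the pigeonhole frequency argument by the factorization forest bound, and (iii) carry out the three equalities of Lemma~\ref{lem:eqinf} explicitly rather than asserting a common collapse.
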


As announced, the statement of Theorem~\ref{thm:wfiwords} is a natural analogue of Theorem~\ref{thm:wfwords}. Let us point out that in this case as well, we shall present a constructive proof.

\begin{remark}
  The constant used in the second item of Theorem~\ref{thm:wfiwords} is much larger than the corresponding one in Theorem~\ref{thm:wfwords}. This is explained by technical difficulties that are specific to \iwords and arise when proving the implication $(3) \Rightarrow (2)$ of the theorem.
\end{remark}

Finally, note that as before, we may adapt Theorem~\ref{thm:wfiwords} to accommodate the simpler separation problem.

\begin{corollary} \label{cor:wfiwords}
  Let $\alpha: (A^+,A^\omega) \to (S_+,S_\omega)$ be an \isemi morphism and let \Cs be a lattice of \ilangs closed under inverse image. Moreover, let $L_1,L_2$ be two  \ilangs recognized by $\alpha$. Then, the following properties are equivalent:
  \begin{enumerate}
  \item $L_1$ is $(\Cs \circ \su)$-separable from $L_2$.
  \item $L_1$ is $(\Cs \circ \su_{n})$-separable from $L_2$  where $n = 2^{3|S_+|+1}$.
  \item \wfwa{L_1} is \Cs-separable from \wfwa{L_2}.
  \end{enumerate}
\end{corollary}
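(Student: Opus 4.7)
The plan is to derive Corollary~\ref{cor:wfiwords} directly from Theorem~\ref{thm:wfiwords} by invoking the standard observation that separation is the special case of covering where the second input is a singleton multiset. This is exactly how Corollary~\ref{thm:wfwords:sep} was obtained from Theorem~\ref{thm:wfwords} in the finite-word setting. Concretely, I plan to instantiate Theorem~\ref{thm:wfiwords} with $L = L_1$ and $\Lb = \{L_2\}$, and then translate each of the three resulting covering statements into the corresponding separation statement by the analogue of Fact~\ref{fct:septocove} for \ilangs.

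Before doing so, I must ensure that this translation is legal, i.e.\ that the three classes involved---namely \Cs, $\Cs \circ \su_n$ and $\Cs \circ \su$---are all lattices of \ilangs. For \Cs this is given by hypothesis. For $\Cs \circ \su$ and its strata $\Cs \circ \su_n$, this is a routine verification from the definition of \su-enrichment for \ilangs: given two \ilangs $\tau_{\Pb_1}\inv(K_1)$ and $\tau_{\Pb_2}\inv(K_2)$ in $\Cs \circ \su_n$, one passes to the common refinement \Pb of $\Pb_1$ and $\Pb_2$ (which is again an $\su_n$-partition of $A^*$ since \su and its strata are Boolean algebras by Proposition~\ref{prop:suvar}) and uses closure of \Cs under alphabetic inverse image together with its lattice structure to express both the union and the intersection inside $\Cs(\Pb \times A)$.

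Once the lattice property is secured, I would state and briefly justify the \ilang analogue of Fact~\ref{fct:septocove}: for any lattice \Ds of \ilangs and any \ilangs $K_1, K_2$, one has $K_1$ is \Ds-separable from $K_2$ if and only if $(K_1,\{K_2\})$ is \Ds-coverable. The argument is unchanged from the one sketched in~\cite{pzcovering2} for \flangs: the only-if direction is immediate, and the if direction uses that given a separating \Ds-cover $\Kb$ of $(K_1,\{K_2\})$, the union $\bigcup_{K \in \Kb} K$ is a single element of \Ds (thanks to lattice closure) that separates $K_1$ from $K_2$. Applying this fact to the lattices \Cs, $\Cs \circ \su_n$ and $\Cs \circ \su$, and using the trivial identity $\wfwa{\{L_2\}} = \{\wfwa{L_2}\}$, the three conditions of Theorem~\ref{thm:wfiwords} rewrite exactly as the three conditions of Corollary~\ref{cor:wfiwords}, which concludes the argument.

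There is no real obstacle here: the work lies entirely in Theorem~\ref{thm:wfiwords} itself, and the corollary is a purely formal consequence. The only point that deserves explicit mention is the lattice closure of $\Cs \circ \su$, since without it the equivalence between separation and singleton-covering would not be available; everything else is a direct specialization of Theorem~\ref{thm:wfiwords}.
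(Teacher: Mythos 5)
Your proposal is correct and matches the paper's intended (and essentially unstated) argument: the paper declares Corollary~\ref{cor:wfiwords} an immediate adaptation of Theorem~\ref{thm:wfiwords} to singleton multisets, exactly via the \ilang analogue of Fact~\ref{fct:septocove}. Your explicit check that $\Cs \circ \su$ and $\Cs \circ \su_n$ are lattices is a legitimate prerequisite that the paper only mentions in a remark and leaves to the reader, and your refinement-of-partitions argument for it is sound.
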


While we shall not detail the applications of Theorem~\ref{thm:wfiwords} as much as we did for Theorem~\ref{thm:wfwords}, let us briefly outline them. As we explained above, one may show that over \iwords, \fodp is the \su-enrichment of \fodw and for any $n \geq 1$ \sipe{n} and \bspe{n} are respectively the \su-enrichments of \sio{n} and \bso{n}. It was shown in~\cite{ppzinf16} that over \iwords, separation is decidable for the levels \sio{2} and \sio{3} of the quantifier alternation hierarchy. Thus, we obtain from Theorem~\ref{thm:wfiwords} that separation is decidable for \sipe{2} and \sipe{3} over \iwords.

\begin{remark}
  We do not speak about covering since there are no published results for covering in the setting of \iwords. We also do not mention separation \fodp, \sipe{1} and \bspe{1} over \iwords for the same reason. However, let us point out that our problem here is just the lack of bibliography. It is actually possible to generalize the existing results and show that covering and separation are both decidable for \fodw, \sio{1}, \bso{1}, \sio{2}, \bso{2} and \sio{3} overs \iwords. Thus, Theorem~\ref{thm:wfiwords} can be used to obtain the same results over \iwords as the ones obtained from Theorem~\ref{thm:wfwords} over words.
\end{remark}

The remainder of this section is devoted to proving Theorem~\ref{thm:wfiwords}. We fix an arbitrary morphism $\alpha: (A^+,A^\omega) \to (S_+,S_\omega)$ and we let $S$ be the semigroup $S = \alpha(A^+)$. Recall that the associated alphabet of well-formed words is defined as follows:
\[
  \awfa = (E(S) \cup \{\square\}) \times S \times E(S).
\]
Let \Cs be a lattice of \ilangs closed under inverse image. Our objective is to show that when $L$ and \Lb are respectively a \ilang and a multiset of \ilangs, all recognized by $\alpha$, the following properties are equivalent:
\begin{enumerate}
\item $(L,\Lb)$ is $(\Cs \circ \su)$-coverable.
\item $(L,\Lb)$ is $(\Cs \circ \su_{n})$-coverable where $n=2^{3|S_+|+1}$.
\item $(\wfwa{L},\wfwa{\Lb})$ is \Cs-coverable.
\end{enumerate}

We prove that $(1) \Rightarrow (3) \Rightarrow (2) \Rightarrow (1)$. As before, observe that the direction $(2) \Rightarrow (1)$ is trivial since it is clear that $\Cs \circ \su_{n} \subseteq \Cs \circ \su$. Thus, we may concentrate on $(1) \Rightarrow (3)$ and $(3) \Rightarrow (2)$.

The argument for the direction $(1) \Rightarrow (3)$ is basically identical to the one we used when proving the corresponding implication in Theorem~\ref{thm:wfwords}. For this reason, we shall only briefly sketch it. On the other hand, we provide more details for the proof of the implication $(3) \Rightarrow (2)$, which slightly departs from what we did for \fwords.

\subsection{\texorpdfstring{From $(\Cs \circ \su)$-covering to \Cs-covering}{From (C o SU)-covering to C-covering}}

We start with the direction $(1) \Rightarrow (3)$. As we explained above, the argument is essentially the same as the one we presented for the corresponding direction in Theorem~\ref{thm:wfwords}. In fact, it is even made simpler by the fact that the  definition of $\Cs \circ \su$ is less involved for classes of \ilangs.

The argument is based on the following proposition which is adapted from the one we used to handle the corresponding direction for \fwords: Proposition~\ref{prop:wform:firstdir}.

\begin{proposition} \label{prop:ifirstdir}
  For any integer $k \geq 1$, there exists a map $\gamma: \awfa^\omega \to A^\omega$ satisfying the two following properties:
  \begin{enumerate}
  \item For any \ilang $L \subseteq A^\omega$ recognized by $\alpha$ and any well-formed \iword $w \in \awfa^\omega$, we have $w \in \wfwa{L}$ if and only if $\ucroch{w} \in L$.
  \item For any \ilang $K \in (\Cs \circ \su_k)(A)$, there exists $H_K \in \Cs(\awfa)$ such that for any well-formed \iword $w \in \awfa^\omega$, $w \in H_K$ if and only if $\ucroch{w} \in K$.
  \end{enumerate}
\end{proposition}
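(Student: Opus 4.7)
The plan is to mimic the construction of Proposition~\ref{prop:wform:firstdir} with the simplifications available in the \iword setting: the definition of $\Cs \circ \su$ for \ilangs has no ``$P \cap \cdots$'' part, so only the analogue of Case~2 of the \fword proof is needed, but on the other hand we must make sure our auxiliary morphisms send every letter of $\awfa$ to a \textbf{nonempty} finite word so they extend coherently to maps from $\awfa^\omega$ to $A^\omega$ (resp.\ to $(\Pb\times A)^\omega$).

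First, fix for every $s \in S = \alpha(A^+)$ a \emph{nonempty} representative $\wcroch{s} \in A^+$ with $\alpha(\wcroch{s}) = s$ (this is possible, including for $e \in E(S)$, since $E(S) \subseteq S = \alpha(A^+)$). Define $\gamma\colon\awfa^*\to A^*$ as the morphism with $\gamma((e,s,f)) = \wcroch{e}^{k}\wcroch{s}\wcroch{f}^{k}$ when $e\in E(S)$, and $\gamma((\square,s,f)) = \wcroch{s}\wcroch{f}^{k}$. Since the image of every letter is nonempty, $\gamma$ extends canonically to a map $\gamma\colon\awfa^\omega \to A^\omega$. For item~(1), one expands $\gamma$ on a well-formed $w = (\square,s_0,e_1)(e_1,s_1,e_2)(e_2,s_2,e_3)\cdots$ to obtain
\[
  \gamma(w) \;=\; \wcroch{s_0}\wcroch{e_1}^{2k}\wcroch{s_1}\wcroch{e_2}^{2k}\wcroch{s_2}\wcroch{e_3}^{2k}\cdots
\]
and applies $\alpha$. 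Using that each $e_i$ is idempotent (so $e_i^{2k} = e_i = e_i\cdot e_i$), both $\alpha(\gamma(w))$ and $\eval(w)$ collapse to the same infinite product $s_0 e_1 s_1 e_2 s_2 e_3 \cdots$ in $S_\omega$. Since $\alpha$ recognizes $L$ and $w$ is well-formed, this gives $\gamma(w) \in L \iff \alpha(\gamma(w)) \in \alpha(L) \iff \eval(w) \in \alpha(L) \iff w \in \wfwa{L}$.

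For item~(2), unwind the definition of $\Cs \circ \su_k$ in the \ilang setting: there exist an $\su_k$-partition $\Pb$ of $A^*$ and an \ilang $L_\Pb \in \Cs(\Pb\times A)$ with $K = \tau_\Pb\inv(L_\Pb)$. Next build a morphism $\beta\colon\awfa^*\to(\Pb\times A)^*$ exactly as in the proof of Lemma~\ref{lem:wfwords:wwordsfinal}: on each letter $(\square,s,f)$ set $\beta((\square,s,f)) = \tau_\Pb(\gamma((\square,s,f)))$, and on $(e,s,f)$ with $e\in E(S)$ set $\beta((e,s,f)) = \delta_\Pb(\wcroch{e}^{k},\gamma((e,s,f)))$. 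Since $\wcroch{s}$ is nonempty, the image $\beta(b)$ of every letter $b\in\awfa$ is nonempty, so $\beta$ extends canonically to $\beta\colon\awfa^\omega\to(\Pb\times A)^\omega$. The key lemma to verify is
\[
  \beta(w) \;=\; \tau_\Pb(\gamma(w)) \qquad \text{for every well-formed } w \in \awfa^\omega.
\]
As in the \fword proof, it suffices to establish this identity on every finite prefix $b_1\cdots b_\ell$ of $w$ by induction on $\ell$, the crucial point being that for $\ell\ge 2$ with $b_{\ell-1}$ having third component $e\in E(S)$, the suffix $\wcroch{e}^k$ of $\gamma(b_1\cdots b_{\ell-1})$ has length at least $k$, so that $\ppart{\wcroch{e}^k} = \ppart{\gamma(b_1\cdots b_{\ell-1})}$ (using that $\Pb$ is an $\su_k$-partition), and then Lemma~\ref{lem:deltadef} glues the two tags together; passing to the limit yields the desired equality for the infinite word $w$ itself. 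Setting $H_K = \beta\inv(L_\Pb)$, which lies in $\Cs(\awfa)$ by closure of $\Cs$ under inverse image, we conclude that for every well-formed $w\in\awfa^\omega$,
\[
  w \in H_K \iff \beta(w) \in L_\Pb \iff \tau_\Pb(\gamma(w))\in L_\Pb \iff \gamma(w) \in K.
\]

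The main obstacle is purely notational rather than conceptual: one has to track carefully that the nonemptiness of the chosen representatives $\wcroch{s}$ (in particular $\wcroch{e}$ for $e\in E(S)$) is what allows $\gamma$ and $\beta$ to extend from finite morphisms to genuine maps on $\omega$-words, and that the inductive identification $\beta(b_1\cdots b_\ell) = \tau_\Pb(\gamma(b_1\cdots b_\ell))$ holds uniformly, so it passes to infinite concatenations. Everything else is a straightforward transcription of the \fword argument.
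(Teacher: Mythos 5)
Your proposal is correct and follows essentially the same route as the paper: the same morphism $\gamma$ built from nonempty representatives $\wcroch{s}$, the same idempotency computation for item~(1), and for item~(2) the same morphism $\beta$ with $H_K=\beta\inv(L_\Pb)$, where your key identity $\beta(w)=\tau_\Pb(\gamma(w))$ is exactly the paper's Lemma~\ref{lem:iwwordsfinal}. You even isolate the same two technical points the paper relies on (nonemptiness of the representatives for the lift to \iwords, and $\ppart{\wcroch{e}^k}=\ppart{\gamma(b_1\cdots b_{\ell-1})}$ in the induction), so there is nothing to add.
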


One may show the direction $(1) \Rightarrow (3)$ in Theorem~\ref{thm:wfiwords} from this proposition using an argument which is similar to the one used for proving the same direction in Theorem~\ref{thm:wfwords} from Proposition~\ref{prop:wform:firstdir}. We leave this argument to the reader and prove Proposition~\ref{prop:ifirstdir}. Let us fix $k \geq 1$ for the~proof.

\medskip
\noindent
{\bf Proof of Proposition~\ref{prop:ifirstdir}: definition of $\gamma$.}   We start by defining the map $\gamma: \awfa^\omega \to A^\omega$ and then show that it satisfies the desired properties. We actually define a morphism $\gamma: \awfa^+ \to A^+$ which we lift as a map $\gamma: \awfa^\omega \to A^\omega$. Hence, it suffices to describe the image of any letter in \awfa.

For any element $s \in S$ (recall that $S = \alpha(A^+)$), we associate an arbitrarily chosen {\bf nonempty} word $\wcroch{s} \in A^+$ such that $\alpha(\wcroch{s}) = s$ (note that such a word exists by definition of $S$). We are now ready to define our morphism $\gamma: \awfa^+ \to A^+$. Recall that there are two kinds of letters in \awfa. Given  $s \in S$ and $e,f \in E(S)$, we define,
\[
  \begin{array}{lll}
    \gamma((e,s,f))             & = & \wcroch{e}^{k}\wcroch{s}\wcroch{f}^{k} \\
    \gamma((\square,s,f))       & = & \wcroch{s}\wcroch{f}^{k}
  \end{array}
\]
Given $w = b_1b_2 \cdots \in \awfa^\omega$, we now define $\gamma(w) = \gamma(b_1)\gamma(b_2) \cdots \in A^\omega$. It remains to prove that $\gamma$ satisfies the two properties stated in Proposition~\ref{prop:wform:firstdir}.

\medskip
\noindent
{\bf Proof of Proposition~\ref{prop:ifirstdir}: first item.} Consider an \ilang $L \subseteq A^\omega$ which is recognized by~$\alpha$. We have to show that for any well-formed \iword $w \in \awfa^\omega$, $w \in \wfwa{L}$ if and only if $\ucroch{w} \in L$.

Since $w$ is well-formed, we have $w \in \wfwa{L}$ if and only if $\eval(w) \in \alpha(L)$. Moreover, since $\alpha$ recognizes $L$, we have $\ucroch{w} \in L$ if and only if $\alpha(\ucroch{w}) \in \alpha(L)$. Hence, it suffices to prove that $\eval(w) = \alpha(\ucroch{w})$. By definition,
\[
  \begin{array}{cll}
    w      & = & (\square,s_0,e_1) \cdot (e_1,s_1,e_2) \cdot (e_2,s_2,e_3) \cdots                                               \\
    \ucroch{w} & = & \wcroch{s_0}\wcroch{e_1}^{2k}\wcroch{s_1}\wcroch{e_2}^{2k} \wcroch{s_2} \cdot \wcroch{e_3}^{2k}\cdots
  \end{array}
\]
Hence, we have,
\[
  \begin{array}{cll}
    \eval(w)          & =  & s_0e_1e_1s_1e_2s_2e_3  \cdots \\
    \alpha(\ucroch{w}) & = & s_0(e_1)^{2k}s_1(e_2)^{2k}s_2(e_3)^{2k}\cdots
  \end{array}
\]
Therefore, since each element $e_i \in E(S)$ is an idempotent, $\eval(w) = \alpha(\ucroch{w})$.

\medskip
\noindent
{\bf Proof of Proposition~\ref{prop:ifirstdir}: second item.} Consider $K \in (\Cs \circ \su_k)(A)$, we have to build a new \ilang $H_K \in \Cs(\awfa)$ satisfying the following property:
\begin{equation} \label{eq:igoal1}
  \text{For any well-formed \iword $w \in \awfa^\omega$,} \quad \text{$w \in H_K$ if and only if $\ucroch{w} \in K$}
\end{equation}
The argument is simpler that what we did for \fwords since the definition of $\Cs \circ \su_{k}$ is less involved. By definition, there exists an $\su_k$-partition \Pb of $A^*$ and $L \in \Cs(\Pb \times A)$ such that,
\[
  K = \tau_\Pb\inv(L).
\]
The construction of $H_K$ is based on the following lemma (which is adapted from Lemma~\ref{lem:wfwords:wwordsfinal} used in the  case of finite words). Recall that any morphism $\beta: \awfa^* \to (\Pb \times A)^*$ may be lifted as a map $\beta: \awfa^\omega \to (\Pb \times A)^\omega$.

\begin{lemma} \label{lem:iwwordsfinal}
  There exists a morphism $\beta: \awfa^* \to (\Pb \times A)^*$ such that for any well-formed \iword $w \in \awfa^\omega$, we have $\tau_\Pb(\ucroch{w}) = \beta(w)$.
\end{lemma}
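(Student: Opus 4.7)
My plan is to adapt, essentially verbatim, the construction and proof from Lemma~\ref{lem:wfwords:wwordsfinal}, and then handle the infinite extension by taking a limit over finite prefixes. Specifically, I define the morphism $\beta: \awfa^* \to (\Pb \times A)^*$ on letters by setting $\beta((\square,s,f)) = \tau_\Pb(\gamma((\square,s,f)))$ and $\beta((e,s,f)) = \delta_\Pb(\wcroch{e}^k,\gamma((e,s,f)))$ for $e \in E(S)$, using the same map $\delta_\Pb$ (satisfying $\tau_\Pb(uw) = \tau_\Pb(u) \cdot \delta_\Pb(u,w)$, Lemma~\ref{lem:deltadef}) as in the finite case. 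Since $|\delta_\Pb(u,w)| = |w|$, this $\beta$ preserves lengths and therefore extends canonically from $\awfa^*$ to $\awfa^\omega$ by infinite concatenation, commuting with taking prefixes.

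Given a well-formed \iword $w = b_1 b_2 b_3 \cdots \in \awfa^\omega$, to prove $\tau_\Pb(\ucroch{w}) = \beta(w)$ it suffices to show the finite identity $\tau_\Pb(\ucroch{b_1 \cdots b_\ell}) = \beta(b_1 \cdots b_\ell)$ for every $\ell \geq 1$. I prove this by induction on $\ell$. The base case $\ell = 1$ is immediate: since $w$ is well-formed, $b_1$ is of the form $(\square,s,f)$, and the definition of $\beta$ on such a letter gives exactly $\tau_\Pb(\gamma(b_1)) = \tau_\Pb(\ucroch{b_1})$. For the inductive step, well-formedness forces $b_\ell = (e,s,f)$ with $e \in E(S)$ and $b_{\ell-1}$ having $e$ as its third component; inspecting the definition of $\gamma$, one sees that $\wcroch{e}^k$ is a suffix of $\gamma(b_{\ell-1})$ and hence of $\ucroch{b_1 \cdots b_{\ell-1}}$.

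The key observation is then that, since $\wcroch{e}^k$ has length at least $k$ (because $\wcroch{e} \neq \varepsilon$, which holds since every middle component in \awfa lies in $S = \alpha(A^+)$) and is a suffix of $\ucroch{b_1 \cdots b_{\ell-1}}$, and since $\Pb$ is an $\su_k$-partition of $A^*$, both \fwords lie in the same class of \Pb. This yields the equality $\delta_\Pb(\wcroch{e}^k,\gamma(b_\ell)) = \delta_\Pb(\ucroch{b_1 \cdots b_{\ell-1}},\gamma(b_\ell))$. Combining this with the morphism properties of $\beta$ and $\gamma$, the induction hypothesis, and Lemma~\ref{lem:deltadef} gives
\[
  \beta(b_1 \cdots b_\ell) = \tau_\Pb(\ucroch{b_1 \cdots b_{\ell-1}}) \cdot \delta_\Pb(\ucroch{b_1 \cdots b_{\ell-1}},\gamma(b_\ell)) = \tau_\Pb(\ucroch{b_1 \cdots b_\ell}),
\]
which closes the induction.

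The only place where the infinite setting requires more than a transcription of the finite argument is the final passage from finite prefixes to infinite words; the main (mild) obstacle is to confirm that the prefix identities actually stitch together. This holds because $\beta$ is length-preserving and every $\beta(b_i)$ is nonempty (each $\wcroch{s_i}$ has positive length since $s_i \in S$), so $|\beta(b_1 \cdots b_\ell)| = |\ucroch{b_1 \cdots b_\ell}|$ grows unboundedly with $\ell$ and equals the length of the matching prefix of $\tau_\Pb(\ucroch{w})$; thus every finite prefix of $\beta(w)$ agrees with the corresponding prefix of $\tau_\Pb(\ucroch{w})$, and equality of the two \iwords follows.
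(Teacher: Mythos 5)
Your proof is correct and follows exactly the route the paper intends: the paper simply declares the proof of Lemma~\ref{lem:iwwordsfinal} to be identical to that of Lemma~\ref{lem:wfwords:wwordsfinal} and leaves it to the reader, and your argument is precisely that transcription (same definition of $\beta$ on the two kinds of letters, same induction on prefixes via Lemma~\ref{lem:deltadef} and the $\su_k$-partition property). Your final paragraph stitching the prefix identities into an equality of \iwords is a detail the paper omits entirely, and it is handled correctly.
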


The proof of Lemma~\ref{lem:iwwordsfinal} is identical to the one of Lemma~\ref{lem:wfwords:wwordsfinal}. It is left to the reader. Let us use the lemma to construct $H_K$ and finish the proof of Proposition~\ref{prop:ifirstdir}. We have an \ilang $L \in \Cs(\Pb \times A)$ such that $K = \tau_\Pb\inv(L)$. Let us define
\[
  H_K = \beta^{-1}(L).
\]
Since \Cs is closed inverse image, we obtain that $H_K \in \Cs(\awfa)$. We now prove that $H_K$ satisfies~\eqref{eq:wform:goal1} using Lemma~\ref{lem:iwwordsfinal}. Given a well-formed \iword $w \in \awfa^\omega$, we have $w \in H_K$ if and only if $\beta(w) \in L$. The lemma then says that this is equivalent to $\tau_\Pb(\ucroch{w}) \in L$, \emph{i.e.}, to $\ucroch{w} \in K$ by hypothesis on $K$.

\subsection{\texorpdfstring{From \Cs-covering to $\Cs \circ \su$-covering}{From C-covering to (C o SU)-covering}}

We now turn to the direction $(3) \Rightarrow (2)$ in Theorem~\ref{thm:wfwords}. While the proof remains very similar to the one for the corresponding direction in Theorem~\ref{thm:wfwords}, there is a significant technical difference. The argument is based on the following proposition adapted from Proposition~\ref{prop:wform:secdir}. We let $n = 2^{3|S_+|+1}$ for the proof.

\begin{proposition} \label{prop:isecdir}
  There exists a map $\eta: A^\omega \to \awfa^\omega$ satisfying the two following properties:
  \begin{enumerate}
  \item For any \ilang $L \subseteq A^\omega$ recognized by $\alpha$, we have $L = \eta\inv(\wfwa{L})$.
  \item For any \ilang $K \in \Cs(\awfa)$, the \ilang $\eta\inv(K)$ belongs to $(\Cs \circ\su_{n})(A)$.
  \end{enumerate}
\end{proposition}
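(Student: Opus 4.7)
The strategy is to mirror the proof of Proposition~\ref{prop:wform:secdir} from the finite-word setting, building $\eta$ from a canonical factorization of $w \in A^\omega$ at its distinguished positions. I would fix a parameter $k$ (calibrated with $|S_+|$, see below) and invoke the straightforward $\omega$-analogue of Fact~\ref{fct:wform:herecomesdis}, which ensures that every $w \in A^\omega$ has an infinite increasing sequence of distinguished positions $x_0 < x_1 < x_2 < \cdots$. Writing $w_0 = w[0, x_0 - 1]$ and $w_i = w[x_{i-1}, x_i - 1]$ for $i \geq 1$, and letting $e_i \in E(S)$ be the least idempotent such that $\alpha(u_i) \cdot e_i = \alpha(u_i)$ where $u_i$ is the $k$-type of $x_i$, I set
\[
  \eta(w) = (\square, \alpha(w_0), e_0)(e_0, \alpha(w_1), e_1)(e_1, \alpha(w_2), e_2) \cdots
\]
which is well-formed by construction. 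Item~(1) then reduces to the identity $\eval(\eta(w)) = \alpha(w)$ in $S_\omega$, and this follows by telescoping as in the finite case: the absorption property of each $e_i$ collapses the \isemi evaluation to $\alpha(w_0)\alpha(w_1)\alpha(w_2) \cdots = \alpha(w)$.

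For Item~(2), fix $K \in \Cs(\awfa)$. The plan is to exhibit an $\su_n$-partition $\Pb$ of $A^*$ together with a morphism $\beta : (\Pb \times A)^* \to \awfa^*$ such that $\eta(w) = \beta(\tau_\Pb(w))$ for every $w \in A^\omega$; once this holds, $\beta\inv(K) \in \Cs(\Pb \times A)$ by closure of \Cs under inverse image, and $\eta\inv(K) = \tau_\Pb\inv(\beta\inv(K))$ is directly of the required form. I would take \Pb to be the partition of $A^*$ into $\eqsu{n}$-classes (which is indeed an $\su_n$-partition by Lemma~\ref{lem:canoeq}), and define $\beta$ via an analogue of Lemma~\ref{lem:wform:canonic}: for each label $(P,a) \in \Pb \times A$, either no position carrying $(P,a)$ in any $\tau_\Pb(w)$ is distinguished (in which case $\beta((P,a)) = \varepsilon$), or every such position is distinguished and is assigned the same $\awfa$-letter $c_{(P,a)}$ (in which case $\beta((P,a)) = c_{(P,a)}$). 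Notice that no right quotient is needed here: since $\omega$-words have no final letter to treat separately, the argument is slightly more direct than in the finite-word case.

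The main obstacle is to secure the dichotomy above and to pin down the value of $n$ for which it holds. The right component $e_i$ and the local element $\alpha(w_i)$ of the $\awfa$-letter attached to a distinguished position $x$ are determined by the $k$-type of $x$, hence by the $\eqsu{n}$-class of its prefix as soon as $n \geq k$. Recovering the left component $e_{i-1}$ is more delicate: one must first locate the previous distinguished position, which lies within $k$ steps back by Fact~\ref{fct:wform:herecomesdis}, and then read off \emph{its} $k$-type, which can reach up to $2k$ positions back. The additional subtlety specific to the $\omega$-setting is that the chain of idempotents $e_0, e_1, \ldots$ produced by this local procedure must coincide with the canonical one used in the definition of $\eta(w)$ even near the start of $w$; showing this amounts to a Ramsey-style stabilization argument on prefixes, tracking the subsets of $S_+$ realized as $\alpha$-images of bounded factors. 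The worst-case pigeonhole bound for this stabilization is precisely what accounts for the constant $n = 2^{3|S_+|+1}$, and a careful verification along these lines will complete the argument.
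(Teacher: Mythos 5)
Your construction of $\eta$ and your outline of Item~(2) follow the paper's route (partition into $\eqsu{n}$-classes, a morphism $\beta$ with $\beta(\tau_\Pb(w))=\eta(w)$, no right quotient needed), but there is a genuine gap in Item~(1), and it sits exactly where the $\omega$-setting departs from the finite one. The absorption property $\alpha(w_0\cdots w_i)\cdot e_i=\alpha(w_0\cdots w_i)$ shows that all \emph{finite} prefix products of $\eval(\eta(w))=\alpha(w_0)e_0\alpha(w_1)e_1\cdots$ and of $\alpha(w)=\alpha(w_0)\alpha(w_1)\cdots$ coincide, but in an \isemi this does not imply that the two \emph{infinite} products are equal: there is no induction over infinitely many positions, and the infinite product is not determined by the values of finite prefixes. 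So ``telescoping as in the finite case'' does not close Item~(1). The paper instead applies a Ramsey argument to both sequences simultaneously, reducing them to $sf^\omega$ and $tg^\omega$ for suitable $s,t\in S$ and idempotents $f,g$, and then proves $s=t$, $fg=f$ and $gf=g$ (Lemma~\ref{lem:eqinf}) to conclude $sf^\omega=tg^\omega$.

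Moreover, carrying out that argument forces a \emph{stronger} notion of distinguished position than the one you adopt: $x$ must be distinguished because its \ktype has a nonempty \emph{suffix} $v$ with $\alpha(v)\in E(S)$, not merely because some idempotent absorbs the \ktype on the right. The equality $gf=g$ needs, in its delicate case, an element $r\in S$ with $e_j=r\cdot\alpha(w_{i_1+1}\cdots w_j)$, which only exists because $e_j$ is the image of an actual word suffix. This stronger definition is also the true source of the constant: $k=2^{3|S_+|}$ is chosen so that, by Simon's factorization forest theorem, every factor of length $k$ contains an infix mapping to an idempotent, so distinguished positions in this stronger sense still occur in every window of length $k$ (Fact~\ref{fct:freqi}). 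Your attribution of the bound to a stabilization issue in Item~(2) is therefore off target; Item~(2) goes through essentially as in the finite case once $\eta$ is correctly defined.
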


As before, one may show the direction $(3) \Rightarrow (2)$ in Theorem~\ref{thm:wfiwords} from this proposition using an argument which is identical to the one used for proving the same direction in Theorem~\ref{thm:wfwords} from Proposition~\ref{prop:wform:secdir}. Therefore, we leave it to the reader. We concentrate on proving Proposition~\ref{prop:isecdir}.

\medskip
\noindent
{\bf Proof of Proposition~\ref{prop:isecdir}: definition of $\eta$.} We begin by defining $\eta: A^\omega \to \awfa^\omega$. While similar, the definition is slightly different from the one we used when proving Proposition~\ref{prop:wform:secdir}.

We first generalize the notion of \emph{\ktype} to \iwords. Given an \iword $w$, a position $x$ in $w$ and a natural number $k\in\nat$, a \emph{\ktype of $x$} is the following \fword of length at most $k$:
\begin{itemize}
\item If $x \leq k$, then the \ktype of $x$ is the prefix $w[1,x-1]$ of length $x-1$.
\item If $x > k$, then the \ktype of $x$ is the infix $w[x-k,x-1]$ of length $k$.
\end{itemize}

For the construction of $\eta$, we fix $k = 2^{3|S_+|}$, so that $n = 2k$. Moreover, we choose an arbitrary order on the set of idempotents $E(S)$. We now generalize the notion of \emph{distinguished} position to \iwords. The definition differs from the one we used for finite words. This change is needed to prove the first item in Proposition~\ref{prop:isecdir} (on the other hand, it is harmless for the second item: its proof is identical to the one for finite \fwords). It is also the reason for using a larger constant $k$ in this setting.

Consider an \iword $w$ and a position $x$ in $w$. Moreover, let $u$ be the \ktype of $x$. We say that $x$ is \emph{distinguished} when there exists a nonempty suffix $v \in A^+$ of $u$ such that $\alpha(v) \in E(S)$.

\begin{remark}
  As for finite words, when $x$ is distinguished, we have an idempotent $e \in E(S)$ such that $\alpha(u) \cdot e = \alpha(u)$ (namely, $e=\alpha(v)$ with $v$ defined as above). However, in the case of \iwords, we have a stronger	property: $u$ has a suffix whose image under $\alpha$ is an idempotent. We need this to prove the first item in Proposition~\ref{prop:isecdir}.
\end{remark}

We now generalize Fact~\ref{fct:wform:herecomesdis}: distinguished positions occur frequently in \iwords.

\begin{fct} \label{fct:freqi}
  Let $w \in A^\omega$, let $k=2^{3|S_+|}$
  and let $y \geq k-1$ be some position of $w$. Then, there exists a distinguished position $x$ in $w$ such that $y - (k-1) \leq x \leq y$.
\end{fct}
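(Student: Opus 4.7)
The plan is to reduce the claim to a classical semigroup-theoretic lemma and then apply it to a single well-chosen window inside $w$.

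\emph{Key lemma.} For any morphism $\alpha\colon A^+ \to S$ into a finite semigroup $S$, every word $v \in A^+$ of length at least $2^{3|S|}$ admits a nonempty factor $v'$ with $\alpha(v') \in E(S)$.

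Granting this lemma, the remainder of the proof is routine. Set $k = 2^{3|S_+|}$ and apply the lemma to the length-$k$ window $v := w[y-k, y-1]$ (well-defined as soon as $y \geq k$; the boundary case $y = k-1$ is treated identically using the prefix form of the $k$-type). This yields positions $y-k \leq p \leq q \leq y-1$ with $\alpha(w[p, q]) \in E(S)$. Setting $x := q + 1$, one checks that $y - (k-1) \leq x \leq y$ and $p \geq y - k \geq x - k$, so $w[p, q] = w[p, x-1]$ is a nonempty suffix of the $k$-type $w[x-k, x-1]$ of $x$. Hence $x$ is distinguished, as required.

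The main obstacle lies in proving the key lemma with the announced exponential bound $2^{3|S|}$. A Ramsey-style argument supplies the underlying idea: color each pair $0 \leq i < j \leq |v|$ by $\alpha(v[i+1, j]) \in S$; then a monochromatic triangle $i < j < \ell$ of common color $t$ forces $t \cdot t = \alpha(v[i+1, j]) \cdot \alpha(v[j+1, \ell]) = \alpha(v[i+1, \ell]) = t$ by associativity, so $v[i+1, j]$ is an idempotent-image factor. The difficulty is that the generic multicolor Ramsey bound is only super-exponential in $|S|$, so obtaining the stated bound $2^{3|S|}$ requires a finer iterated-pigeonhole argument that exploits the semigroup structure of the coloring together with the fact that every element of $S$ has an idempotent power of index at most $|S|$. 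This quantitative step is the heart of the proof; once it is in place, the geometric reduction above is immediate.
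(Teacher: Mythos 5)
Your geometric reduction is sound and is essentially the one the paper uses (the paper is terser and applies the lemma to a length-$k$ window ending at $y$; your placement of the window so that the idempotent-image factor ends at $x-1$ with $x\le y$ is the more careful bookkeeping). The statement you isolate as the key lemma is also precisely the statement the paper relies on. So the architecture of your argument matches the paper's.

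The gap is the key lemma itself, which you leave unproved at what you correctly call the heart of the proof. Your instinct is right that colouring pairs $i<j$ by $\alpha(v[i+1,j])$ and invoking multicolour Ramsey for monochromatic triangles only gives a bound that is super-exponential in $|S|$. But the ``finer iterated-pigeonhole argument exploiting the semigroup structure'' you are hoping for is not something to improvise here: it is Simon's factorization forest theorem \cite{simonfacto}, in the sharpened form guaranteeing forests of height at most $3|S|$ \cite{cfacto,kfacto}, and this is exactly what the paper cites. Granting that theorem, the lemma is immediate: take a factorization forest of $v$ with respect to $\alpha$ of height at most $3|S_+|$; if $|v|>2^{3|S_+|}$ the forest cannot be binary, so some internal node has at least three children; by definition such a node is an idempotent node, meaning all of its children's yields are mapped by $\alpha$ to one common idempotent $e\in E(S)$, and any single such yield is the required nonempty infix. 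Without this theorem (or an equivalent result of the same quantitative strength), the bound $2^{3|S_+|}$ is out of reach of elementary Ramsey or pigeonhole reasoning, so as written your proposal has a genuine hole. A minor further caveat: your claim that the boundary case $y=k-1$ is ``treated identically'' via the prefix form of the $k$-type does not go through, since the available prefix then has length below $k$ and the lemma does not apply to it; this is harmless for the intended use of the fact (producing infinitely many distinguished positions), but it is not the identical argument.
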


\begin{proof}
  It is known that every word $v \in A^+$ of length greater than $k = 2^{3|S_+|}$ contains an infix whose image under $\alpha$ is an idempotent (this is an immediate consequence of Simon's factorization forest theorem~\cite{simonfacto,cfacto,kfacto}). Hence, since the infix, $w[y-(k-1),y]$ has length $k$, the result follows.
\end{proof}

We may now define the map $\eta: A^\omega \to \awfa^\omega$. Let $w \in A^\omega$. It is immediate from Fact~\ref{fct:freqi} that $w$ contains infinitely many distinguished positions. Let $x_0 < x_1 < x_2 < \cdots$ be these distinguished positions. For all $i \geq 0$, we let $u_i$ be the \ktype of $x_i$ and $e_i \in E(S)$ be the smallest idempotent (according to the arbitrary order that we fixed on idempotents) such that $u_i$ has a nonempty suffix whose image under $\alpha$ is $e_i$. We define $\croch{w} \in \awfa^\omega$ as the \iword:
\[
  \croch{w} = (\alpha(w_0),e_0)\cdot (e_0,\alpha(w_1),e_1) \cdot
  (e_{1},\alpha(w_{2}),e_2)\cdot (e_{2},\alpha(w_{3}),e_3) \cdots \in \awfa^\omega
\]
\noindent
where $w_0 = w[0,x_0-1]$ and for all $i \geq 1$, $w_i = w[x_{i-1},x_i - 1]$. Note that $\croch{w}$ is well-formed by definition. It remains to show that $\eta$ satisfies the two items in Proposition~\ref{prop:isecdir}.

\medskip
\noindent
{\bf Proof of Proposition~\ref{prop:isecdir}: first item.} This is where the technical differences with the proof of Proposition~\ref{prop:wform:secdir} occur.  Consider a \ilang $L \subseteq A^\omega$ recognized by $\alpha$ and an \iword $w \in A^\omega$. We have to show that $w \in L$ if and only if $\croch{w} \in \wfwa{L}$.

Since $\alpha$ recognizes $L$, we have $w \in L$ if and only if $\alpha(w) \in \alpha(L)$. Moreover, since $\croch{w}$ is well-formed, by definition, $\croch{w} \in \wfwa{L}$ if and only if $\eval(\croch{w}) \in \alpha(L)$. Hence, it suffices to prove that $\eval(\croch{w}) = \alpha(w)$. This requires more work than for \fwords. By definition, we know that $w$ may be decomposed as $w = w_0w_1w_2 \cdots$ and
\[
  \croch{w} = (\alpha(w_0),e_0)\cdot (e_0,\alpha(w_1),e_1) \cdot
  (e_{1},\alpha(w_{2}),e_2)\cdot (e_{2},\alpha(w_{3}),e_3) \cdots
\]
such that for all $i \geq 0$, $w_0 \cdots w_i$ has a suffix of length at most $k$ whose image under $\alpha$ is $e_i$. In particular, it is immediate by definition of \eval that,
\[
  \eval(\croch{w})= \alpha(w_0) e_0 \alpha(w_1) e_1 \alpha(w_2) e_2 \alpha(w_3) e_3 \cdots
\]
Therefore, we need to show that,
\[
  \alpha(w_0)\alpha(w_1)\alpha(w_2)\alpha(w_3) \cdots = \alpha(w_0) e_0 \alpha(w_1) e_1 \alpha(w_2) e_2 \alpha(w_3) e_3 \cdots
\]
Using a standard Ramsey argument, we obtain an infinite sequence of indices $i_1,i_2,i_3,\dots$ together with $s,t \in S$ and $f,g \in E(S)$ such that,
\begin{enumerate}
\item $sf = s$ and $tg=t$.
\item $\alpha(w_1) \cdots \alpha(w_{i_1}) = s$ and $\alpha(w_1)e_1
  \cdots \alpha(w_{i_1})e_{i_1} = t$.
\item For all $j > 1$, $\alpha(w_{i_{j-1}+1}) \cdots \alpha(w_{i_j})
  = f$ and $\alpha(w_{i_{j-1}+1})e_{i_{j-1}+1} \cdots
  \alpha(w_{i_j})e_{i_j} = g$.
\end{enumerate}
Therefore, $\alpha(w_0)\alpha(w_1)\alpha(w_2) \cdots = sf^\omega$ and $\alpha(w_0) e_0 \alpha(w_1) e_1 \alpha(w_2) e_2 \cdots = tg^\omega$. Hence, it suffices to prove that $sf^\omega = tg^\omega$. This is a consequence of the following lemma.

\begin{lemma} \label{lem:eqinf}
  We have $s=t$, $fg = f$ and $gf = g$.
\end{lemma}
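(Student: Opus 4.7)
Proof plan: The argument will rest on the defining property of the idempotents $e_m$ introduced in the construction of $\eta$: each $e_m$ equals $\alpha(v_m)$ for some nonempty suffix $v_m$ of $w_0 w_1 \cdots w_m$ of length at most $k$. Writing $X_m := \alpha(w_0 w_1 \cdots w_m) \in S_+$, this immediately yields the absorption identity $X_m \cdot e_m = X_m$ for every $m \geq 0$: indeed, if $w_0 \cdots w_m = r_m v_m$, then $X_m = \alpha(r_m) \cdot e_m$, and idempotency of $e_m$ gives $X_m \cdot e_m = \alpha(r_m) \cdot e_m^2 = X_m$.

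The equality $s = t$ will follow from a short induction on $M$ establishing that
\[
  \alpha(w_0) \cdot \alpha(w_1)\, e_1 \cdot \alpha(w_2)\, e_2 \cdots \alpha(w_M)\, e_M \;=\; X_M.
\]
In the inductive step, the new factor $\alpha(w_{M+1})$ pushes the accumulated prefix from $X_M$ up to $X_{M+1}$, after which the inserted $e_{M+1}$ is absorbed via the identity above. Applied at $M = i_1$, this identifies both $s$ and $t$ with $X_{i_1}$, so $s = t$.

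For $fg = f$ and $gf = g$ the plan is to first strengthen the Ramsey selection so that, in addition to the product and the interlaced product being the constants $f$ and $g$ on every block, each Ramsey block also spans at least $k$ positions of $w$. This is possible because, by Fact~\ref{fct:freqi}, every $w_i$ is nonempty of length at most $k$, so requiring $i_j - i_{j-1} \geq k$ is enough. Under this refinement, as soon as the accumulated left factor in a product represents a genuine prefix of length $\geq k$, any subsequent $e_m$ can be absorbed locally, since the suffix $v_m$ of length $\leq k$ then lies entirely within that accumulated prefix. For $fg = f$ this works smoothly: after multiplying $f$ by the first letter $\alpha(w_{i_j+1})$ of the expanded form of $g$, the accumulated factor already has length at least $k+1$, so every $e$ coming from $g$ is absorbed in turn, collapsing $fg$ into $f \cdot f = f$.

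The main obstacle is the symmetric identity $gf = g$, because the left factor $g$ is itself an interlaced product whose underlying word is not literally a prefix of $w$, so the global absorption $X_m e_m = X_m$ does not apply verbatim on its right. I would resolve this via a further Ramsey refinement ensuring that the terminal idempotent $e_{i_j}$ is constant across $j$, call it $e$; this yields the auxiliary identity $g \cdot e = g$ by idempotency. Combining $g = g \cdot e$ with the absorption applied to the suitably extended prefix $X_{i_{j-1}} \cdot g \cdot f$ shows that inserting an $e_m$ between the letters of $f$ on the right of $g$ does not change the product; iterating this insertion reduces $gf$ to $g \cdot g = g$, which is $g$ by idempotency.
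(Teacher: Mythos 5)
Your handling of $s=t$ and of $fg=f$ is correct and essentially identical to the paper's: the absorption identity $\alpha(w_0\cdots w_m)\,e_m=\alpha(w_0\cdots w_m)$ drives the induction for $s=t$, and for $fg=f$ the paper likewise arranges for the left factor to be the image of a genuine infix of length at least $k$ (it does so by taking the block index large rather than by thinning the Ramsey sequence, but your thinning is legitimate since monochromaticity is preserved under passing to subsequences).

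The argument for $gf=g$, however, has a genuine gap, and it sits exactly at the obstacle you correctly identify. Your fix replaces $g$ by $g\cdot e$ with $e=e_{i_j}=\alpha(v_{i_j})$, so that the right end of $g$ carries the image of the genuine suffix $v_{i_j}$ of $w_0\cdots w_{i_j}$, and then absorbs the idempotents $e_m$ of the expanded $f$ one at a time. But $v_{i_j}$ is only required to be a \emph{nonempty} suffix of the $k$-type of $x_{i_j}$ with idempotent image; it may have length $1$. Hence for the first few $m>i_j$ (those with $|w_{i_j+1}\cdots w_m|<k$), the witness suffix $v_m$, of length up to $k$, need not be contained in $v_{i_j}w_{i_j+1}\cdots w_m$, the absorption step fails, and the insertion argument breaks down precisely where it is needed. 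The other reading of your sentence --- absorbing inside $X_{i_{j-1}}\cdot g\cdot f$ --- only yields $X_{i_{j-1}}\,gf=X_{i_{j-1}}\,g$, which cannot be cancelled to $gf=g$ in a semigroup; and the lemma really needs $gf=g$ as an identity in $S$, since it is used inside the $\omega$-power computation $sf^\omega=s(fg)^\omega=sf(gf)^\omega=\cdots$, where a prefixed equality is useless.

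The paper closes this case by a different device: working in the first block, it takes the \emph{largest} index $j$ with $i_1+1\leq j\leq i_2$ for which the witness $v_j$ is not a suffix of $w_{i_1+1}\cdots w_j$. For that $j$ the word $w_{i_1+1}\cdots w_j$ is itself a suffix of $v_j$, so $e_j=r\cdot\alpha(w_{i_1+1}\cdots w_j)$ for some $r\in S$, and a short computation turns this into the structural fact $g=r'f$ for some $r'\in S$, whence $gf=r'ff=r'f=g$. Some statement of this kind (or a deeper ``unfolding'' of $g$ as $h\cdot\alpha(z)$ with $z$ a suffix of $w_0\cdots w_{i_j}$ of length at least $k$) is indispensable; the single identity $g=ge$ is not strong enough.
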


Before we prove the lemma, we show that $sf^\omega = tg^\omega$ and conclude the argument for the first item in Proposition~\ref{prop:isecdir}. Using the lemma and the fact that
$sf =s$, we obtain, 
\[
  sf^\omega = s(fg)^\omega = sf (gf)^\omega = s(gf)^\omega = tg^\omega
\]
It remains to prove Lemma~\ref{lem:eqinf}. We prove the three equalities separately.

\medskip
\noindent
{\it First Equality: $s=t$.} By hypothesis, we know that for all $i \geq 0$, $w_0 \cdots w_i$ has a suffix of length at most $k$ whose image under $\alpha$ is $e_i$.  It follows that $\alpha(w_0 \cdots w_i) \cdot e_i = \alpha(w_0 \cdots w_i)$.  Thus, it is immediate from a simple induction that indeed,
\[
  s = \alpha(w_1) \cdots \alpha(w_{i_1}) = \alpha(w_1)e_1 \cdots \alpha(w_{i_1})e_{i_1} = t.
\]

\medskip
\noindent
{\it Second Equality: $fg = f$.} Let $j \geq 3$ be a large enough integer so that the word $w_{i_1+1}w_{i_1+2} \cdots w_{i_{j-1}}$ has length at least $k = 2^{3|S_+|}$. By definition and using the fact that $f$ is idempotent, we have,
\[
  \begin{array}{lll}
    \alpha(w_{i_1+1}w_{i_1+2} \cdots w_{i_{j-1}}) & = & f, \\
    \alpha(w_{i_{j-1}+1}) \cdots \alpha(w_{i_j}) & = & f, \\
    \alpha(w_{i_{j-1}+1})e_{i_{j-1}+1} \cdots \alpha(w_{i_j})e_{i_j} &= & g.
  \end{array}
\]
Therefore it suffices to show that,
\[
  \alpha(w_{i_1+1} \cdots w_{i_{j-1}}) \cdot \alpha(w_{i_{j-1}+1}) \cdots \alpha(w_{i_j}) = \alpha(w_{i_1+1} \cdots w_{i_{j-1}}) \cdot  \alpha(w_{i_{j-1}+1})e_{i_{j-1}+1} \cdots \alpha(w_{i_j})e_{i_j}.
\]
By hypothesis, we know that for all $i \geq 0$, $w_0 \cdots w_i$ has a suffix {\bf of length at most $k$} whose image under $\alpha$ is $e_i$. Thus, since $w_{i_1+1}w_{i_1+2} \cdots w_{i_{j-1}}$ has length at least $k$ by hypothesis, it follows that for all $i \in \{i_{j-1}+1,\dots,i_j\}$, we have,
\[
  \alpha(w_{i_1+1} \cdots w_{i}) = \alpha(w_{i_1+1} \cdots w_{i}) \cdot e_i.
\]
Hence, the result follows from a simple induction.

\medskip
\noindent
{\it Third Equality: $gf=g$.} Consider the word $w_{i_{1}+1} \cdots w_{i_2}$, which is mapped to $f$ by $\alpha$. By definition, for all $i_{1}+1 \leq j \leq i_2$, we have a suffix $v_j$ of length at most $k$ of $w_{1} \cdots w_{j}$ such that $\alpha(v_j) = e_j$. We consider two sub-cases.

First assume that for all $i_{1}+1 \leq j \leq i_2$, the \fword $v_j$ is a suffix of $w_{i_{1}+1}\cdots w_{j}$. In that case, it is immediate from as simple induction that we have,
\[
  f = \alpha(w_{i_{1}+1}\cdots w_{i_2}) =
  \alpha(w_{i_{1}+1})e_{i_1+1}\cdots \alpha(w_{i_2})e_{i_2} = g
\]
Hence, we get $gf = g$. Otherwise, we consider the largest index $j$ with $i_{1}+1 \leq j \leq i_2$ such that $v_j$ is not a suffix of $w_{i_{1}+1} \cdots w_{j}$. Observe that,
\begin{enumerate}[label=$\alph*)$]
\item Since $j$ has been chosen to be maximal, we get from a simple induction that,
  \begin{equation}\label{eq:gf=g}
    \alpha(w_{i_{1}+1} \cdots w_{j})\alpha(w_{j+1})e_{j+1} \cdots
    \alpha(w_{i_2})e_{i_2} = \alpha(w_{i_{1}+1}\cdots w_{i_2}) = f.
  \end{equation}
\item\label{item:vj} Since $v_{j}$ is by definition a suffix of $w_{1} \cdots w_{j}$ but not of $w_{i_{1}+1} \cdots w_{j}$, it follows that $w_{i_{1}+1} \cdots w_{j}$ is a suffix of $v_{j}$.
\item Since $\alpha(v_{j}) = e_{j}$, we obtain from \ref{item:vj} some $r \in S$ such that $r \cdot \alpha(w_{i_{1}+1} \cdots w_{j})= e_{j}$.
\end{enumerate}
Multiplying~\eqref{eq:gf=g} by $r$ on the left, we obtain therefore:
\[
  e_{j}\alpha(w_{j+1})e_{j+1} \cdots \alpha(w_{i_2})e_{i_2} = rf.
\]
We may now multiply this equality on the left by $\alpha(w_{i_{1}+1})e_{i_1+1} \cdots e_{j-1}\alpha(w_{j})$, which yields,
\[
  \alpha(w_{i_{1}+1})e_{i_1+1}\cdots \alpha(w_{i_2})e_{i_2} = \alpha(w_{i_{1}+1})e_{i_1+1} \cdots \alpha(w_{j})rf,
\]
that is,
\[
  g = \alpha(w_{i_{1}+1})e_{i_1+1} \cdots \alpha(w_{j})rf.
\]
In other words, we have found some element $r' \in S$ such that $g = r'f$. It follows that $gf = r'ff=r'f=g$, which concludes the proof.

\medskip
\noindent
{\bf Proof of Proposition~\ref{prop:isecdir}: second item.} Given an arbitrary \ilang $K \in \Cs(\awfa)$, we have to prove that $\eta\inv(K)$ belongs to $(\Cs \circ \su_{2k})(A)$ (recall that we fixed $n = 2k$). The proof of this item is essentially a simplified version of the corresponding argument for finite words (it is simpler since \iwords have no right ``border'').

Recall that \eqsu{2k} denotes the canonical equivalence associated to $\su_{2k}$. We let \Pb be the partition of $A^*$ into $\eqsu{2k}$-classes. By definition of $\Cs \circ \su_{2k}$, it suffices to exhibit an \ilang $L \in \Cs(\Pb \times A)$ such that $\eta\inv(K) = \tau_\Pb\inv(L)$. We use the following lemma (which is adapted from Lemma~\ref{lem:canonictwo}). Recall that we may lift a morphism $\beta: (\Pb \times A)^* \to \awfa^*$ as a map $\beta: (\Pb \times A)^\omega \to \awfa^\omega$.

\begin{lemma} \label{lem:icanonictwo}
  There exists a morphism $\beta: (\Pb \times A)^* \to \awfa^*$ such that for any \iword $w \in A^\omega$, we have $\beta(\tau_\Pb(w)) = \eta(w)$.
\end{lemma}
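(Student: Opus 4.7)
The proof adapts the strategy of Lemma~\ref{lem:canonictwo} to the infinite-word setting, with simplifications stemming from the absence of a right boundary. I first formalize the correspondence between distinguished positions of $w \in A^\omega$ and positions of $\croch{w} \in \awfa^\omega$. If $x_0 < x_1 < x_2 < \cdots$ are the distinguished positions of $w$ (infinitely many by Fact~\ref{fct:freqi}) and $w_0, w_1, w_2, \ldots$ are the associated segments, then the $i$-th letter of $\croch{w}$ is $(\square, \alpha(w_0), e_0)$ for $i=0$ and $(e_{i-1}, \alpha(w_i), e_i)$ for $i \geq 1$. Setting $\pcroch{x_i} = i$ gives a natural bijection between distinguished positions of $w$ and positions of $\croch{w}$.

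The main technical step is then the infinite-word analogue of Lemma~\ref{lem:wform:canonic}: for every letter $(P,a) \in \Pb \times A$, either no position with label $(P,a)$ in any $\tau_\Pb(w)$ is distinguished, or there exists a letter $c_{(P,a)} \in \awfa$ such that every such position $x$ is distinguished and $\pcroch{x}$ is labeled $c_{(P,a)}$ in $\croch{w}$. Since the label of $x$ in $\tau_\Pb(w)$ encodes the $\eqsu{2k}$-class of $w[0,x-1]$ together with $a$, this boils down to showing that, whenever $w[0,x-1] \eqsu{2k} w'[0,x'-1]$, $x$ and $x'$ are either both distinguished or both not, and in the distinguished case, $\pcroch{x}$ in $\croch{w}$ and $\pcroch{x'}$ in $\croch{w'}$ carry the same label. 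The first claim follows because the $k$-type of $x$ is a suffix of $w[0,x-1]$ of length at most $k \leq 2k$, so it is preserved by $\eqsu{2k}$-equivalence. For the second, I split exactly as in the finite case: if $x$ is the leftmost distinguished position of $w$, Fact~\ref{fct:freqi} forces $|w[0,x-1]| < k$, so its $\eqsu{2k}$-class is a singleton and $w[0,x-1] = w'[0,x'-1]$, giving the same leftmost label $(\square, \alpha(w_0), e_0)$. Otherwise, letting $y$ be the distinguished position immediately preceding $x$, Fact~\ref{fct:freqi} yields $|w[y,x-1]| \leq k$, and the $k$-type of $y$ has length at most $k$, so all data determining $(e_{i-1}, \alpha(w_i), e_i)$ lies within the last $2k$ characters of $w[0,x-1]$ and is therefore recoverable from its $\eqsu{2k}$-class.

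Once this dichotomy is in hand, I define the morphism $\beta$ by $\beta((P,a)) = \varepsilon$ in the first case and $\beta((P,a)) = c_{(P,a)}$ in the second. The verification that $\beta(\tau_\Pb(w)) = \croch{w}$ is then immediate from the construction: non-distinguished positions of $w$ contribute $\varepsilon$ to $\beta(\tau_\Pb(w))$, while distinguished positions $x_i$ contribute the letter at position $\pcroch{x_i} = i$ of $\croch{w}$, so the two infinite words agree letter by letter. Since $w$ has infinitely many distinguished positions, $\beta(\tau_\Pb(w))$ is itself infinite, as required.

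The main obstacle is the careful bookkeeping around the suffix-of-length-$2k$ window: one must verify that the idempotent $e_{i-1}$ (from the $k$-type of the preceding distinguished position $y$), the image $\alpha(w_i)$ (of the intermediate segment), and the idempotent $e_i$ (from the $k$-type of $x$) can all be recovered from the suffix of $w[0,x-1]$ of length at most $2k$. This is precisely why the stratum $\su_{2k}$, rather than $\su_k$, appears in the partition $\Pb$, and once this window bound is verified, the rest of the argument is direct.
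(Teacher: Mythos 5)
Your proposal is correct and follows exactly the route the paper intends: the paper's own ``proof'' simply states that the argument is identical to that of Lemma~\ref{lem:canonictwo}, substituting Fact~\ref{fct:freqi} for Fact~\ref{fct:wform:herecomesdis}, and leaves the details to the reader, whereas you actually carry out that adaptation --- the bijection $x_i \mapsto \pcroch{x_i}$, the \iword analogue of Lemma~\ref{lem:wform:canonic}, the $2k$-window bookkeeping justifying the use of $\eqsu{2k}$, and the definition of $\beta$ letter by letter. The only point worth double-checking is that in the \iword setting ``distinguished'' and the choice of $e_i$ are defined via a nonempty suffix of the \ktype mapping to an idempotent, but since both notions still depend only on the \ktype, your argument goes through unchanged.
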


The proof of Lemma~\ref{lem:icanonictwo} is identical to the one of Lemma~\ref{lem:canonictwo} (using Fact~\ref{fct:freqi} instead of Fact~\ref{fct:wform:herecomesdis}). We leave it to the reader. It remains to finish the proof of Proposition~\ref{prop:isecdir}.

Let $\beta: (\Pb \times A)^* \to \awfa^*$ be the morphism defined in Lemma~\ref{lem:icanonictwo}. We claim that
\[
  \eta\inv(K) = \tau_\Pb\inv(\beta\inv(K)).
\]
This will conclude the proof, since $\beta\inv(K)$ belongs to $\Cs(\Pb \times A)$ by closure under inverse image. It remains to prove the claim. Let $w \in A^\omega$. By definition of $\beta$ in Lemma~\ref{lem:icanonictwo}, we have $w \in \eta\inv(K)$ if and only if $\beta(\tau_\Pb(w)) \in K$. This equivalent to $\tau_\Pb(w) \in \beta^{-1}(K)$. This exactly says that $w \in (\Pb \times A)^\omega \cap \tau_\Pb\inv(\beta\inv(K))$, as desired.

\section{Conclusion}
\label{sec:conc}
We presented generic reduction theorems for the \su-enrichment operation on classes of languages and \ilangs. Given any such class \Cs satisfying appropriate closure properties, we
reduce covering and separation for $\Cs \circ \su$ to the same problem for \Cs.

These theorems have many applications: for most logical fragments, \su-enrichment is the language theoretic counterpart of a natural logical operation: if \Cs is the class corresponding to some logical fragment, it is often the case that its \su-enrichment $\Cs \circ \su$ corresponds to the stronger fragment obtained by adding the predicates ``$+1$'', ``$min$'', ``$max$'' and ``$\varepsilon$'' to the signature. We showed this in the setting of finite words for the most prominent fragments of first-order logic, namely the two-variable fragment \fodw and the levels \sio{n} and \bso{n} in the quantifier alternation hierarchy. Combined with our reduction theorem and already known results, this shows that covering and separation are decidable for \fodp and the levels \sipe{1}, \bspe{1}, \sipe{2}, \bspe{2} and \sipe{3}. Note that several of these results were unknown, and that others have difficult combinatorial proofs only for membership algorithms (this is the case for \bspe1~\cite{knast83} and for \sipe2~\cite{gssig2}).

An interesting follow-up to our work would be to obtain a similar reduction theorem for another natural operation: \md-enrichment $\Cs \mapsto \Cs \circ \md$. Here, \md stands for the class of modulo languages. A language $L$ belongs to \md if and only if there exists a natural number $d \geq 1$ such that membership of a word $w$ in $L$ depends only $|w| \mod d$. This operation is important as it is the language theoretic counterpart of another natural logical operation. If \Cs corresponds to some logical fragment, then $\Cs \circ \md$  corresponds to the stronger fragment obtained by adding the \emph{modular predicates} to the signature. Essentially, they consist in unary predicates which can be used to test the number of a position modulo some constant.

\bibliographystyle{ACM-Reference-Format}

\end{document}